\documentclass[aps,nopacs,nokeys,superscriptaddress,11pt,twoside,notitlepage]{revtex4-1}

\usepackage{graphicx,epic,eepic,epsfig,amsmath,latexsym,amssymb,verbatim,color}

\usepackage{theorem}
\newtheorem{definition}{Definition}
\newtheorem{proposition}[definition]{Proposition}
\newtheorem{lemma}[definition]{Lemma}

\newtheorem{theorem}[definition]{Theorem}
\newtheorem{corollary}[definition]{Corollary}

\def\squareforqed{\hbox{\rlap{$\sqcap$}$\sqcup$}}
\def\qed{\ifmmode\squareforqed\else{\unskip\nobreak\hfil
\penalty50\hskip1em\null\nobreak\hfil\squareforqed
\parfillskip=0pt\finalhyphendemerits=0\endgraf}\fi}
\def\endenv{\ifmmode\;\else{\unskip\nobreak\hfil
\penalty50\hskip1em\null\nobreak\hfil\;
\parfillskip=0pt\finalhyphendemerits=0\endgraf}\fi}
\newenvironment{proof}{\noindent \textbf{{Proof~} }}{\qed}
\newenvironment{remark}{\noindent \textbf{{Remark~}}}{\qed}

\newenvironment{biography}[1]{\noindent \textbf{{#1}~}}{}

\newenvironment{beweis}[1]{\noindent\textbf{Proof~{#1}~} }{\qed}

\mathchardef\ordinarycolon\mathcode`\:
\mathcode`\:=\string"8000
\def\vcentcolon{\mathrel{\mathop\ordinarycolon}}
\begingroup \catcode`\:=\active
  \lowercase{\endgroup
  \let :\vcentcolon
  }

\newcommand{\nc}{\newcommand}
\nc{\rnc}{\renewcommand}
\nc{\beg}{\begin{equation}}
\nc{\eeq}{{\end{equation}}}
\nc{\beqa}{\begin{eqnarray}}
\nc{\eeqa}{\end{eqnarray}}
\nc{\lbar}[1]{\overline{#1}}
\nc{\bra}[1]{\langle#1|}
\nc{\ket}[1]{|#1\rangle}
\nc{\ketbra}[2]{|#1\rangle\!\langle#2|}
\nc{\braket}[2]{\langle#1|#2\rangle}

\nc{\proj}[1]{| #1\rangle\!\langle #1 |}
\nc{\avg}[1]{\langle#1\rangle}
\nc{\Rank}{\operatorname{Rank}}
\nc{\smfrac}[2]{\mbox{$\frac{#1}{#2}$}}
\nc{\tr}{\operatorname{Tr}}
\nc{\ox}{\otimes}
\nc{\dg}{\dagger}
\nc{\dn}{\downarrow}
\nc{\cA}{{\cal A}}
\nc{\cB}{{\cal B}}
\nc{\cC}{{\cal C}}
\nc{\cD}{{\cal D}}
\nc{\cE}{{\cal E}}
\nc{\cF}{{\cal F}}
\nc{\cG}{{\cal G}}
\nc{\cH}{{\cal H}}
\nc{\cI}{{\cal I}}
\nc{\cJ}{{\cal J}}
\nc{\cK}{{\cal K}}
\nc{\cL}{{\cal L}}
\nc{\cM}{{\cal M}}
\nc{\cN}{{\cal N}}
\nc{\cO}{{\cal O}}
\nc{\cP}{{\cal P}}
\nc{\cQ}{{\cal Q}}
\nc{\cR}{{\cal R}}
\nc{\cS}{{\cal S}}
\nc{\cT}{{\cal T}}
\nc{\cX}{{\cal X}}
\nc{\cZ}{{\cal Z}}
\nc{\csupp}{{\operatorname{csupp}}}
\nc{\qsupp}{{\operatorname{qsupp}}}
\nc{\var}{{\operatorname{var}}}
\nc{\rar}{\rightarrow}
\nc{\lrar}{\longrightarrow}
\nc{\polylog}{{\operatorname{polylog}}}
\nc{\1}{{\openone}}
\nc{\wt}{{\operatorname{wt}}}
\nc{\av}[1]{{\left\langle {#1} \right\rangle}}

\def\a{\alpha}
\def\b{\beta}

\def\d{\delta}

\def\ll{\lambda}

\def\G{\Gamma}

\def\S{\Sigma}
\def\U{\Upsilon}

\def\O{\Omega}

\nc{\RR}{{{\mathbb R}}}
\nc{\CC}{{{\mathbb C}}}
\nc{\FF}{{{\mathbb F}}}
\nc{\NN}{{{\mathbb N}}}
\nc{\ZZ}{{{\mathbb Z}}}
\nc{\PP}{{{\mathbb P}}}
\nc{\QQ}{{{\mathbb Q}}}
\nc{\UU}{{{\mathbb U}}}
\nc{\EE}{{{\mathbb E}}}
\nc{\id}{{\operatorname{id}}}

\nc{\CHSH}{{\operatorname{CHSH}}}

\nc{\be}{\begin{equation}}
\nc{\ee}{{\end{equation}}}
\nc{\bea}{\begin{eqnarray}}
\nc{\eea}{\end{eqnarray}}
\nc{\<}{\langle}
\rnc{\>}{\rangle}
\nc{\Hom}[2]{\mbox{Hom}(\CC^{#1},\CC^{#2})}
\nc{\rU}{\mbox{U}}

\nc{\ob}[1]{#1}

\nc{\SEP}{{\text{SEP}}}
\nc{\NS}{{\text{NS}}}
\nc{\LOCC}{{\text{LOCC}}}
\nc{\PPT}{{\text{PPT}}}
\nc{\EXT}{{\text{EXT}}}
\nc{\Sym}{{\operatorname{Sym}}}

\nc{\ERLO}{{E_{\text{r,LO}}}}
\nc{\ERLOCC}{{E_{\text{r,LOCC}}}}
\nc{\ERPPT}{{E_{\text{r,PPT}}}}
\nc{\ERLOCCinfty}{{E^{\infty}_{\text{r,LOCC}}}}
\nc{\Aram}{{\operatorname{\sf A}}}

\begin{document}

\title{No-Signalling Assisted Zero-Error Capacity of Quantum Channels\protect\\
and an\protect\\ Information Theoretic Interpretation of the Lov\'{a}sz Number}

\author{Runyao Duan}

\email{runyao.duan@uts.edu.au}

\affiliation{Centre for Quantum Computation and Intelligent Systems (QCIS), Faculty of Engineering and Information Technology, University of Technology, Sydney, NSW 2007, Australia}
\affiliation{State Key Laboratory of Intelligent Technology and Systems, Tsinghua National Laboratory for Information Science and Technology, Department of Computer Science and Technology, Tsinghua University, Beijing 100084, China}
\affiliation{UTS-AMSS Joint Research Laboratory for Quantum Computation and Quantum Information Processing, Academy of Mathematics and Systems Science, Chinese Academy of Sciences, Beijing 100190, China}

\author{Andreas Winter}

\email{andreas.winter@uab.cat}

\affiliation{ICREA \&{} F\'{\i}sica Te\`orica: Informaci\'o i Fen\`{o}mens Qu\`antics, Universitat Aut\`onoma de Barcelona, ES-08193 Bellaterra (Barcelona), Spain}
\affiliation{School of Mathematics, University of Bristol, Bristol BS8 1TW, United Kingdom}

\date{30 November 2015}

\begin{abstract}
\vspace{1cm}
We study the one-shot zero-error classical capacity of 
a quantum channel assisted by quantum no-signalling correlations, and the 
reverse problem of exact simulation of a prescribed channel by a noiseless
classical one.
Quantum no-signalling correlations are viewed as two-input and two-output
completely positive and trace preserving maps with linear constraints enforcing
that the device cannot signal.
Both problems lead to simple semidefinite programmes (SDPs) that
depend only on the Kraus operator space of the channel. In particular, we 
show that the zero-error classical simulation cost is precisely 
the conditional min-entropy of the Choi-Jamio\l{}kowski matrix of the given 
channel. The zero-error classical capacity is given by a similar-looking
but different SDP; the asymptotic zero-error classical capacity
is the regularization of this SDP, and in general we do not know of
any simple form.

Interestingly however, for the class of classical-quantum channels, we show that the 
asymptotic capacity is given by a much simpler SDP, 
which coincides with a semidefinite generalization of the fractional packing number 
suggested earlier by Aram Harrow. 
This finally results in an operational interpretation 
of the celebrated Lov\'asz $\vartheta$ function of a graph as the 
zero-error classical capacity of the graph assisted by quantum no-signalling 
correlations, the first information theoretic interpretation of the
Lov\'{a}sz number. 
\end{abstract}

\maketitle

\thispagestyle{empty}

\vfill\pagebreak

\tableofcontents


\setcounter{page}{1}

\section{Introduction} 
\label{sec:intro}
We choose as the starting point of the present 
work the fundamental problem of channel simulation. 
Roughly speaking, this problem asks when a communication channel 
$\cN$ from Alice (A) to Bob (B) can be used to simulate another channel 
$\cM$, also from A to B? \cite{KretschmannWerner:tema} 
This problem has many variants according to the resources available to A and B. 
In particular, the case when A and B can access unlimited amount of shared 
entanglement has been completely solved. Let $C_E(\cN)$ denote the entanglement-assisted classical capacity of $\cN$ \cite{BSST2003}.  
It was shown that, in the asymptotic setting, to optimally simulate $\cM$, 
we need to apply $\cN$ at rate $\frac{C_E(\cM)}{C_E(\cN)}$ \cite{BDHS+2009,QRST-simple}. 
In other words, the entanglement-assisted classical capacity uniquely 
determines the properties of the channel in the simulation process. 
Furthermore, even with stronger resources such as no-signalling correlations or
feedback, this rate cannot be improved -- otherwise we would violate causality,
see \cite{BDHS+2009} for a discussion. 

Here we are interested in the zero-error setting \cite{Shannon1956}. 
It is well known that the zero-error 
communication problem is extremely difficult, already for classical channels. 
Indeed, the single-shot zero-error classical communication capability of a 
classical noisy channel equals the independence number of the
(classical) confusability graph induced by the channel, 
and the latter problem is well-known to be NP-complete. 
The behaviour of quantum channels in zero-error communication is 
even more complex as striking effects such as super-activation are 
possible \cite{DS2008, Duan2009, CCH2009, CS2012}. 
The most general zero-error simulation problem remains wide open. 
To overcome this difficulty, many variants of this problem have been proposed. 
The most natural way is to introduce some additional resources and see 
how this changes the capacity. Indeed, extra resources such as classical 
feedback \cite{Shannon1956}, 
entanglement \cite{Duan2009, CLMW2010, DSW2010}, and even a small (constant) 
amount of forward communication \cite{CLMW2011}, have been introduced. It has been 
shown these extra resources can increase the capacity, and 
generally simplify the problem. In particular, it was shown that even 
for classical communication channel, shared entanglement can strictly increase 
the asymptotic zero-error classical capacity \cite{LMMO+2012}. However, determining
the entanglement-assisted zero-error classical capacity remains an open problem 
even for classical channels. More powerful resources are actually required in 
order to simplify the problem. Cubitt \textit{et al.} \cite{CLMW2011}
introduced classical no-signalling correlations into the zero-error communication 
for classical channels, and showed that the well-known fractional packing 
number of the bipartite graph induced by the channel, gives precisely the 
zero-error classical capacity of the channel. Previously, it was known by 
Shannon that this fractional packing number corresponds to the zero-error 
classical capacity of the channel when assisted with a feedback link from 
the receiver to the sender and when the unassisted zero-error classical 
capacity is not vanishing \cite{Shannon1956}. For general background on
graph theory see \cite{Berge}, and for ``fractional graph theory'' the
delightful book \cite{ScheinermanUllman}.

Another major motivation for this work is to further explore the connection 
between quantum information theory and the so-called ``non-commutative graph 
theory'' suggested in \cite{DSW2010}. Such a connection has been well-known 
in classical information theory. In \cite{Shannon1956}, Shannon realized 
that the zero-error capacity of a classical noisy channel only depends 
on the confusability graph induced by the channel. He further pointed out 
that in the presence of classical feedback, the zero-error capacity is 
completely determined by the bipartite graph of possible input-output 
transitions associated with the channel. 
Thus it makes sense to talk about the zero-error capacity of a (bipartite) 
graph. The notion of non-commutative graph naturally occurs when we use 
quantum channels for zero-error communication. For any quantum channel, 
the non-commutative graph associated with the channel captures 
the zero-error communication properties, thus playing a similar role to 
confusability graph. Most notably, this notion also makes it possible to 
introduce a quantum Lov\'asz $\vartheta$ function to upper bound the 
entanglement-assisted zero-error capacity that has properties quite similar 
to its classical analogue \cite{DSW2010}. Very recently, it was shown 
that the zero-error classical capacity of a quantum channel in the presence 
of quantum feedback only depends on the Kraus operator space of the 
channel \cite{DSW2013}. In other words, the Kraus operator space plays a 
role that is quite similar to the bipartite graph. Now it becomes clear 
that any classical channel induces a bipartite graph as well as a confusability 
graph, while a quantum channel induces a non-commutative bipartite graph 
and a non-commutative graph. The new insight is that we can simply regard 
a non-commutative (bipartite) graph as a high-level abstraction of all 
underlying quantum channels, and study its information-theoretic properties, 
not limited to zero-error setting. This leads us to a very general viewpoint: 
graphs as communication channels. For instance, we can define the 
entanglement-assisted classical capacity of a non-commutative bipartite 
graph as the minimum of the entanglement-assisted classical capacity of 
quantum channels that induce the given Kraus operator space. It was shown 
that this quantity enjoys a number of interesting properties including 
additivity under tensor product  and an operational interpretation as a 
sort of entanglement-assisted conclusive capacity of the bipartite 
graph \cite{DSW2013}. It remains a great challenge to find tractable forms 
of various capacities for non-commutative (bipartite) graphs. 

In this paper we consider a more general class of quantum no-signalling 
correlations described by two-input and two-output quantum channels with 
the no-signalling constraints. This kind of correlations naturally arises 
in the study of the relativistic causality of quantum 
operations \cite{BGNP2001,ESW2001,PHHH2006}; see also the more
recent \cite{OCB2012}.
Distinguishability of these correlations from an information theoretic 
viewpoint has also been studied \cite{Chiribella2012}. We provide a 
number of new properties of these correlations, and establish several 
structural theorems of these correlations. Then we generalize the approach 
of \cite{CLMW2011} to study the zero-error classical capacity of a 
noisy quantum channel assisted by quantum no-signalling correlations, 
and the reverse problem of perfect simulation. 
We show that both problems can be completely solved in the one-shot scenario, 
revealing some nice structure:
\begin{enumerate}
  \item The answers are given by semidefinite programmes (SDPs, cf.~\cite{SDP});
  
  \item At the same time they generalize the results of Cubitt \emph{et al.} \cite{CLMW2011}; 

  \item For the simulation, the question is really how to form a constant 
  channel by a convex combination of the one we want to simulate and an
  arbitrary other quantum channel, and the number of bits needed is just 
  $-\log p$, where $p$ is the probability weight of the target channel 
  in the convex combination (throughout this paper, $\log$ denotes the
  binary logarithm); 
  
  \item For assisted communication, there is an analogous problem of 
  convex-combing a certain channel from B to A which has some kind of 
  orthogonality relation with the given channel from A to B, with 
  another one to form a constant channel. If the target channel has 
  weight $p$, then the number of bits sent is again $-\log p$. 
\end{enumerate}

Most interestingly, the solution to the communication problem only 
depends on the Kraus operator space of the channel, not directly on
the channel itself. For the simulation problem, the solution is given 
by the conditional min-entropy \cite{KRS2009,Tomamichel-PhD} 
of the channel's Choi-Jamio\l{}kowski matrix, 
and is actually additive, thus also gives the asymptotic cost of 
simulating the channel. If we are interested in simulating the cheapest 
channel contained in the Kraus operator space, we obtain an SDP in terms 
of the projection of the Choi-Jamio\l{}kowski matrix. Both the
capacity and the simulation SDPs are in general not known to 
be multiplicative under the tensor product of channels, 
thus we do not know the optimal asymptotic simulation cost.

We then focus on the asymptotic zero-error classical capacity and simulation 
cost assisted with quantum no-signalling correlations. This requires 
determining the asymptotic behaviour of a sequence of SDPs. In general 
the one-shot solution does not give the asymptotic result, since the 
corresponding SDP is not multiplicative with respect to the tensor product 
of channels.
A simple formula for the asymptotic channel capacity remains unknown. 
However, for the special cases of classical-quantum (cq) channels,
we find that the zero-error capacity is given by the solution of a 
rather simple SDP suggested earlier by Harrow as a natural generalization 
of the classical fractional packing number \cite{Harrow2010}, which we call
\emph{semidefinite packing number}. This result has two interesting 
corollaries. First, it implies that the zero-error classical capacity 
of cq-channels assisted by quantum no-signalling correlations is additive. 
Second, and more importantly, we show that for a classical graph $G$, 
the celebrated Lov\'asz number $\vartheta(G)$~\cite{Lovasz1979}, is actually 
the minimum zero-error classical capacity of any 
cq-channel that has the given graph as its confusability graph. In other 
words, Lov\'asz' $\vartheta$ function is the zero-error classical capacity 
of a graph assisted by quantum no-signalling correlations. To the best 
of our knowledge, this is the first information theoretic 
operational interpretation  of the Lov\'asz number since its introduction
in 1979. Previously, it was known  that it is an upper bound on the 
entanglement-assisted zero-error classical capacity of a 
graph \cite{Beigi2010,DSW2010}. 
It remains unknown whether the use of quantum no-signalling correlations 
could be replaced by shared entanglement. The asymptotic simulation 
cost for Kraus operator spaces associated with cq-channels is rather 
simpler, and is actually given by the one-shot simulation cost. 

\medskip
Before we proceed to the technical details, it may be helpful to present an 
overview of our main results. Let $\cN$ be a quantum channel from $\cL(A')$ 
to $\cL(B)$, with a Kraus operator sum representation 
$\cN(\rho)=\sum_k E_k\rho E_k^\dag,$ where $\sum_k E_k^\dag E_k=\1_{A'}$. 
Let $K=K(\cN)=\operatorname{span}\{E_k\}$ denote the Kraus operator space of $\cN$. 
The Choi-Jamio\l{}kowski matrix of $\cN$ is given by 
$J_{AB}=\sum_{ij} \ketbra{i}{j}_{A}\ox \cN(\ketbra{i}{j}_{A'})=(\id_A\ox \cN)(\ketbra{\Phi_{AA'}}{\Phi_{AA'}})$, where
$A$ and $A'$ are isomorphic Hilbert spaces, $\{\ket{i}\}$ 
($\{\ket{j}\}$) is orthonormal basis over $A$ ($A'$, resp.), and $\ket{\Phi_{AA'}}=\sum_{k}\ket{k_{A}}\ket{k_{A'}}$ is 
the unnormalized maximally entangled state over $A\ox A'$. Recall that $\tr_B J_{AB}=\1_A$.
Let $P_{AB}$ denote the projection onto the support of $J_{AB}$, which
is the subspace $(\1_A\ox K)\ket{\Phi_{AA'}}$, sometimes called the Choi-Jamio\l{}kowski support of $K$ (or of the channel). 

It is worth noting that many results below can be defined on any matrix subspace 
$K$, not just those corresponding to a quantum channel $\cN$. However, we have 
to make sure that $K$ is actually corresponding to some quantum channel $\cN$. 
This puts an additional constraint on $K$. 
More precisely, suppose $K=\operatorname{span}\{E_k\}$ for some orthonormal basis 
$\{E_k\}$ such that $\tr E_j^\dag E_k=\delta_{jk}$. Then we should be able to 
find a quantum channel $\cN'=\sum_j F_j\cdot F_j^\dag$ such that  
$F_{j}=\sum_k a_{jk}E_k$, for some invertible matrix $a=[a_{jk}]$ and 
$\sum_{j}F_j^\dagger F_j=\1_{A'}$. This is equivalent to 
\[
  \sum_{kl} r_{kl}E_{k}^\dagger E_{l} = \1_{A'}, \text{ where } r_{kl} = \sum_{j}a_{jk}^*a_{jl}.
\]
If such a positive definite matrix $r=[r_{kl}]$ cannot be found, $K$ 
will not correspond to a quantum channel
(nor any $K^{\ox n}$ for $n\ge1$).  
In this case one might still be able to find $K'\subset K$ 
that is a Kraus operator space for some quantum channel $\cN'$. For 
instance, $K={\rm span}\{\1, \ketbra{0}{1}\}$, and $K'={\rm span}\{\1\}$. 
(Note however that $K$ is the limit of the Kraus subspaces of genuine
quantum channels, namely amplitude damping channels with damping
parameter going to $0$; hence it might still be considered as
admissible Kraus space of an infinitesimally amplitude damping channel.)
We will, therefore, always assume that $K$ corresponds to some quantum channel 
$\cN$ such that $K=K(\cN)$. From now on, any such Kraus operator space $K$ 
will be alternatively called ``non-commutative bipartite graph'' -- in fact,
below we shall argue why it is a natural generalization of bipartite graphs. 

\begin{theorem}
  \label{capacity-1-shot}
  The one-shot zero-error classical capability, (quantified as the largest
  number of messages), of 
  $\cN$ assisted by quantum no-signalling correlations depends only on
  the non-commutative graph $K$, and is given by the 
  integer part of the following SDP:
  \[\begin{split}
    \U(\cN) & =\U(K) \\
            & =\max \tr S_A\ \text{ s.t. }\ 0\leq E_{AB}\leq S_A\ox \1_B, 
                                         \tr_A E_{AB}=\1_B, \tr P_{AB}(S_A\ox \1_B-E_{AB})=0,
  \end{split}\]
  where $P_{AB}$ denotes the projection onto the the subspace $(\1\ox K)\ket{\Phi}$. \\
 Hence we are motivated to call $\U(K)$ the 
  \emph{no-signalling assisted independence number of $K$.}
\end{theorem}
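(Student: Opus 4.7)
The plan is to start from the operational definition: $\U(\cN)$ should equal the maximum $M$ such that Alice can deliver one of $M$ equiprobable messages to Bob with zero error using one use of $\cN$ together with an arbitrary quantum no-signalling correlation. I would parametrize a protocol as a bipartite CPTP map $\Pi$ with Alice's registers $A_0 A_1$ ($A_0$ the classical message input, $A_1 \cong A'$ the system fed into $\cN$) and Bob's registers $B B_1$ ($B$ the channel output, $B_1$ Bob's decoded message). The two no-signalling conditions on $\Pi$, together with the zero-error condition on the composed map $(\cN \circ \Pi)(|m\rangle\langle m|_{A_0}) = |m\rangle\langle m|_{B_1}$, are all linear in the Choi operator $\Omega_{A_0 A_1 B B_1}$ of $\Pi$, and $\cN$ itself enters only through its Choi-Jamio\l{}kowski matrix $J_{AB}$.

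Next I would cast the problem as a semidefinite programme in $\Omega$: $\Omega \geq 0$ for complete positivity, a marginal equality $\tr_{A_1 B_1}\Omega = \1_{A_0 B}$ for trace preservation, two partial-trace equalities for the no-signalling conditions, and a link-product identity of the form $\tr_{A_1 B}\!\left[J_{A_1 B}^{T_{A_1}}\Omega\right] = \sum_{m,m'} |m\rangle\langle m'|_{A_0}\otimes |m\rangle\langle m'|_{B_1}$ for zero error. The key simplification is symmetrization: the objective and every constraint are invariant under simultaneous permutations of the computational bases of $A_0$ and $B_1$, so we may twirl $\Omega$ over the symmetric group $S_M$ without loss. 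After twirling and contracting out the message registers, the relevant data collapses to two operators on $A\otimes B$ (with $A\cong A'$), which I would identify as $S_A$ (essentially $M$ times Alice's average encoded state, carrying the information of the encoding marginal) and $E_{AB}$ (the joint encoder-plus-decoder operator). The CPTP and no-signalling constraints then descend cleanly to $0 \leq E_{AB} \leq S_A \otimes \1_B$ and $\tr_A E_{AB} = \1_B$, while $\tr S_A$ becomes $M$.

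The step I expect to be the main subtlety is translating the zero-error condition into the single trace identity $\tr P_{AB}(S_A \otimes \1_B - E_{AB}) = 0$. After symmetrization, zero-error reduces to $\tr [J_{AB}(S_A \otimes \1_B - E_{AB})] = 0$; since the no-signalling and positivity constraints force $S_A \otimes \1_B - E_{AB} \geq 0$, and $J_{AB} \geq 0$ has support exactly $P_{AB}$, this trace equality is equivalent to $P_{AB}(S_A \otimes \1_B - E_{AB})P_{AB} = 0$, i.e.\ to the statement's constraint $\tr P_{AB}(S_A \otimes \1_B - E_{AB}) = 0$. This is the precise mechanism by which $\U(\cN)$ depends on $\cN$ only through the projector $P_{AB}$, and hence only through the non-commutative graph $K = K(\cN)$.

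Finally I would verify both directions of the SDP characterization. For the converse, any zero-error protocol for $M$ messages, after twirling, yields a feasible pair $(S_A, E_{AB})$ with $\tr S_A = M$, so the SDP optimum is at least $M$. For achievability, given any feasible pair $(S_A, E_{AB})$ one reverse-engineers the symmetric $\Omega$ from the parametrization above, checks by direct computation that it defines a legitimate no-signalling CPTP map, and, using $P_{AB}(S_A\ox\1_B - E_{AB})=0$, verifies that the induced classical channel from $A_0$ to $B_1$ is noiseless on $\lfloor \tr S_A \rfloor$ messages. Combining the two gives $\U(\cN)= \max \tr S_A$ as claimed, with the zero-error message count equal to its integer part.
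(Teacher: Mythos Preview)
Your proposal is correct and follows essentially the same route as the paper: parametrize the protocol by the Choi operator of $\Pi$, dephase and twirl the classical message registers over $S_M$, reduce to the pair $(S_A,E_{AB})$ (the paper writes the twirled $\Omega$ as $\frac{1}{M}D\otimes E_{AB}+\bigl(1-\frac{1}{M}\bigr)\tilde D\otimes F_{AB}$, obtains the no-signalling constraints via Lemma~\ref{rank2-ns}, and then eliminates $F_{AB}$ through $S_A=M\sigma_A$), and finally replace $J_{AB}$ by its support projection $P_{AB}$ using that $S_A\otimes\1_B-E_{AB}\geq 0$. One minor slip: in your link-product identity the right-hand side should be the Choi matrix of the \emph{classical} identity $\sum_m \ket{mm}\!\bra{mm}$, not $\sum_{m,m'}\ket{m}\!\bra{m'}\otimes\ket{m}\!\bra{m'}$, though this does not affect the argument once the message registers are dephased.
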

The proof of this theorem will be given in Section \ref{subsec:1-shot-capacity}, 
where we also explore other properties of the above SDP.
For another direction of investigation, looking at the specific
channel $\cN$ with Choi-Jamio\l{}kowski matrix $J$ and trying to minimize 
the error probability for given number of messages, we refer the reader 
to the recent and highly relevant work of Leung and Matthews~\cite{LeungMatthews}, 
where exactly this is done in an environment with free no-signalling
resources subject to other semidefinite constraints.

It is evident from this theorem that the one-shot zero-error classical 
capacity of $\cN$ only depends on the Kraus operator space $K$. That is, any 
two quantum channels $\cN$ and $\cM$ will have the same capacity if they 
have the same Kraus operator space. For this reason, we usually use $\U(K)$ 
to denote $\U(\cN)$. Furthermore, we can talk about the capacity of the 
Kraus operator space $K$ directly without referring to the underling quantum 
channel. Notice that a classical channel $N=(X, p(y|x), Y)$ naturally induces 
a bipartite graph $\G(N)=(X, E, Y)$, where the input and output alphabets
$X$ and $Y$ are the two sets of vertices, and 
$E\subset X\times Y$ is the set of edges such that $(x,y)\in E$ 
if and only if $p(y|x)>0$. We shall also use the notation $\Gamma(y|x)=1$
if $(x,y)\in E$, and $\Gamma(y|x)=0$ otherwise. In this case, we have
$$K(N)=K(\G) = \operatorname{span} \{ \ketbra{y}{x} : p(y|x)>0 \},$$
and our notion $K(\cN)$ generalizes this to arbitrary quantum channels.

Similarly, the simulation cost is given as follows. 
\begin{theorem} 
  \label{review-simulation-1-shot}
  The one-shot zero-error classical simulation cost (quantified as the minimum
  number of messages) of a quantum channel $\cN$ under quantum no-signalling assistance is given by 
  $\lceil 2^{-H_{\min}(A|B)_{J}}\rceil$. 
  Here, $J_{AB}=(\id_A\ox \cN)\Phi_{AA'}$ is the Choi-Jamio\l{}kowski  matrix of $\cN$, and  
$H_{\min}(A|B)_{J}$ is the conditional min-entropy 
  defined as follows \cite{KRS2009,Tomamichel-PhD}:
  \[
    2^{-H_{\min}(A|B)_{J}} = \S(\cN)= \min \tr T_B,\ {\rm s.t. }\ J_{AB}\leq \1_A\ox T_B.
  \]
\end{theorem}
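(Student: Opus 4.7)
The plan is to reduce the one-shot no-signalling assisted simulation of $\cN$ to a semidefinite feasibility condition, and then observe that the resulting optimisation is exactly the SDP defining $2^{-H_{\min}(A|B)_J}$.

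First I would operationalise the simulation. A one-shot no-signalling assisted simulation of $\cN:\cL(A')\to\cL(B)$ using $m$ classical messages consists of a bipartite no-signalling channel together with a noiseless $m$-symbol classical channel from Alice to Bob, whose composite $A'\to B$ map equals $\cN$ exactly. Using the structural characterisation of quantum no-signalling operations established earlier in the paper, I would argue that such a simulation exists if and only if there are a state $\sigma_B$ and a quantum channel $\cM:\cL(A')\to\cL(B)$ satisfying the constant-channel identity
\[
  \tfrac{1}{m}\cN + \bigl(1-\tfrac{1}{m}\bigr)\cM = \cK_\sigma,
\]
where $\cK_\sigma(\rho):=\sigma_B$ is the constant channel with output $\sigma_B$. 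The converse direction is essentially immediate: the $A'\not\to B$ no-signalling constraint forces Bob's marginal output from the box to be input-independent, yielding $\sigma_B$, and the "success branch" of weight $1/m$ produces the decomposition once the residual $\cM$ is extracted. The achievability direction takes any such decomposition and packages it into a no-signalling box whose classical label register on Alice's side, of size $m$, is correlated with Bob's quantum output register so that Bob, upon receiving Alice's label via the noiseless channel, recovers $\cN(\rho)$ exactly.

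Second, I would translate the decomposition into the SDP. Via Choi-Jamio\l{}kowski, the identity $\tfrac{1}{m}\cN + (1-\tfrac{1}{m})\cM = \cK_\sigma$ with $\cM$ a valid quantum channel is equivalent to
\[
  J_\cN \le \1_A \ox T_B, \quad T_B := m\sigma_B \ge 0, \quad \tr T_B = m,
\]
since $\tr_B J_\cM = \1_A$ follows automatically from $\tr_B J_\cN = \1_A$ and $\tr\sigma_B = 1$. Minimising $\tr T_B$ over this semidefinite programme gives $\S(\cN)$, so the minimum integer number of messages is $\lceil\S(\cN)\rceil$. By the standard SDP characterisation of the conditional min-entropy \cite{KRS2009,Tomamichel-PhD}, $\S(\cN) = 2^{-H_{\min}(A|B)_J}$, completing the identification.

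The main obstacle is the achievability part of the first step: given any feasible $T_B$, constructing an explicit no-signalling box that realises the constant-channel decomposition. The converse and the SDP translation are essentially direct, but achievability requires exploiting the full generality of quantum no-signalling correlations to correlate Alice's classical label with Bob's quantum register in such a way that perfect zero-error recovery of $\cN(\rho)$ is possible after Alice's noiseless message arrives. Once the reduction to the constant-channel decomposition is in place, the remaining semidefinite-programming manipulations are routine.
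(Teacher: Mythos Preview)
Your proposal is correct and follows essentially the same route as the paper: reduce the one-shot simulation to the constant-channel decomposition $\frac{1}{m}\cN + \bigl(1-\frac{1}{m}\bigr)\cM$ equalling a fixed-output channel, and then read off the min-entropy SDP via Choi--Jamio\l{}kowski.

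One point of emphasis is inverted relative to the paper, however. You flag achievability as the main obstacle, but it is the easy direction: given the decomposition, the box $\Pi = \frac{1}{m}\cN\ox\cI_m + \bigl(1-\frac{1}{m}\bigr)\cM\ox\widetilde{\cI}_m$ is written down explicitly and certified no-signalling by Lemma~\ref{rank2-ns}. The subtler step is the converse. Your sketch (``the $A'\not\to B$ constraint forces Bob's marginal output to be input-independent \ldots{} the success branch of weight $1/m$ produces the decomposition'') does not by itself pin down a single success branch of weight $1/m$: a general no-signalling simulation with $m$ messages has $m^2$ blocks $\Omega^{(kk')}_{AB}$ in its Choi matrix, and nothing a priori forces the diagonal blocks $\Omega^{(kk)}$ to coincide. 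The paper obtains the two-branch form by first dephasing the classical registers $M_a,M_b$ and then twirling by the permutation group $S_m$, which preserves both the no-signalling constraints and the simulated channel. Your appeal to ``the structural characterisation established earlier'' needs to include this symmetrisation, not merely the no-signalling constraint itself.
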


For example, the asymptotic zero-error classical simulation 
cost of the cq-channel $0\rightarrow \rho_0$ and $1\rightarrow \rho_1$, 
is given by $\log(1+D(\rho_0,\rho_1)),$ where 
$D(\rho_0,\rho_1) = \frac12 \| \rho_0-\rho_1 \|_1$ is the trace distance 
between $\rho_0$ and $\rho_1$. This gives a new operational interpretation 
of the trace distance between $\rho_0$ and $\rho_1$ as the asymptotic exact 
simulation cost for the above cq-channel.

Since there might be more than one channel with Kraus operator space 
included in $K$, we are interested in the exact simulation cost of the 
``cheapest'' among these channels. More precisely, the one-shot zero-error 
classical simulation cost of a Kraus operator space $K$ is defined as 
$$\S(K)=\min \{\S(\cN): \cN \text{ is quantum channel and } K(\cN)<K\},$$
where $K(\cN)<K$ means that $K(\cN)$ is a subspace of $K$. 
Then it follows immediately from Theorem~\ref{review-simulation-1-shot} that 
\begin{theorem}
  \label{thm:L-1-shot-cost}
  The one-shot zero-error classical simulation cost of a Kraus operator space $K$ under quantum no-signalling assistance
  is given by the integer ceiling of  
  $$\S(K)=\min \tr T_B \text{ s.t. } 0\leq F_{AB}\leq \1_A\ox T_B,\ 
                                     \tr_B F_{AB}=\1_A,\ \tr F_{AB}(\1_{AB}-P_{AB})=0,$$
  where $P_{AB}$ is the projection onto the Choi-Jamio\l{}kowski support of $K$.
\end{theorem}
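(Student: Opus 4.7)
The plan is to reduce Theorem~\ref{thm:L-1-shot-cost} to Theorem~\ref{review-simulation-1-shot} by a direct substitution, with the only nontrivial point being the reformulation of the constraint $K(\cN)<K$ as a linear constraint on the Choi-Jamio\l{}kowski matrix. By definition, $\S(K)=\min\{\S(\cN):\cN\text{ is a quantum channel with }K(\cN)<K\}$, and Theorem~\ref{review-simulation-1-shot} expresses $\S(\cN)$ as the SDP $\min\tr T_B$ subject to $J_{AB}\leq\1_A\ox T_B$, where $J_{AB}$ is the Choi-Jamio\l{}kowski matrix of $\cN$. So the strategy is to make the optimisation variable the Choi-Jamio\l{}kowski matrix itself (call it $F_{AB}$), identify which matrices arise as Choi-Jamio\l{}kowski matrices of channels with Kraus space contained in $K$, and show that this is exactly the feasible set in the statement.

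The key structural fact I would use is the standard Choi--Jamio\l{}kowski correspondence: a positive semidefinite matrix $F_{AB}$ is the Choi matrix of some completely positive trace preserving map $\cN$ from $\cL(A')$ to $\cL(B)$ if and only if $F_{AB}\geq 0$ and $\tr_B F_{AB}=\1_A$; moreover, under this identification, the image of the Kraus operator space $K(\cN)$ under the canonical map $X\mapsto(\1_A\ox X)\ket{\Phi_{AA'}}$ is exactly the support of $F_{AB}$. In particular, the support of $P_{AB}$ is the image of $K$ under this map, and the condition $K(\cN)<K$ translates to the support of $F_{AB}$ being contained in the support of $P_{AB}$. For a positive semidefinite $F_{AB}$, this containment is equivalent to $\tr F_{AB}(\1_{AB}-P_{AB})=0$, since $\1_{AB}-P_{AB}$ is the projector onto the orthogonal complement.

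Putting the pieces together, I would substitute $F_{AB}$ in place of the Choi-Jamio\l{}kowski matrix $J_{AB}$ appearing in Theorem~\ref{review-simulation-1-shot}, adjoin the constraints $F_{AB}\geq 0$, $\tr_B F_{AB}=\1_A$ (so that $F_{AB}$ is a valid Choi matrix), and $\tr F_{AB}(\1_{AB}-P_{AB})=0$ (so that the corresponding channel has Kraus space inside $K$). Minimising $\tr T_B$ jointly over all feasible $(F_{AB},T_B)$ then yields exactly the SDP in the statement, and the integer ceiling corresponds to the fact that the number of classical messages must be an integer. Taking the ceiling is the same convention as used in Theorem~\ref{review-simulation-1-shot}, so no additional argument is required there.

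The only point demanding any care is the equivalence between the subspace inclusion $K(\cN)<K$ and the scalar trace equation $\tr F_{AB}(\1_{AB}-P_{AB})=0$; this is where I would be most careful, verifying that the Choi support is indeed $(\1_A\ox K(\cN))\ket{\Phi_{AA'}}$ so that the subspace condition on $K$ and the support condition on $F_{AB}$ coincide. Everything else is a syntactic rewrite of the SDP from Theorem~\ref{review-simulation-1-shot}, so there is no real additional optimisation content beyond what that earlier theorem already provides.
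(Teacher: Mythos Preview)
Your proposal is correct and matches the paper's approach. The paper's introduction explicitly states that Theorem~\ref{thm:L-1-shot-cost} ``follows immediately from Theorem~\ref{review-simulation-1-shot}'' via the definition $\S(K)=\min\{\S(\cN):K(\cN)<K\}$, and in Section~\ref{subsec:1-shot-cost} the proof of Theorem~\ref{thm:L-1-shot-cost} is exactly your substitution: let the Choi matrix $F_{AB}$ vary over all channels supported on $P_{AB}$, which is encoded by $F_{AB}\geq 0$, $\tr_B F_{AB}=\1_A$, and $\tr F_{AB}(\1_{AB}-P_{AB})=0$.
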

We will prove these two theorems in Section~\ref{subsec:1-shot-cost}.

\medskip
We introduce the asymptotic zero-error channel capacity of $K$ by considering 
the number of bits that can be communicated over $n$ copies of the channel $\cN$,
i.e.~$\cN^{\ox n}$, having Kraus operator space $K^{\ox n}$, per 
channel use as $n \rightarrow \infty$; we denote it as $C_{0,\NS}(K)$.
Likewise, the asymptotic number of bits needed per channel use to simulate
$\cN^{\ox n}$ as $n \rightarrow \infty$, denoted $S_{0,\NS}(\cN)$, and the 
same minimized over all channels with Kraus operator space $K^{\ox n}$
(not necessarily product channels!), which we denote $S_{0,\NS}(K)$.

From these definitions, it is clear that they are given by the regularizations
of the respective one-shot quantities:
\begin{align}
  \label{eq:channel-sim}
  S_{0,\NS}(\cN) &= -H_{\min}(A|B)_{J}, \\
  \label{eq:K-cap+sim}
  C_{0,\NS}(K)    = \sup_{n\geq 1} \frac1n \log \U\left(K^{\ox n}\right),
  \quad
  S_{0,\NS}(K)   &= \inf_{n\geq 1}\frac1n \log \S\left(K^{\ox n}\right),
\end{align}
the first one because $H_{\min}(A|B)$ is additive under tensor products.

So far, we are unable to determine closed formulas for the latter two in general. 
Interestingly, the special case of $K$ coming from a cq-channel can 
be solved completely. Note that if $K$ corresponds to a cq-channel 
$\cN: i \longmapsto \rho_i$, with $P_i$ the projection over the support of $\rho_i$,
then $K$ can be uniquely identified by a set of projections $\{P_i\}$ 
(up to a permutation over inputs). In this case,
\[
  K=\operatorname{span}\bigl\{ \ket{i}\!\bra{\psi} : \ket{\psi} \in \operatorname{supp}\,\rho_i \bigr\},
  \quad
  P=\sum_i \proj{i} \otimes P_i.
\]
Any such Kraus operator space will be 
called ``non-commutative bipartite cq-graph'' or simply ``cq-graph''.

The case of assisted communication seems very complicated, and most interesting. 
We show that the zero-error classical capacity of cq-graphs is given by the solution 
of the following  SDP:
\begin{equation}
  \label{eq:Aram-cq}
  \Aram(K) := \max \sum_i s_i \ {\rm s.t. }~0\leq s_i,\ \sum_i s_i P_i\leq \1.
\end{equation}
This number was suggested by Harrow as a natural generalization of the 
Shannon's classical fractional packing number \cite{Harrow2010}, 
and we will refer to it as \textit{semidefinite (fractional) packing number} 
associated with a set of projections $\{P_i\}$.

Our result can be summarized as 
\begin{theorem}
  \label{thm:cq-capacity}
  The zero-error classical capacity of a cq-channel $\cN: i\rightarrow \rho_i$ 
  assisted by quantum no-signalling correlations is given by the logarithm 
  of the semidefinite packing number $\Aram(K)$, i.e.,
  \[
    C_{0,{\rm NS}}(K)=\log \Aram(K).
  \]
  To be precise,
  \[
    \frac{1}{\text{\rm poly}(n)} \Aram(K)^n \leq \U\left(K^{\ox n}\right) \leq \Aram(K)^n.
  \]
\end{theorem}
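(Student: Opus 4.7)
The plan is to prove the two inequalities separately; the upper bound follows from a one-shot argument combined with multiplicativity of $\Aram$, while the lower bound requires a typical-sequence construction restricted to a single type class.

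For the upper bound I apply Theorem~\ref{capacity-1-shot} to $K^{\ox n}$, whose Choi--Jamio{\l}kowski support projection is $P^{(n)}=\sum_{i^n}\proj{i^n}\ox P_{i_1}\ox\cdots\ox P_{i_n}$, block-diagonal in the $A^n$ computational basis. Dephasing the $A^n$ register is a symmetry of the $\U$-SDP that preserves both the objective $\tr S_{A^n}$ and all constraints, so WLOG $S_{A^n}=\sum_{i^n}s_{i^n}\proj{i^n}$ and $E_{A^nB^n}=\sum_{i^n}\proj{i^n}\ox E_{i^n}$ with $0\le E_{i^n}\le s_{i^n}\1$. The trace-orthogonality constraint $\tr P^{(n)}(S\ox\1-E)=0$ then decomposes as $\tr P_{i^n}(s_{i^n}\1-E_{i^n})=0$ for each $i^n$; since both factors are PSD, this forces $E_{i^n}=s_{i^n}P_{i^n}+G_{i^n}$ with $0\le G_{i^n}\le s_{i^n}P_{i^n}^{\perp}$. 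Combined with $\sum_{i^n}E_{i^n}=\1$ this yields $\sum_{i^n}s_{i^n}P_{i^n}\le\1$, so $(s_{i^n})$ is feasible for $\Aram(K^{\ox n})$ and $\U(K^{\ox n})\le\Aram(K^{\ox n})$. Finally, multiplicativity $\Aram(K^{\ox n})=\Aram(K)^n$ is immediate from the dual SDP $\min\tr Y$ subject to $Y\ge 0$ and $\tr(YP_i)\ge 1$: tensor products of dual-feasible $Y$'s remain dual-feasible with multiplicative trace, while product primal ansaetze supply the matching lower bound.

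For the lower bound let $(s_i)$ be optimal for $\Aram(K)$. The naive ansatz $s_{i^n}=\prod_k s_{i_k}$ satisfies $\sum_{i^n}s_{i^n}P_{i^n}=(\sum_i s_iP_i)^{\ox n}\le\1$ and so is $\Aram$-optimal, but typically fails to be $\U$-feasible, since the residual $R=\1-\sum s_{i^n}P_{i^n}$ need not decompose as $\sum G_{i^n}$ with $G_{i^n}\le s_{i^n}P_{i^n}^{\perp}$; already one-shot, with $P_1=\proj{0}$ and $P_2=\proj{+}$ a direct calculation gives $\U(K)=1<4-2\sqrt{2}=\Aram(K)$, so the one-shot bound is strict in general. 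To recover $\Aram(K)^n$ asymptotically I would restrict the ansatz to sequences in a single type class $T_t^n$ whose empirical distribution $t$ approximates $q_i:=s_i/\Aram(K)$; by the method of types $|T_t^n|\prod_i s_i^{nt_i}\ge\Aram(K)^n/(n+1)^m$, accounting for the claimed poly$(n)$ loss. The $S_n$-symmetry of the restricted problem then permits searching for $(G_{i^n})$ within an $S_n$-covariant family, reducing the decomposition $\sum G_{i^n}=R$ to a matching problem on the symmetric sector of $B^{\ox n}$; crucially, the total available capacity $\sum_{i^n\in T_t^n}s_{i^n}P_{i^n}^{\perp}=V\1-\sum s_{i^n}P_{i^n}$ exceeds $R$ whenever the value $V\ge 1$, which is the regime of interest.

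The crux and main obstacle is this last step, namely the explicit construction of the $G_{i^n}$ in the non-commutative setting; classically all projectors share a common eigenbasis so the matching is diagonal and automatic, reproducing the fractional-packing argument of Cubitt \emph{et al.}~\cite{CLMW2011}. Quantumly I expect the cleanest resolution to go via the dual SDP of $\U(K^{\ox n})$: construct a dual-feasible certificate of value at most $\text{poly}(n)\cdot\Aram(K)^n$ by tensoring the optimal $\Aram$-dual $Y$ with a permutation-symmetric correction and invoking Schur--Weyl duality to bound the loss by the polynomial dimension of the symmetric sector. This converts the combinatorial obstruction of block-by-block decomposition of $R$ into a tractable representation-theoretic estimate.
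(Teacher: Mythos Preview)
Your upper bound argument is correct and essentially identical to the paper's: the cq structure lets you dephase to the block form, yielding the simplified SDP~(\ref{eq:Upsilon:cq-channel}) which is visibly a relaxation of the $\Aram$ SDP~(\ref{eq:Aram:cq-channel}); multiplicativity of $\Aram$ then gives $\U(K^{\ox n})\le\Aram(K)^n$.

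Your setup for the lower bound also tracks the paper closely: restrict to a single type class (polynomial loss $(n+1)^{|A|}$), observe that the surviving projectors $P_{i^n}$ form a single orbit under the permutation action of $S_n$ on $B^{\ox n}$, and reduce the $\U$-SDP by this symmetry to finding a single $R_0$ with $0\le R_0\le s(\1-P_0)$ and $\frac{1}{|G|}\sum_g (U^g)^\dagger(sP_0+R_0)U^g=\frac{1}{N}\1$.

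The gap is your proposed resolution of this last step. A dual-feasible certificate for $\U(K^{\ox n})$ yields an \emph{upper} bound on $\U$, not a lower one; constructing such a certificate of value $\text{poly}(n)\cdot\Aram(K)^n$ would only reprove the easy direction. To establish $\U(K^{\ox n})\ge\frac{1}{\text{poly}(n)}\Aram(K)^n$ you must either exhibit a primal-feasible point or show that \emph{every} dual-feasible point has large value; tensoring the $\Aram$-dual optimum $Y$ does neither.

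The paper's actual argument stays on the primal side. After the symmetry reduction, the equality constraint becomes a system of $\le LM^2$ linear equations in $R_0$, one for each block $Q_\lambda\ox Z^{(\lambda)}_\mu$ of the isotypical decomposition of $B^{\ox n}$ under $S_n$, where $L$ is the number of irreps and $M$ the maximal multiplicity. One writes $R_0=\frac{\beta}{N}(\1-P_0)X(\1-P_0)$ and solves for $X$ via an explicit dual basis $\{\widehat D_{\lambda\mu}\}$; the key estimate is that the basis-change matrix $T$ between the ``free'' dual basis $\{\widehat C_{\lambda\mu}\}$ and $\{\widehat D_{\lambda\mu}\}$ is $(s^*N)^{-1/2}$-close to $\1$, so inverting it and bounding the remainder shows $0\le X\le 2$ for $\beta=O(L^2M^{9/2})$. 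Schur--Weyl enters only at the very end, to bound $L\le(n+1)^{|B|}$ and $M\le(n+|B|)^{|B|^2/2}$, making $\beta$ polynomial in $n$. So your instinct that representation theory controls the loss is right, but it does so by bounding the size of a linear system in the primal construction, not by producing a dual witness.
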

The proof of this theorem is given in Section~\ref{subsec:asymptotic-capacity-cq}.

The asymptotic zero-error classical simulation cost for cq-graphs $K$ is 
relatively easy and straightforward. Indeed, we show 
that the one-shot zero-error classical simulation cost for cq-channels 
is multiplicative, \emph{i.e.}
\begin{equation}
  \label{eq:sigma-mult}
  \S(K_1\ox K_2)=\S(K_1)\,\S(K_2),
\end{equation}
for cq-graphs $K_1$ and $K_2$. The equality is proved
by simply combining the sub-multiplicativity of the primal SDP and the 
super-multiplicativity of the dual problem, and then applying strong 
duality of SDPs. It readily follows that the asymptotic simulation cost for any 
Kraus operator $K$ corresponding to a cq-channel is given by the one-shot 
simulation cost, namely
\begin{equation}
  \label{eq:G0-equals-sigma}
  S_{0,\NS}(K) = \log \S(K).
\end{equation}
It is worth noting that the above equality is valid for many other cases. 
In particular, it holds when $K$ corresponds to a quantum channel that 
is an extreme point in the set of all quantum channels. It remains unknown 
whether this is true for more general $K$.

As an unexpected byproduct of our general analysis we obtain the following.
To understand it, note that any cq-channel $\cN:i \longmapsto \rho_i$ 
naturally induces a confusability graph $G$ on the vertices $i$, by
letting $\{i,j\}\in E$ if and only if $\rho_i\not \perp \rho_j$, i.e.
inputs $i$ and $j$ are confusable. 

\begin{theorem}
  \label{thm:theta}
  For any classical graph $G$, the Lov\'asz number $\vartheta(G)$ \cite{Lovasz1979} 
  is the minimum zero-error classical capacity assisted 
  by quantum no-signalling correlations of any cq-channel that induces $G$, \emph{i.e.}
  \[
    \log\vartheta(G)=\min\bigl\{ C_{0,{\rm NS}}(K): K^\dag K<S_G \bigr\},
  \]
  where the minimization is over cq-graphs $K$ and $S_G$ is the non-commutative graph corresponding to $G$ (see Eq. (\ref{sg-graph}) in Section V).

  In particular, equality holds for any cq-channel $i\rightarrow \proj{\psi_i}$ 
  such that $\{\ket{\psi_i}\}$ is an optimal orthogonal representation for $G$
  in the sense of Lov\'{a}sz' original definition \cite{Lovasz1979}. 
\end{theorem}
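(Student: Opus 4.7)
The plan is to combine Theorem~\ref{thm:cq-capacity} with strong duality for the SDP defining $\Aram(K)$.  That theorem identifies $C_{0,\NS}(K)=\log\Aram(K)$, so the identity becomes
\[
  \vartheta(G)\ =\ \min\bigl\{\Aram(K):K\text{ cq-graph},\ K^\dag K<S_G\bigr\},
\]
where such $K$ is equivalent to a family of projections $\{P_i\}_{i\in V(G)}$ with $P_iP_j=0$ whenever $i\ne j$ and $\{i,j\}\notin E(G)$.  The dual of the SDP in \eqref{eq:Aram-cq} reads
$\Aram(K)=\min\bigl\{\tr X:X\succeq 0,\ \tr(P_iX)\ge 1\ \forall i\bigr\}$,
and I would use primal and dual certificates to prove the two inequalities separately.

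\emph{Achievability (this also delivers the ``in particular'' clause).}  Pick an optimal Lov\'asz orthonormal representation $\{\ket{\psi_i}\}$ of $G$ together with its unit handle $\ket c$, so that $|\langle c|\psi_i\rangle|^2\ge 1/\vartheta(G)$ for every vertex $i$.  Setting $P_i=\proj{\psi_i}$ makes $K^\dag K<S_G$ automatic, and the ansatz $X=\vartheta(G)\proj{c}$ is dual-feasible because $\tr(P_iX)=\vartheta(G)\,|\langle c|\psi_i\rangle|^2\ge 1$; it certifies $\Aram(K)\le\vartheta(G)$.

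\emph{Converse.}  For an arbitrary valid $K$, let $H\subseteq G$ be the actual confusability graph, $\{i,j\}\in E(H)\iff P_iP_j\ne 0$.  Edge-deletion monotonicity of the Lov\'asz number gives $\vartheta(H)\ge\vartheta(G)$, so it suffices to prove the key bound $\Aram(K)\ge\vartheta(H)$.  Take an optimal dual $X^*\succeq 0$ with $\tr X^*=\Aram(K)$ and any purification $\ket{\Psi}\in\cH_B\otimes\cH_E$ satisfying $\tr_E\proj{\Psi}=X^*$.  On the dilated space, set the unit handle $\ket{c}=\ket{\Psi}/\sqrt{\Aram(K)}$ and the unit vectors
\[
  \ket{e_i}\ =\ \frac{(P_i\otimes\1_E)\ket{\Psi}}{\bigl\|(P_i\otimes\1_E)\ket{\Psi}\bigr\|}.
\]
Because each $P_i$ is a projection, a one-line computation gives $|\langle c|e_i\rangle|^2=\tr(P_iX^*)/\tr X^*\ge 1/\Aram(K)$; and for $\{i,j\}\notin E(H)$ the numerator of $\langle e_i|e_j\rangle$ is $\bra{\Psi}(P_iP_j\otimes\1_E)\ket{\Psi}=0$.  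Thus $\{\ket{e_i}\}$ is a genuine orthonormal representation of $H$ in $\cH_B\otimes\cH_E$ with handle $\ket c$, so Lov\'asz' original min-max formula yields $\vartheta(H)\le\max_i 1/|\langle c|e_i\rangle|^2\le\Aram(K)$.

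The hardest step is the converse.  A rank-one ansatz $X^*\propto\proj{c}$ is not optimal in general---already when $\{P_i\}$ are orthogonal rank-one projections one is forced to take $X^*=\1$---and when $P_i$ has rank $>1$ there is no canonical ``vector for vertex $i$'' that could be plugged directly into Lov\'asz' min-max.  The purification-based dilation above resolves both difficulties simultaneously: any PSD dual certificate $X^*$ lifts to a pure state on an enlarged Hilbert space, whose normalized components in the ranges of $P_i\otimes\1_E$ automatically form an orthonormal representation of $H$, with overlaps against the handle $\ket c$ that reproduce $\tr(P_iX^*)/\tr X^*$ exactly.
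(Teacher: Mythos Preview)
Your proof is correct and follows essentially the same route as the paper: reduce to $\min_K \Aram(K)=\vartheta(G)$ via Theorem~\ref{thm:cq-capacity}, then use the dual SDP together with a purification of the optimal (possibly mixed) dual certificate/handle to produce a genuine Lov\'asz orthonormal representation.

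The main difference is in packaging. The paper first passes to the minimax form $\widehat{\Aram}(K)=\max_\sigma\min_i\tr P_i\sigma$, then purifies $\sigma$ to reduce ``mixed handle with projections $P_i$'' to ``pure handle with projections $P_i\otimes\1$'', and finally \emph{cites} \cite{CSW2010} for the fact that higher-rank projections with a pure handle still yield $\vartheta(G)^{-1}$. You collapse these two reductions into one step: purifying $X^*$ and taking $\ket{e_i}\propto (P_i\otimes\1_E)\ket{\Psi}$ is precisely the explicit construction underlying the cited result, so your argument is self-contained where the paper outsources a lemma. Your detour through the subgraph $H$ and the monotonicity $\vartheta(H)\ge\vartheta(G)$ is correct but not strictly needed: since $H\subseteq G$, any orthonormal representation of $H$ is already one of $G$, which is how the paper handles the case $K^\dag K<S_G$ without mentioning $H$.
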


The above result gives the first clear information theoretic interpretion
of the Lov\'asz $\vartheta$ function of a graph $G$, as the zero-error classical 
capacity of $G$ assisted by quantum no-signalling correlations, in the
sense of taking the worst cq-channel with confusability graph $G$. 
Its proof will be given in Section \ref{sec:theta}.

In Section \ref{sec:feasibility}, we provide results which completely characterize the 
feasibility (i.e., positivity) of zero-error communication with a 
non-commutative bipartite graph, or a quantum channel, assisted by no-signalling
correlations.  
Finally, in Section \ref{sec:conclusion},
we conclude and propose several open problems for further study.

\section{Structure of quantum no-signalling correlations}
\label{sec:structure}

\subsection{Do quantum no-signalling correlations naturally occur in communication?}
\label{subsec:naturally}
We will provide an intuitive explanation for how the quantum no-signalling 
correlations naturally arise from the information-theoretic viewpoint.
Let $\cM$ and $\cN$ be two quantum channels both from A to B, and assume 
that A and B can access any quantum resources that cannot be used for 
communicating between them directly. An interesting question is to 
ask when $\cN$ can exactly simulate $\cM$? This class of resources 
clearly includes shared entanglement, and actually has some other 
more important members that are the central interest of this paper. 
We will derive some general constraints that all these resources should 
satisfy. Let us start with the one-shot case, that is, A and B 
can establish quantum channel $\cM$ by using all ``allowable resources'' 
and one use of channel $\cN$. We can abstractly represent the whole 
procedure as in Fig.~\ref{fig:simulation}; note that it falls within
the formalism of ``quantum combs''~\cite{CD-AP}, but as it is a rather
special case we can understand it without explicitly invoking that
theory.

\begin{figure}[ht]
  \includegraphics[width=11cm]{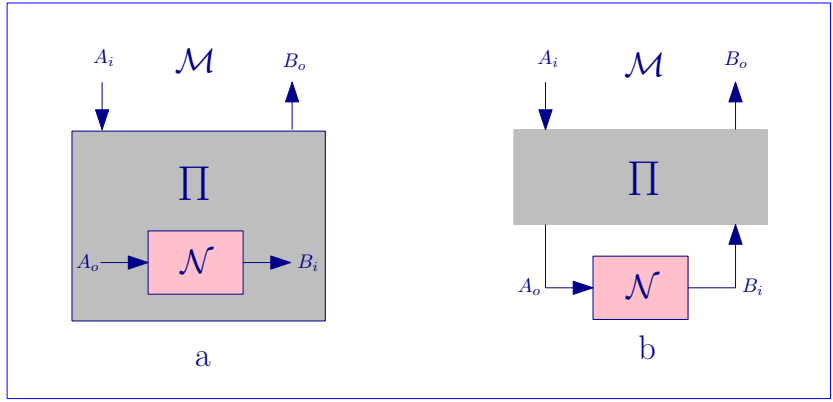}
  \caption{A general simulation network: a). We have abstractly represented the general simulation procedure for implementing a channel $\cM$ using another channel $\cN$ just once, and the correlations between A and B; b). This is just an equivalent way to redraw a), and we have highlighted all correlations between A an B, and their pre- and/or post- processing as $\Pi$.}
  \label{fig:simulation}
\end{figure}
Here is how this simulation works. First, A performs some pre-processing on 
the input quantum system $A_i$ together with all possible resources at her hand, 
and outputs quantum system $A_o$ as the input of channel $\cN$. The channel $\cN$ 
then outputs a quantum system $B_i$. B will do some post-processing on this 
system together with all possible resources he has, and finally generates an 
output $B_o$. If we remove $\cN$, we are left with a network with two inputs 
$A_i$ and $B_i$, and two outputs $A_o$ and $B_o$. Clearly this represents all 
possible pre- and/or post-processing that A and B have done and all resources 
that are available to A and B. In the framework of quantum mechanics, this 
network can be formulated as a quantum channel $\Pi$ with two inputs and two 
outputs. Thus we can redraw the simulation procedure as Fig.~\ref{fig:simulation}.b.  
However, as $\cN$ is the only communicating device from A to B, we must have 
that $\Pi$ cannot be used to communicate from A to B. Furthermore, the output 
$A_o$ represents the input of A to the channel $\cN$, and thus can be uniquely 
determined by $A$, but not $B_i$, which is the output of $\cN$. We will see this 
constraint is equivalent to B cannot communicate to A using $\Pi$.  
These constraints have led us to a fruitful class of resources that A and B 
could use in communication. 

As before, we don't attempt to solve the most general channel simulation problem. 
Instead, we will focus on two simpler but most interesting cases: 
i). $\cM$ is a noiseless classical channel and $\cN$ is the given noisy channel. 
The optimal solution to this problem will lead us to the notion of zero-error 
classical capacity of $\cN$;  
ii). $\cN$ is a noiseless classical channel and $\cM$ is the given noisy channel. 
The optimal solution will lead us to the notion of zero-error classical simulation 
cost of $\cM$.  In the communication problem, we want to maximize the number 
of messages we can send exactly by the given channel; while in the simulation 
problem, we want to minimize the amount of the  noiseless classical communication 
to simulate the given channel. In the rest of this section,  we will study the 
mathematical structures of quantum no-signalling correlations in detail.

\subsection{Mathematical definition of quantum no-signalling correlations}
\label{sec:definitions}
As discussed before, quantum no-signalling correlations are linear maps
\[
  \Pi: \cL(A_i) \ox \cL(B_i)\rightarrow \cL(A_o) \ox \cL(B_o)
\]
with additional constraints.  First, $\Pi$ is required to be completely positive (CP) and trace-preserving (TP). This makes $\Pi$ a physically realizable quantum operation. Furthermore, $\Pi$ is A to B no-signalling (A$\not\rightarrow$B). That is, A cannot send classical information to B by using $\Pi$. More precisely, for any density operators $\rho^{(0)}_{A_i}, \rho^{(1)}_{A_i}\in \cL(A_i)$ and $\sigma_{B_i}\in \cL(B_i)$, we have
$$\tr_{A_o} \Pi(\rho^{(0)}_{A_i}\ox \sigma_{B_i})=\tr_{A_o} \Pi(\rho^{(1)}_{A_i}\ox \sigma_{B_i}).$$
Or equivalently, 
\[
  \tr_{A_o} \Pi(X_{A_i}\ox Y_{B_i})=0\ \forall X,\ Y \text{ s.t. }\tr X=0.
\]
Likewise, $\Pi$ is required to be B to A no-signalling (B$\not\rightarrow$A). That is, B cannot send classical information to A by using $\Pi$. This constraint can be formulated as the following
\[
  \tr_{B_o} \Pi(X_{A_i}\ox Y_{B_i})=0\ \forall X,\ Y \text{ s.t. }\tr Y=0.
\]
Let the Choi-Jamio\l{}kowski matrix of $\Pi$ be 
$$\O_{A_i' A_o B_i' B_o} = (\id_{A_i'} \ox \id_{B_i'} \ox \Pi)(\Phi_{A_i A_i'} \ox \Phi_{B_i B_i'}),$$
where $\id_{A_i'}$ is the identity operator over $\cL(A_i')$, $\Phi_{A_i A_i'}=\ketbra{\Phi_{A_iA_i'}}{\Phi_{A_iA_i'}}$,  and $\ket{\Phi_{A_iA_i'}}=\sum_{k}\ket{k_{A_i}}\ket{k_{A_i'}}$ the un-normalized maximally entangled state. We now show that all above constraints on $\Pi$ can be easily reformulated into the semidefinite programming constraints in terms of the Choi-Jamio\l{}kowski matrix $\Omega$. For convenience, we often use unprimed letters such as $A_i$ and $B_i$ to denote the quantum systems inputting to quantum channels, and the primed letters $A_i'$ and $B_i'$  for the reference systems which are isomorphic to $A_i$ and $B_i$, respectively.  The constraints on $\Pi$ can be equivalently formulated in terms of $\Omega$ as follows: 
\begin{align*}
\O \ge 0, &\quad (CP)\\
\tr_{A_o B_o} \O = \1_{A_i' B_i'}, &\quad (TP)\\
\tr_{A_o A_i'} \O X_{A_i'}^{T} = 0\ \forall \tr X = 0, &\quad ({\rm A} \not\rightarrow {\rm B})\\
\tr_{B_o B_i'} \O Y_{B_i'}^{T} = 0\ \forall \tr Y = 0, &\quad ({\rm B} \not\rightarrow {\rm A}) 
\end{align*}
where $X$ and $Y$ are arbitrary Hermitian operators, so the transpose is not 
really necessary. 
The first two constraints guarantee that $\O$ corresponds to a CPTP map $\Pi$, while the latter two make sure that $\O$ cannot be used for communicating from A to B and B to A, respectively. Both constraints need only be verified on a Hermitian matrix basis of $A_i'$, $B_i'$, respectively.

The key to deriving the above constraints is the following useful fact:
$$X_{A}=\tr_{A'}\Phi_{AA'}(\1_{A}\ox {X^T_{A'}})=\tr_{A'} \Phi_{AA'}{X^T_{A'}},$$
where $A'$ is an isomorphic copy of $A$.

It is worth noting that the class of quantum no-signalling correlations is closed under convex combinations. That is, if $\Pi_0$ and $\Pi_1$ are quantum no-signalling correlations and $0\leq p\leq 1$, then $p\Pi_0+(1-p)\Pi_1$ are also no-signalling correlations.  Furthermore,  this class is also stable under the pre- or post-processing by A or B. That is, if $\Pi$ is a no-signalling correlation from $\cL(A_i'\ox B_i')$ to $\cL(A_o\ox B_o)$. Then $\Pi'=(\cA_1\ox \cB_1)\Pi (\cA_0\ox \cB_0)$ is also no-signalling, where $\cA_0,\cA_1,\cB_0,\cB_1$ are CPTP maps on suitable Hilbert spaces.

It is instructive to compare the quantum no-signalling correlations and the classical  
no-signalling correlations. Recall that any classical no-signalling correlations 
can be described as a classical channel $Q=(X\times Y, Q(ab|xy), A\times B)$ 
with two classical inputs $x\in X$ and $y\in Y$, and two classical outputs 
$a\in A$ and $b\in B$, where 
\begin{align}
  Q(ab|xy)          &\geq 0, \ \forall x\in X, y\in Y, a\in A,\ b\in B,          \\
  \sum_{ab}Q(ab|xy) &= 1, \ \forall x\in X,\ y\in Y         ,          \\
  \sum_{a}Q(ab|xy)  &= \sum_{a}Q(ab|x'y), \ \forall x,x'\in X,\ y\in Y,\\
  \sum_{b}Q(ab|xy)  &= \sum_{b}Q(ab|xy'), \ \forall x\in X,\ y,y'\in Y.
\end{align}
Evidently, $Q$ can also be represented as a quantum channel in the following way
\[
  Q(\rho)=\sum_{a,b,x,y} Q(ab|xy) \ketbra{ab}{xy} \rho \ketbra{xy}{ab}.
\]
One can easily verify the above constraints are exactly  the same as treating 
$Q$ a quantum no-signalling correlation. From this viewpoint, quantum 
no-signalling correlations are natural generalizations of their classical correspondings. 

Finally, we would like to mention another interesting fact. That is, any two-input and two-output quantum channel $\Pi$ can be reduced to a two-input and two-output classical channel $Q$ that has the same signalling property by simply doing pre- or post-processing, and all inputs and outputs of $Q$ are binary, i.e., if $\Pi$ is A to B (and/or B to A) signalling then $Q$ is also A to B (resp. B to A) signalling. Due to its significance, we formulate it as
\begin{proposition}
  \label{binary-signalling}
  For any CPTP map $\Pi$ from $\cL(A_i\ox B_i)$ to $\cL(A_o\ox B_o)$ such that $\Pi$ is 
  B to A (and/or A to B) signalling, one can obtain a classical channel 
  $Q=(X\times Y, Q(ab|xy), A\times B)$ with all $|A|=|B|=|X|=|Y|=2$, by doing 
  suitable local pre- and post-processing on $\Pi$, such that $Q$ is also B 
  to A signalling (resp. A to B). 
\end{proposition}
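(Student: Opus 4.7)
The plan is to reduce B-to-A signalling of $\Pi$ to B-to-A signalling of a fully binary classical channel via local pre- and post-processing, using the elementary fact that any two distinct quantum states admit a two-outcome POVM distinguishing them. The A-to-B case is entirely symmetric, so I focus on the B-to-A direction. The only real subtlety -- hardly an obstacle -- is to ensure all four classical registers end up of dimension exactly $2$, which is handled by deliberately deploying constant encoders or decoders where needed.

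First I unpack the hypothesis: $\Pi$ being B-to-A signalling means there exist Hermitian operators $\xi$ on $A_i$ and $\eta$ on $B_i$ with $\tr \eta = 0$ such that $\tr_{B_o} \Pi(\xi \ox \eta) \neq 0$. Writing $\eta = \lambda(\sigma_0 - \sigma_1)$ via the Jordan decomposition (with $\sigma_0, \sigma_1$ density operators on $B_i$ and $\lambda > 0$; the case $\lambda = 0$ forces $\eta = 0$, contradicting the signalling assumption), and defining the CPTP maps $\cN_y(\cdot) := \tr_{B_o}\Pi(\cdot \ox \sigma_y)$ from $\cL(A_i)$ to $\cL(A_o)$, the signalling condition becomes $\cN_0(\xi) \neq \cN_1(\xi)$. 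A second Jordan decomposition applied to $\xi$ then yields a density operator $\rho$ on $A_i$ with $\cN_0(\rho) \neq \cN_1(\rho)$. Because these two Hermitian output operators differ, a Helstrom-type argument -- for instance, taking $M$ to be the projector onto the positive part of $\cN_0(\rho) - \cN_1(\rho)$ -- produces an operator $0 \leq M \leq \1_{A_o}$ satisfying $\tr M\,\cN_0(\rho) \neq \tr M\,\cN_1(\rho)$.

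Now I assemble the local operations sandwiching $\Pi$ to form the desired classical channel $Q$. Alice's pre-processing is the constant classical-to-quantum encoder $x \mapsto \rho$ (ignoring $x \in \{0,1\}$); Bob's pre-processing is the classical-to-quantum encoder $y \mapsto \sigma_y$ for $y \in \{0,1\}$; Alice's post-processing is the binary measurement $\{M_0, M_1\} := \{M, \1 - M\}$ producing $a \in \{0,1\}$; and Bob's post-processing traces out $B_o$ and emits a fixed $b = 0 \in \{0,1\}$. All four maps are local and are valid classical-to-quantum or quantum-to-classical CPTP operations, so their composition with $\Pi$ is a well-defined classical channel with $|X|=|Y|=|A|=|B|=2$, namely
\[
  Q(ab|xy) = \tr\bigl(M_a\,\cN_y(\rho)\bigr)\,\delta_{b,0}.
\]
Its A-marginal $\sum_b Q(ab|xy) = \tr(M_a\,\cN_y(\rho))$ genuinely depends on $y$ by construction of $M$, $\rho$, $\sigma_0$ and $\sigma_1$, so $Q$ is B-to-A signalling, as required. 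The A-to-B signalling case follows by applying the same argument with the roles of A and B exchanged.
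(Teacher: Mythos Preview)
Your proof is correct and uses the same core ingredients as the paper: extract two input states on the $B_i$ side (via the Jordan decomposition of a traceless witness), find a state on the $A_i$ side on which the two induced channels differ, and separate the resulting outputs by a Helstrom-type binary measurement. The one genuine difference is scope: the paper treats the two-way signalling case as primary and builds a \emph{single} binary classical channel $Q$ that is simultaneously A-to-B and B-to-A signalling (using nontrivial binary encoders and decoders on both sides), whereas you handle one direction at a time and deliberately collapse the unused side with constant maps. For the statement as phrased (``resp.\ A to B'') your reading is defensible and your construction is the cleaner one; the paper's version buys the extra conclusion that when $\Pi$ signals both ways, a single $Q$ can witness both, which is what is later invoked in the proof that B-to-A no-signalling is necessary for composability. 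If you want to cover that simultaneous case too, just replace your constant $x\mapsto\rho$ and $b\mapsto 0$ maps by genuine binary encoders/decoders chosen from the A-to-B witness, exactly as the paper does.
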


\begin{proof}
Assume $\Pi$ is both way signalling (the case that it is one-way signalling
is similar, and in fact simpler). Then we can find a pair of states 
$\rho^{(0)}_{A_i}$ and $\sigma^{(0)}_{B_i}$, such that the following two 
maps $\cA: \cL(A_i)\rightarrow \cL(B_o)$ and 
$\cB: \cL(B_i)\rightarrow \cL(A_o)$ are \textit{non-constant} CPTP maps:
\begin{align*}
  \cA(\rho_{A_i})   &= \tr_{A_o}\Pi(\rho_{A_i}\ox \sigma^{(0)}_{B_i}),   \\
  \cB(\sigma_{B_i}) &= \tr_{B_o}\Pi(\rho^{(0)}_{A_i}\ox \sigma_{B_i}).
\end{align*}
I.e., we can find another two states $\rho^{(1)}_{A_i}$ and $\sigma^{(1)}_{B_i}$ such that
$$\cA(\rho^{(0)}_{A_i})\neq \cA(\rho^{(1)}_{A_i}), ~~\cB(\sigma^{(0)}_{B_i})\neq \cB(\sigma^{(1)}_{B_i}).$$

By Helstrom's theorem, for any two different quantum states 
$\{\rho_0, \rho_1\}$, we can find a projective measurement 
$\{P_0, P_1\}$  to distinguish the given states with equal prior 
probably with an average success probability
$$p = \frac{1}{2}\tr \rho_0 P_0 + \frac{1}{2}\tr \rho_1 P_1 
    = \frac{1}{2}+\frac14 \|\rho_0-\rho_1\|_1>1/2.$$
As a direct consequence, the CPTP map 
$$\cM(\rho)=(\tr \rho P_0) \ketbra{0}{0}+(\tr \rho P_1) \ketbra{1}{1}$$
will produce two different classical binary probability distributions 
$\cM(\rho_0)$ and $\cM(\rho_1)$. We will need this fact below.

Let $\bigl(P_{A_o}^{(0)}, P_{A_o}^{(1)}\bigr)$ and 
$\bigl(Q_{B_o}^{(0)}, Q_{B_o}^{(1)}\bigr)$ 
be the measurements to optimally distinguish 
$\bigl\{\cB(\sigma^{(0)}_{B_i}), \cB(\sigma^{(1)}_{B_i})\bigr\}$ and
$\bigl\{\cA(\rho^{(0)}_{A_i}), \cA(\rho^{(1)}_{A_i})\bigr\}$, respectively.
Then we can define four CPTP maps as follows:
\begin{align*}
  \cA_0(\rho)    =(\tr\ketbra{0}{0}\rho)\rho_{A_i}^{(0)}+(\tr\ketbra{1}{1}\rho)\rho_{A_i}^{(1)},
                  &\quad
  \cA_1(\rho)    =(\tr P_{A_o}^{(0)}\rho)\ketbra{0}{0}+(\tr P_{A_o}^{(1)}\rho)\ketbra{1}{1},\\
  \cB_0(\sigma)  =(\tr\ketbra{0}{0}\sigma)\sigma_{B_i}^{(0)}+(\tr\ketbra{1}{1}\sigma)\sigma_{B_i}^{(1)},
                  &\quad
  \cB_1(\sigma)  =(\tr Q_{B_o}^{(0)}\sigma)\ketbra{0}{0}+(\tr Q_{B_o}^{(1)}\sigma)\ketbra{1}{1}.
\end{align*}
Using these as pre- and post-processing on $\Pi$, we obtain the desired channel
$$Q=(\cA_1\ox \cB_1)\circ \Pi\circ (\cA_0\ox \cB_0).$$
In $Q$, if B inputs $0$ or $1$, and A inputs $0$, then by the above construction, 
A must output two binary probability distributions that are different, 
hence can be used for signalling from B to A. 
Similarly, if A inputs $0$ or $1$, and B inputs $0$, then B must output another 
two different binary probability distributions that can be used for signalling 
from A to B.
\end{proof}

\subsection{Structure theorems for quantum no-signalling correlations}
\label{subsec:structure}
We will establish several structure  theorems regarding quantum no-signalling correlations. 
Note that $\Pi$ is a two-input and two-output quantum channel. 
So there are two natural ways to think of $\Pi$ according to the relation 
between the inputs and outputs. 
The first way is to partition $\Pi$ as $A_iA_o:B_iB_o$, so the output of $A_i$ 
and $B_i$ would be $A_o$ and $B_o$, respectively; this is perhaps the most standard way. 
The second way is to focus on the communication between $A$ and $B$
and partition $\Pi$ as $A_iB_o:B_iA_o$. In this case the output of $A_i$ and
 $B_i$ will be regarded as $B_o$ and $A_o$, respectively. This kind of partition 
 is quite useful when we are interested in the communication between $A$ and $B$. 

Let us start with the bipartition $A_iA_o:B_iB_o$ of $\Pi$. 
In this case, we can have a full characterization of no-signalling maps. 
(Bear in mind that we have assumed $\Pi$ to be a CPTP map in the following discussion).

\begin{proposition}
\label{ns-structure1}
Let $\Pi: \cL(A_i) \ox \cL(B_i)\rightarrow \cL(A_o) \ox \cL(B_o)$ be a bipartite CPTP map with a decomposition 
$\Pi=\sum_{k}\lambda_k \cA_k\ox \cB_k$ according to bipartition $A_iA_o:B_iB_o$,  where $\cA_k:\cL(A_i)\rightarrow \cL(A_o)$, $\cB_k:\cL(B_i)\rightarrow \cL(B_o)$,  and $\lambda_k$ are complex numbers. Then we have the following:

i) $\Pi$ is B to A no-signalling iff  $\cB_k$  can be chosen as CPTP for any $k$.
 In this case $\sum_k\ll_k \cA_k$ will be also CPTP.

ii) $\Pi$ is A to B no-signalling iff $\cA_k$ can be chosen as CPTP for any $k$.
 In this case $\sum_k\ll_k \cB_k$ will be also CPTP.

iii) $\Pi$ is no-signalling between A and B iff both $\cA_k$ and $\cB_k$ can be 
 chosen as CPTP maps, $\sum_k \ll_k=1$, and $\lambda_k$ are real for all $k$. 

\end{proposition}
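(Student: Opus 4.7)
The plan is to treat (i) and (ii) symmetrically (they are related by swapping $A\leftrightarrow B$), and to derive (iii) by combining them with a refinement at the level of the Choi matrix.

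The easy half of (i) is a direct computation. If all $\cB_k$ are CPTP, then $\tr\circ\cB_k=\tr$, and
\[
  \tr_{B_o}\Pi(\rho\ox\sigma) = \sum_k \ll_k \cA_k(\rho)\,\tr(\cB_k(\sigma)) = \Bigl(\sum_k \ll_k\cA_k\Bigr)(\rho)\cdot \tr(\sigma),
\]
which is independent of $\sigma$, giving B to A no-signalling. Since $\tr_{B_o}\Pi$ is itself a CPTP map from $\cL(A_i\ox B_i)$ to $\cL(A_o)$, restricting to product inputs forces $\sum_k\ll_k\cA_k$ to be CPTP.

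The nontrivial half of (i) is the forward direction. I would invoke the Eggeling--Schlingemann--Werner theorem~\cite{ESW2001}, which says that any CPTP map that is B to A no-signalling is \emph{semilocalizable}: there exist an auxiliary Hilbert space $E$ and CPTP maps $\cU:\cL(A_i)\rar\cL(A_o\ox E)$, $\cV:\cL(E\ox B_i)\rar\cL(B_o)$ with $\Pi = (\id_{A_o}\ox\cV)\circ (\cU\ox\id_{B_i})$. Choose an orthonormal basis $\{\ket{e}\}$ of $E$, expand $\cU(\rho)=\sum_{e,e'} M_{e,e'}(\rho)\ox\ketbra{e}{e'}$ with linear $M_{e,e'}:\cL(A_i)\rar\cL(A_o)$, and feed into $\cV$ to get $\Pi(\rho\ox\sigma)=\sum_{e,e'} M_{e,e'}(\rho)\ox\cV(\ketbra{e}{e'}\ox\sigma)$. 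For $e=e'$ the map $\sigma\mapsto\cV(\ketbra{e}{e}\ox\sigma)$ is manifestly CPTP (pure ancilla state on $E$ fed into the CPTP $\cV$). For $e\ne e'$, the polarization identity
\[
  \ketbra{e}{e'} = \tfrac14\sum_{\phi\in\{1,i,-1,-i\}}\phi^{*}\,\ketbra{e'+\phi e}{e'+\phi e}
\]
writes $\cV(\ketbra{e}{e'}\ox\sigma)$ as a complex combination of $\cV(\ketbra{u}{u}\ox\sigma)$ for vectors $\ket{u}=\ket{e'+\phi e}$ of squared norm $2$; after absorbing the factor $\tfrac12$ into the scalar coefficient each summand is CPTP in $\sigma$. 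Collecting terms yields a decomposition $\Pi=\sum_k\ll_k\cA_k\ox\cB_k$ with every $\cB_k$ CPTP and $\ll_k\in\CC$. Item (ii) is the same argument with A and B swapped.

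For (iii), the direction $(\Leftarrow)$ is immediate from (i) and (ii) applied to the same decomposition. For $(\Rightarrow)$ I plan to use the (real) operator Schmidt decomposition of the Choi matrix $\O$ of $\Pi$, giving $\O=\sum_k\mu_k X_k\ox Y_k$ with $\mu_k\in\RR$ and $X_k,Y_k$ Hermitian on $A_i'A_o$, $B_i'B_o$ respectively. The two no-signalling marginal conditions plus trace-preservation of $\Pi$ force (in a minimal decomposition, after a global rescaling of each Schmidt pair) both $X_k$ and $Y_k$ to be Choi matrices of Hermiticity-preserving trace-preserving maps. Each such HP TP map is then written as a real linear combination of two CPTP maps by Jordan-decomposing its Choi matrix as $Y_k^+-Y_k^-$ and adding a common positive compensator $C=(\alpha\1-\tr_{B_o}Y_k^+)\ox \tau$ for a state $\tau$ on $B_o$ and $\alpha$ large enough to ensure positivity, so that both $Y_k^\pm + C$ have partial trace a scalar multiple of $\1_{B_i'}$; rescaling gives CPTP Choi matrices. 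Finally $\sum_k\ll_k=1$ follows from overall trace-preservation of $\Pi$. The main obstacle I foresee is the forward direction of (i): a direct manipulation of an arbitrary tensor decomposition only yields trace-preservation of the $\cB_k$'s, not complete positivity, so ESW plus polarization is the essential input. A secondary subtlety in (iii) is that naively chaining (i) and (ii) can destroy the CPTP property of the $\cB_k$'s produced by (i); this is why the symmetric Choi-matrix Schmidt route is preferable there.
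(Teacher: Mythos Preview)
The paper does not actually prove this proposition; immediately after the statement it says ``The proofs are relatively straightforward and we simply leave them to interested readers as exercises.'' So there is no paper argument to compare against, and your proposal is essentially a first proof. It is correct.

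Two remarks on the route you chose. For the forward direction of (i) you invoke ESW semilocalizability (the paper's Proposition~\ref{ns-structure3}) and then polarize the off-diagonal $\ketbra{e}{e'}$; this is valid and clean. But the Choi--Schmidt argument you use for (iii) already handles (i) by itself: with a Hermitian Schmidt decomposition $\Omega=\sum_k\mu_k X_k\ox Y_k$, the $\mathrm{B}\!\not\to\!\mathrm{A}$ condition $\tr_{B_o}\Omega=W\ox\1_{B_i'}$ together with linear independence of the $X_k$ forces $\tr_{B_o}Y_k\in\RR\cdot\1_{B_i'}$ for every $k$, so each $Y_k$ already lies in the real span of CPTP Choi matrices and your compensator trick expands it. This uniform argument is presumably what the authors meant by ``straightforward,'' and it avoids citing a result stated later in the paper (though ESW is of course independently established in~\cite{ESW2001}, so there is no genuine circularity).

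One small inaccuracy in your (iii) sketch: the no-signalling constraints yield $\tr_{A_o}X_k=c_k\1$ and $\tr_{B_o}Y_k=d_k\1$ with $c_k,d_k\in\RR$, but nothing forces these constants to be nonzero, so the Schmidt factors need not be Choi matrices of trace-\emph{preserving} maps as you assert---some may be trace-annihilating. This does not damage the argument, because your compensator construction works regardless: with $Y_k=Y_k^+-Y_k^-$ and $C=(\alpha\1-\tr_{B_o}Y_k^+)\ox\tau$ one gets $\tr_{B_o}(Y_k^++C)=\alpha\1$ and $\tr_{B_o}(Y_k^-+C)=(\alpha-d_k)\1$, both positive multiples of the identity for $\alpha$ large. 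Just drop the phrase ``TP'' in that sentence. Your observation that chaining (i) then (ii) can destroy the CPTP property obtained in the first pass is exactly right and is the reason the symmetric Choi-level route is needed for (iii).
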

The most interesting part is item iii).  Intuitively, any no-signalling correlation $\Pi$ between A and B can be written as a real affine combination of product CPTP maps $\cA_k \otimes \cB_k$. The proofs are relatively straightforward and we simply leave them to interested readers as exercises.

Now we turn to the bipartition $A_iB_o:B_iA_o$ of $\Pi$. We don't have a full characterization of no-signalling correlations in this setting. Nevertheless, a simple but extremely useful class of no-signalling correlations can be constructed using the following facts.
\begin{proposition}\label{ns-structure2}
Let $\Pi: \cL(A_i) \ox \cL(B_i)\rightarrow \cL(A_o) \ox \cL(B_o)$ be a bipartite CPTP map with a  decomposition $\Pi=\sum_{k}\mu_k \cE_k\ox \cF_k$ according to bipartition $A_iB_o:B_iA_o$, where $\cE_k:\cL(A_i)\rightarrow \cL(B_o)$, $\cF_k:\cL(B_i)\rightarrow \cL(A_o)$,  and $\mu_k$ are complex numbers. Then we have the following:

i) If $\sum_k \mu_k\cE_k$  is a constant map and $\cF_k$ is TP for any $k$, then $\Pi$ is A to B no-signalling.

ii) If $\sum_k \mu_k\cF_k$ is a constant map and $\cE_k$ is TP for any $k$, then $\Pi$ is B to A no-signalling.

iii) If all $\cE_k$ and $\cF_k$ are TP and both $\sum_k \mu_k\cE_k$ and $\sum_k \mu_k\cF_k$ are constant maps, then $\Pi$ is no-signalling between A and B.
\end{proposition}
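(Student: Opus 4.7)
The plan is to verify the three claims by direct computation of the relevant partial traces on the decomposition $\Pi = \sum_k \mu_k \cE_k \otimes \cF_k$ of the bipartition $A_iB_o : B_iA_o$. Recall that with this bipartition, $\cE_k$ sends $A$'s input $A_i$ to $B$'s output $B_o$, and $\cF_k$ sends $B$'s input $B_i$ to $A$'s output $A_o$; the ``crossed'' structure is exactly what makes the no-signalling conditions easy to read off after partial tracing.

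For part (i), I would apply $\Pi$ to a product $X_{A_i} \otimes Y_{B_i}$ with $\tr X = 0$ (and arbitrary $Y$), obtaining
\[
  \Pi(X_{A_i} \otimes Y_{B_i}) = \sum_k \mu_k\, \cE_k(X) \otimes \cF_k(Y) \in \cL(B_o) \otimes \cL(A_o).
\]
Then taking the partial trace over $A_o$ (the system that is to be hidden from $B$), and using that each $\cF_k$ is trace-preserving,
\[
  \tr_{A_o} \Pi(X_{A_i} \otimes Y_{B_i})
    = \sum_k \mu_k \bigl(\tr \cF_k(Y)\bigr) \cE_k(X)
    = (\tr Y) \sum_k \mu_k \cE_k(X).
\]
The constant-map hypothesis on $\sum_k \mu_k \cE_k$ means there exists a fixed $\omega \in \cL(B_o)$ with $\sum_k \mu_k \cE_k(X) = (\tr X)\, \omega$, so this vanishes whenever $\tr X = 0$, and hence $\Pi$ is A$\not\to$B no-signalling as required.

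Part (ii) is completely symmetric under the swap $A \leftrightarrow B$ and $\cE_k \leftrightarrow \cF_k$: tracing out $B_o$ gives $\tr_{B_o}\Pi(X \otimes Y) = (\tr X)\sum_k \mu_k \cF_k(Y)$, which vanishes for $\tr Y = 0$ by the constant-map hypothesis on $\sum_k \mu_k \cF_k$. Part (iii) is then immediate by combining (i) and (ii). There is no real obstacle here -- the entire argument is a one-line computation -- but it is worth noting why the bipartition $A_iB_o:B_iA_o$ is the natural one: under the ``standard'' bipartition $A_iA_o:B_iB_o$ the trace-preserving and constant-map hypotheses would not decouple the relevant partial trace so cleanly, which is why Proposition~\ref{ns-structure1} instead required CPTP tensor factors. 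The crossed bipartition converts the no-signalling conditions into simple ``kill one side by trace, kill the other side by the constant map'' statements.
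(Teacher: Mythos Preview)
Your proof is correct and is exactly the direct verification one would write; the paper itself does not spell out a proof for this proposition (as with the preceding Proposition~\ref{ns-structure1}, it is treated as a routine exercise), so your computation is precisely the intended argument.
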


For example, we can choose $\Pi=\sum_k \mu_k \cE_k\ox \cF_k$, where $\cE_k$ and $\cF_k$ are CPTP maps from $\cL(A_i)$ to $\cL(B_o)$ and $\cL(B_i)$ to $\cL(A_o)$, respectively, and $\{\mu_k\}$ is a probability distribution. If we further have that $\sum_k \mu_k\cE_k$ and $\sum_k \mu_k\cF_k$ are constant maps, then we obtain a no-signalling correlation $\Pi$. Clearly, for any such $\Pi$, neither of $A_i$ and $B_i$ can send classical information to $A_o$ or $B_o$. To emphasize this special feature, this class of no-signalling correlations are said to be \textit{totally no-signalling}. We shall see later even this class of correlations could be very useful in assisting zero-error communication and simulation. In particular, we need the following technical result:
\begin{lemma}\label{rank2-ns}
Let $\cE_0$ and $\cE_1$ be two CPTP maps from $\cL(A_i)$ to $\cL(B_o)$, and let $\cF_0,\cF_1$ be two CP maps from $\cL(B_i)$ to $\cL(A_o)$. Furthermore, assume there is unique $0\leq p\leq 1$ such that $p\cE_0+(1-p)\cE_1$ is a constant channel. Then $\Pi=p\cE_0\ox \cF_0+(1-p)\cE_1\ox \cF_1$ is a no-signalling correlation if and only if both $\cF_0$ and $\cF_1$ are CPTP maps, and $p\cF_0+(1-p)\cF_1$ is a constant map. 
\end{lemma}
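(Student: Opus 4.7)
The plan is to handle the two implications separately. For the "if" direction, I would simply invoke Proposition~\ref{ns-structure2}(iii) applied to the decomposition $\Pi = p\,\cE_0 \ox \cF_0 + (1-p)\,\cE_1 \ox \cF_1$ with weights $\mu_0 = p$, $\mu_1 = 1-p$: under the hypothesis that $\cF_0, \cF_1$ are CPTP and $p\cF_0 + (1-p)\cF_1$ is constant, together with the standing assumption that $\cE_0, \cE_1$ are CPTP and $p\cE_0 + (1-p)\cE_1$ is constant, all four summand maps are TP and both mixture combinations are constant, which is precisely the hypothesis of Proposition~\ref{ns-structure2}(iii); trace preservation of $\Pi$ itself is immediate on product inputs.

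For the "only if" direction, assume $\Pi$ is no-signalling and CPTP, and I would extract three consequences in order. First, evaluating $\tr \Pi(\rho\ox\sigma)=1$ on arbitrary densities and using $\tr \cE_k(\rho)=1$ yields the scalar identity
\[
  p\,\tr\cF_0(\sigma) + (1-p)\,\tr\cF_1(\sigma) = 1 \quad \text{for every density } \sigma.
\]
Second, the A-to-B no-signalling condition demands that
\[
  \tr_{A_o}\Pi(\rho\ox\sigma) = p\,\tr(\cF_0(\sigma))\,\cE_0(\rho) + (1-p)\,\tr(\cF_1(\sigma))\,\cE_1(\rho)
\]
be independent of $\rho$, so the channel $p\,\tr(\cF_0(\sigma))\,\cE_0 + (1-p)\,\tr(\cF_1(\sigma))\,\cE_1$ is constant, with scalar weights that are nonnegative (since the $\cF_k$ are CP) and sum to $1$ by the previous step. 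Third, once the $\cF_k$ are known TP, the B-to-A no-signalling condition gives
\[
  \tr_{B_o}\Pi(\rho\ox\sigma) = p\,\cF_0(\sigma) + (1-p)\,\cF_1(\sigma)
\]
independent of $\sigma$, i.e., $p\cF_0 + (1-p)\cF_1$ is a constant map, as required.

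The crux is the middle step: upgrading the scalar identity into the conclusion that each $\cF_k$ is separately TP. Here the uniqueness hypothesis on $p$ does the work. Setting $a(\sigma) = p\,\tr\cF_0(\sigma)$ and $b(\sigma) = (1-p)\,\tr\cF_1(\sigma)$, we have $a(\sigma), b(\sigma) \geq 0$, $a(\sigma)+b(\sigma)=1$, and $a(\sigma)\,\cE_0 + b(\sigma)\,\cE_1$ is constant; by the assumed uniqueness of $p\in[0,1]$ with $p\cE_0+(1-p)\cE_1$ constant, this forces $a(\sigma)=p$ for all $\sigma$, and hence $\tr\cF_0(\sigma)=1$ when $p>0$ and, symmetrically, $\tr\cF_1(\sigma)=1$ when $p<1$. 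The boundary cases $p=0$ and $p=1$ are handled by a direct one-line check, since only one product term of $\Pi$ contributes nontrivially. I expect the main obstacle to be verifying that the uniqueness hypothesis is exactly the right abstraction here, and to keep the endpoint cases clean, rather than any substantive calculation.
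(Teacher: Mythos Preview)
Your proposal is correct and follows essentially the same route as the paper's own (very terse) proof, which merely flags the two ingredients you make explicit: the uniqueness of $p$ to pin down the scalar weights $a(\sigma),b(\sigma)$, and the fact that a map vanishing on all traceless Hermitians is constant (which you use implicitly in the final B$\not\to$A step). One small caveat: at the endpoints $p\in\{0,1\}$ the non-appearing $\cF_k$ is simply unconstrained by $\Pi$, so the ``one-line check'' there is really that the lemma's conclusion about that map is vacuous; this is a harmless degeneracy of the statement rather than a gap in your argument.
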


\begin{proof}
This can be proven by directly applying the definition of quantum no-signalling 
correlations. Two key points are: First, the uniqueness of $p$ such that $p\cE_0+(1-p)\cE_1$ 
is constant; secondly, if $\cE(X)=0$ for any traceless Hermitian operator $X$, 
$\cE$ is a constant map.
\end{proof}

A third result about the structure of no-signalling correlations is the following \cite{ESW2001}
(see also \cite{PHHH2006} for an alternate proof, and \cite{Chiribella2012} for more discussions):

\begin{proposition}
\label{ns-structure3}
$\Pi$ is B$\not\rightarrow$A iff $\Pi=\cG\circ \cF$ for two CPTP maps 
$\cF: \cL(A_i)\rightarrow \cL(A_o\ox R)$ and $\cG: \cL(B_i\ox R)\rightarrow \cL(B_o)$, 
where $R$ is an internal memory.
\qed
\end{proposition}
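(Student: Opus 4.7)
The proposition asserts an equivalence, so I will handle each direction in turn.

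For the (easy) ``if'' direction: suppose $\Pi = \cG \circ \cF$ with $\cF: \cL(A_i) \to \cL(A_o \ox R)$ and $\cG: \cL(B_i \ox R) \to \cL(B_o)$ both CPTP. Since $\cG$ is trace preserving, $\tr_{B_o} \circ \cG = \tr_{B_i R}$, so for any inputs $X_{A_i}$ and $Y_{B_i}$,
\[
  \tr_{B_o} \Pi(X \ox Y) = \tr_{B_i R}\bigl( \cF(X) \ox Y \bigr) = (\tr Y) \cdot \tr_R \cF(X).
\]
This vanishes whenever $\tr Y = 0$, which is precisely the B$\not\rightarrow$A no-signalling condition.

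For the (harder) ``only if'' direction, I would first exploit the B$\not\rightarrow$A property to show that the reduced map $\tr_{B_o} \circ \Pi: \cL(A_i \ox B_i) \to \cL(A_o)$ factorizes as $\cA \ox \tr_{B_i}$, where $\cA: \cL(A_i) \to \cL(A_o)$ is defined by $\cA(\rho) := \tr_{B_o} \Pi(\rho \ox \tau)$ for an arbitrary fixed state $\tau$ on $B_i$; this is well-defined and independent of $\tau$ by the no-signalling hypothesis. Then I would pick a \emph{minimal} Stinespring dilation $U: A_i \to A_o \ox R$ of $\cA$, so $\cA(\rho) = \tr_R (U \rho U^\dagger)$, together with any Stinespring dilation $V: A_i \ox B_i \to A_o \ox B_o \ox E$ of $\Pi$, so $\Pi(\omega) = \tr_E(V \omega V^\dagger)$.

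The key observation is that both $V$ and the isometry $U \ox \1_{B_i}: A_i \ox B_i \to A_o \ox R \ox B_i$ are Stinespring dilations of the same channel $\cA \ox \tr_{B_i}$, with environments $B_o \ox E$ and $R \ox B_i$ respectively. Since minimality of $U$ for $\cA$ entails minimality of $U \ox \1_{B_i}$ for $\cA \ox \tr_{B_i}$ (by tensoring an orthonormal basis of $B_i$ with a spanning set on the $A_o R$ side), Stinespring uniqueness yields an isometry $W: R \ox B_i \to B_o \ox E$ with $V = (\1_{A_o} \ox W)(U \ox \1_{B_i})$. Setting $\cF(\rho) := U \rho U^\dagger$ and $\cG(\xi) := \tr_E (W \xi W^\dagger)$, both are CPTP, and substitution gives
\[
  (\cG \circ \cF)(\rho \ox \sigma) = \tr_E\bigl( V (\rho \ox \sigma) V^\dagger \bigr) = \Pi(\rho \ox \sigma),
\]
which extends to arbitrary inputs by linearity.

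The main obstacle worth flagging is ensuring the isometry $W$ goes in the correct direction: with a non-minimal choice of $U$, Stinespring uniqueness might only provide a partial isometry or an isometry $B_o \ox E \hookrightarrow R \ox B_i$ going the wrong way, and the desired factorization collapses. Taking $U$ minimal (or, equivalently, enlarging the environments so that one dilation dominates the other) is the clean way to circumvent this technicality.
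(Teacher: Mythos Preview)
Your argument is correct. The ``if'' direction is immediate, and for the ``only if'' direction your use of the B$\not\rightarrow$A condition to identify the marginal channel $\cA$, followed by comparison of the two Stinespring dilations and invocation of the uniqueness theorem (with minimality of $U$ ensuring the intertwining isometry $W$ points in the right direction), is sound. The technical point you flag about minimality is exactly the right one to worry about, and you handle it correctly: minimality of $U$ for $\cA$ does propagate to minimality of $U\ox\1_{B_i}$ for $\cA\ox\tr_{B_i}$, since tensoring a spanning family for $A_o\ox R$ against a basis of $B_i$ yields a spanning family for $A_o\ox R\ox B_i$.

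As for comparison with the paper: the paper does not supply its own proof of this proposition at all. It is stated with a bare \qed{} and attributed to Eggeling--Schlingemann--Werner \cite{ESW2001}, with an alternative proof referenced in \cite{PHHH2006}. Your Stinespring-based argument is essentially the one in \cite{ESW2001}; the approach in \cite{PHHH2006} proceeds instead via the Choi--Jamio\l{}kowski matrix and a decomposition at the level of states rather than isometries. So there is nothing in the paper to compare against directly, but your proof matches the primary cited source.
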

This result is quite intuitive. It says that in the case that B cannot 
communicate to A, we can actually have the output of A before we input to B. 

\subsection{Composing no-signalling correlations with quantum channels}
\label{subsec:composition}
Suppose now we are given a no-signalling map $\Pi:\cL(A_i\ox B_i)\rightarrow \cL(A_o\ox B_o)$
and a CPTP map $\cN: \cL(A)\rightarrow \cL(B)$. 
When $A=A_o$ and $B=B_i$, we can construct a new and unique map 
$\cM: \cL(A_i)\rightarrow \cL(B_o)$ by feeding the output $A_o$ of 
$\Pi$ into $\cN$, and use the output $B$ of $\cN$ as the input $B_i$ to 
$\Pi$ (refer to Fig. 2). Next we will show how to derive more explicit forms of this map.

\begin{figure}[ht]
  \includegraphics[width=11cm]{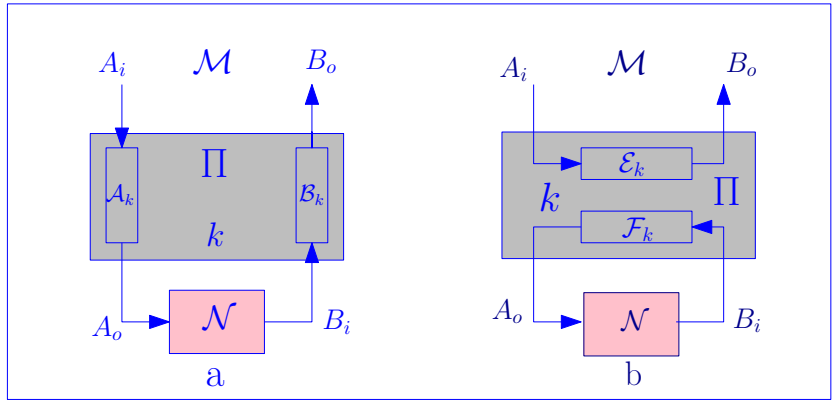}
  \caption{Two different partitions of the input-output registers of $\Pi$}
  \label{fig:partitions}
\end{figure}

Let us start by considering the partition according to $A_iA_o:B_iB_o$ to look
at $\Pi$. If $\Pi$ is of product form $\cA\ox \cB$ where 
$\cA: \cL(A_i)\rightarrow \cL(A_o)$ and $\cB: \cL(B_i)\rightarrow \cL(B_o)$, 
then clearly the composition must be given by 
$\cM(X)=\cB\circ \cN\circ \cA(X)$. 
We can simply extend this construction to the superpositions of these product maps
by linearity. I.e., for a general $\Pi=\sum_k\cA_k\ox \cB_k$, we define the 
new map by composing $\cN$ and $\Pi$ as the following
\begin{equation}\label{connection-1}
  \cM^{A_i\rightarrow B_o}
    =\sum_k \cB_k^{B_i\rightarrow B_o}\circ \cN^{A_o\rightarrow B_i}\circ \cA_k^{A_i\rightarrow A_o},
\end{equation}
and this is clearly well-defined.

We can also formulate $\cM$ in another useful way by considering the bipartition $A_iB_o:B_iA_o$. 
Suppose $\Pi$ is given in the form of $\sum_{k} \cE_k\ox \cF_k$  where 
$\cE_k:\cL(A_i)\rightarrow \cL(B_o)$, $\cF_k:\cL(B_i)\rightarrow \cL(A_o)$. 

Then to compose $\Pi$ and $\cN$, the only thing we need to do is to compose $\cN$ and $\cF_k$ directly and take trace. That is, we have
\begin{equation}\label{connection-2}
  \cM=\sum_k \cE_k^{A_i\rightarrow B_o} \tr(\cF_k\circ \cN),
\end{equation}
where the trace of super-operator $\cF_k\circ \cN$ on $\cL(A_o)$ is given by
\[
  \tr \cF_k\circ \cN = \sum_j \tr (C_j^\dagger (\cF_k\circ\cN)(C_j))
\]
for any orthonormal basis $\{C_j\}$ on $\cL(A_o)$ in the sense of $\tr (C_j^\dagger C_l)=\delta_{jl}$.  
This is just the natural generalization of the trace function of the square matrices
to super-operators.  
It is easy to verify that $\tr (\cF\circ \cN) = \tr (\cN\circ \cF)$ holds for any 
$\cF$ and $\cN$, and the trace is independent of the choice of the orthonormal basis $\{C_j\}$.

For instance, for any bipartite classical channel $Q=(X\times Y, Q(ab|xy), A\times B)$, and another classical channel $N=(A, N(y|a), Y)$, we can easily show that the map $M=(X, M(b|x), B)$ constructed by composing $Q$ and $N$ is given by
$$M(b|x)=\sum_{y,a}Q(ab|xy)N(y|a),$$
which coincides with the discussions in \cite{CLMW2011}.

Note that the above two constructions of the map $\cM$ from $\Pi$ and $\cN$ are quite general, and purely mathematical. Indeed, we can form a map $\cM$ from any $\cN$ and $\Pi$. An interesting and important question is to ask whether $\cM$ is a CPTP map when both $\cN$ and $\Pi$ are. 

It turns out that whenever $\cN$ and $\Pi$ are CP maps, $\cM$ should also be CP. This fact is actually a simple corollary of the teleportation protocol.  Indeed, we can easily see from the Fig. 3 that 
\begin{equation}
  \label{connection-3}
  \cM(X_{A_i})=\tr_{BB_i'}\{ [((\id_{B_oB_i'}\ox \cN)\circ(\id_{B_i'}\ox \Pi))
                            (X_{A_i}\ox \Phi_{B_iB_i'})](\1_{B_o}\ox\Phi_{BB_i'})\},
\end{equation}
where $A=A_o$, and $B$, $B_i$, and $B_i'$ are all isomorphic. The way to get the above equality is to first apply the teleportation protocol to the special case that $\Pi=\cA\ox \cB$, and then extend the result to the general case by linearity. The above expression not only provides a proof for the complete positivity of $\cM$, but also the uniqueness of such $\cM$.

\begin{figure}[ht]
  \includegraphics[width=9cm]{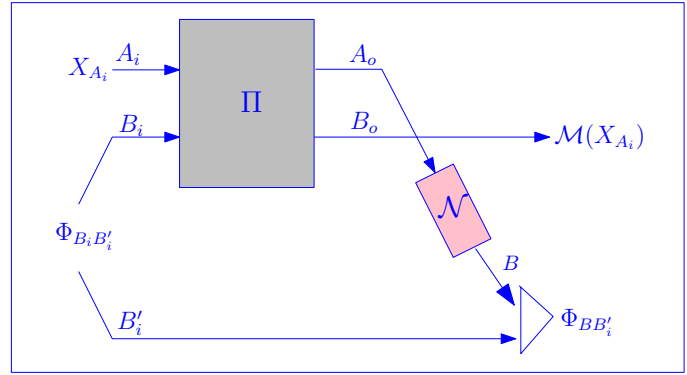}
  \caption{Composing $\cN$ and $\Pi$ from the viewpoint of teleportation.}
  \label{fig:teleport}
\end{figure}

So we have obtained that for any two maps $\Pi: \cL(A_i\otimes B_i)\rightarrow \cL(A_o\ox B_o)$ and $\cN: \cL(A_o)\rightarrow \cL(B_i)$, we can uniquely construct a map $\cM:\cL(A_i)\rightarrow \cL(B_o)$ by composing $\Pi$ and $\cN$, and trace out $A_o$ and $B_i$. An explicit construction can be equivalently given by any of Eqs. (\ref{connection-1}), (\ref{connection-2}), and (\ref{connection-3}). Furthermore, if $\Pi$ and $\cN$ are CP, so is $\cM$.

To guarantee that $\cM$ is TP, we have to put further constraints on $\Pi$, but not on $\cN$ except that it is an arbitrary CPTP map. We will show that such constraints are precisely the same as the constraints that B to A no-signalling (B$\not\rightarrow$ A). 

\begin{proposition}
Let $\Pi: \cL(A_i\otimes B_i)\rightarrow \cL(A_o\ox B_o)$ and $\cN: \cL(A_o)\rightarrow \cL(B_i)$ be two CPTP maps, and let $\cM:\cL(A_i)\rightarrow \cL(B_o)$ be the unique CP constructed by composing $\Pi$ and $\cN$. If $\Pi$ is CPTP and {\rm B} to {\rm A} no-signalling, then $\cM$ is also CPTP for any CPTP map $\cN$. Conversely, if $\cM$ is CPTP for any $\cN$, then $\Pi$ has to be {\rm B} to {\rm A} no-signalling. 
\end{proposition}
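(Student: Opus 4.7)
The plan is to split into the forward and converse directions, relying on Proposition~\ref{ns-structure3} for the former and a Choi-matrix analysis for the latter.

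\textbf{Forward direction.} Assuming $\Pi$ is CPTP and B~$\not\to$~A no-signalling, Proposition~\ref{ns-structure3} lets me write $\Pi=\cG\circ\cF$ with CPTP maps $\cF:\cL(A_i)\to\cL(A_o\ox R)$ and $\cG:\cL(B_i\ox R)\to\cL(B_o)$. The composition prescribed by Eq.~(\ref{connection-3}) then collapses to
\[
  \cM=\cG\circ(\cN\ox \id_R)\circ\cF,
\]
because the internal memory $R$ is piped straight through while $\cN$ acts on the $A_o$-wire being routed into $B_i$. A brief check on product inputs $X_{A_i}\ox Y_{B_i}$, extended by linearity, confirms that this is indeed the unique $\cM$ constructed in the paper; and being a sequential composition of CPTP maps it is itself CPTP.

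\textbf{Converse (contrapositive).} Using the link-product form implicit in Eq.~(\ref{connection-3}), write $\Omega$ for the Choi matrix of $\Pi$ on $A_i'A_oB_i'B_o$ and set $R:=\tr_{B_o}\Omega$. Identifying the feedback wire so that $\cN$'s Choi matrix $J_\cN$ sits in $\cL(A_o\ox B_i')$, the trace-preservation condition $\tr_{B_o}J_\cM=\1_{A_i'}$ unfolds to
\[
  \tr_{A_oB_i'}\!\bigl[\,R\cdot J_\cN^{T_{A_oB_i'}}\,\bigr]=\1_{A_i'}
  \qquad\text{for every CPTP }\cN.
\]
Differences of two admissible $J_\cN$'s span, over $\RR$, all Hermitian operators $N$ on $A_o\ox B_i'$ with $\tr_{B_i'}N=0$, so the displayed equation forces $\tr_{A_oB_i'}[R\cdot N^T]=0$ for every such $N$. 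Such $N$ are in turn spanned by $X_{A_o}\ox Y_{B_i'}$ with $Y$ traceless Hermitian, reducing the condition to $\tr_{B_i'}[R\,(\1_{A_i'A_o}\ox Y^T)]=0$ for every traceless Hermitian $Y$. This is exactly the Choi-level expression of B~$\not\to$~A no-signalling recorded in Section~\ref{sec:definitions}, so $\Pi$ must be B~$\not\to$~A no-signalling, completing the contrapositive.

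\textbf{Expected obstacle.} The conceptual content is light but the bookkeeping is delicate: one must carefully identify reference systems when composing the two Choi matrices along the feedback wire, and verify that the quantifier ``for every CPTP $\cN$'' supplies enough freedom to pin down the factorization of $R$. If the spanning step feels too slick, a safer alternative is to plug in one-parameter families $\cN_t=(1-t)\cN_0+t\cN_1$ of CPTP maps and use linearity in $t$ to read off the same constraint on $R$ directly.
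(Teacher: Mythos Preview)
Your forward direction is the same as the paper's: both simply invoke Proposition~\ref{ns-structure3} (the paper also mentions Proposition~\ref{ns-structure1}(i) as an alternative) to realise $\cM$ as a sequential composition of CPTP maps.

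For the converse you take a genuinely different route. The paper argues by contradiction and reduces to the classical binary case: it invokes Proposition~\ref{binary-signalling} to squeeze a B$\to$A--signalling $\Pi$ down to a two-input/two-output classical box $Q$ via local pre- and post-processing, and then checks by hand that composing $Q$ with the one-bit identity channel produces an output distribution that fails to normalise. Your argument instead works directly at the Choi level: you express $\tr_{B_o}J_\cM$ as a bilinear pairing of $R=\tr_{B_o}\Omega$ with $J_\cN$, observe that differences $J_{\cN_1}-J_{\cN_2}$ span all Hermitian $N$ with $\tr_{B_i'}N=0$, and then reduce to the Choi-level formulation of B~$\not\to$~A no-signalling from Section~\ref{sec:definitions}. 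Both arguments are valid. Yours is shorter and avoids the auxiliary Proposition~\ref{binary-signalling}; the paper's has the virtue of exhibiting an explicit $\cN$ (the noiseless bit channel, after suitable pre/post-processing) for which trace preservation fails, which is operationally more informative. The spanning step you flag as a possible obstacle is fine: for small enough $\epsilon$, $\epsilon N + \1\ox\rho$ with $\rho>0$ is a valid Choi matrix whenever $\tr_{B_i'}N=0$, so scalar multiples of all such $N$ arise as differences. The only real caution is the one you already note---keeping the system labels and partial transposes straight in the link product---but the skeleton of the argument is sound.
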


\begin{proof}
The sufficiency is just a direct corollary of item i) of Proposition \ref{ns-structure1} or Proposition \ref{ns-structure3}. The later actually provides a physical realization of this map $\cM$. This constraint is logically reasonable as B to A no-signalling simply means that the input of the channel $\cN$ cannot depend on its output (otherwise we will have a closed loop, against the causality principle). 

Now we will prove that the constraint of no-signalling from B to A is also necessary 
for $\cM$ to be CPTP for any CPTP $\cN$.  Our strategy is to first prove this fact for two-input and two-output binary classical channel $Q$. Then  we can apply Proposition \ref{binary-signalling} for the general case.

We can actually show that if $Q$ is signalling from B to A, and the composition of $Q$ and one-bit noiseless classical channel is not a legal classical channel anymore (the output for some input is not a legal probability distribution). 

By contradiction, assume that the composition of $\Pi$ and any CPTP map $\cN$ is always 
CPTP, but that $\Pi$ is B to A signalling. Applying Proposition \ref{binary-signalling}, we can find a two-input and two-output binary classical channel $Q=(X\times Y, Q(ab|xy), A\times B)$ such that $Q$ is constructed from $\Pi$ by simply performing pre- and/or post-processing such that  i) $Q=(\cA_1\ox \cB_1)\circ \Pi\circ (\cA_0\ox \cB_0)$;  ii) $Q$ is also B to A signalling; iii) the composition of $Q$ and any classical channel $A$ to $Y$ is also a legal classical channel. We will show such $Q$ does not exist, hence complete the proof. Actually, ii) implies there are $x\in A, a\in A, y,y'\in Y$, such that 
$$\sum_{b=0}^1 Q(ab|xy)\neq \sum_{b=0}^1 Q(ab|xy').$$
w.o.l.g, we can assume $x=0$, $a=0$, $y=0$, and $y'=1$. Then the above inequality can be rewritten explicitly as 
$$Q(00|00)+Q(01|00)\neq Q(00|01)+Q(01|01).$$
On the other hand, the composition of $Q$ and a one-bit noiseless classical channel $N(y|a)=\delta_{ya}$ is given by
$$M(b|x)=\sum_{ya} Q(ab|xy)N(y|a)=\sum_{a=0}^1 Q(ab|xa)=Q(0b|x0)+Q(1b|x1).$$
In particular, taking $x=0$ and applying the above inequality, we have 
\[\begin{split}
  \sum_{b=0}^1 M(b|0) &=   Q(00|00)+Q(01|00)+Q(10|01)+Q(11|01) \\
                      &\neq Q(00|01)+Q(01|01)+Q(10|01)+Q(11|01)=1.
\end{split}\]
This contradicts the fact that $M(b|x)$ is a classical channel.
\end{proof}

Thus, to guarantee that the composition $\cM$ of a two-input and two-output channel $\Pi$ and any quantum channel from A to B is always a CPTP map, $\Pi$ is only required to be B to A no-signalling. The constraint of no-signalling from A to B is another natural requirement as we have assumed that A cannot communicate to B directly, and the given channel $\cN$ is the only directed resource from A to B we can use. These considerations naturally lead us to quantum no-signalling correlations. 

\section{Semidefinite programmes for zero-error communication \protect\\ and simulation assisted by quantum no-signalling correlations}
\label{sec:comm+sim-SDPs}

\subsection{Zero-error assisted communication capacity}
\label{subsec:1-shot-capacity}
For any integer $M$, we denote by $M_a$ and $M_b$ classical registers with size $M$.  
We use $\cI_M: M_a\rightarrow M_b$ to denote the noiseless classical channel 
that can send $M$ messages from $A$ to $B$, \emph{i.e.}
$\cI_M(\ketbra{m}{m'}_{M_a})=\delta_{mm'} \ketbra{m}{m}_{M_b}$, or equivalenyly,
\[
  \cI_M(\rho) = \sum_{m=1}^M \bigl( \tr \rho \ketbra{m}{m}_{M_a} \bigr) \ketbra{m}{m}_{M_b}.
\]

If a channel $\cN$ can simulate $\cI_M$, we say that $M$ messages  
can be perfectly transmitted by one use of $\cN$. 
The problem we are interested in is to determine the largest possible $M$, 
which will be called one-shot zero-error classical transmission capability 
of $\cN$ assisted with no-signalling correlations. 
We restate here the result that we will prove in this subsection.

\medskip\noindent
{\bf Theorem~\ref{capacity-1-shot}}\ 
\textit{The one-shot zero-error classical capability (quantified as the largest
  number of messages) of 
  $\cN$ assisted by quantum no-signalling correlations depends only on
  the non-commutative graph $K$, and is given by the 
  integer part of the following SDP:
  \[\begin{split}
    \U(\cN) & =\U(K) \\
            & =\max \tr S_A\ \text{ s.t. }\ 0\leq E_{AB}\leq S_A\ox \1_B, 
                                         \tr_A E_{AB}=\1_B, \tr P_{AB}(S_A\ox \1_B-E_{AB})=0,
  \end{split}\]
    where $P_{AB}$ denotes the projection onto the the subspace $(\1\ox K)\ket{\Phi}$. \\
  Hence we are motivated to call $\U(K)$ the 
  \emph{no-signalling assisted independence number of $K$.}}

\medskip
\begin{proof}
To make the indices of the Choi-Jamio\l{}kowski matrices more readable, 
we reserve the unprimed letters such as $A_i$ and $B_i$ for the reference systems 
(thus as the indices of Choi-Jamio\l{}kowski matrices), and use the primed versions 
$A_i'$ and $B_i'$ as the inputs to the quantum channels.

We will use a general no-signalling correlation $\Pi$  with Choi-Jamio\l{}kowski matrix $\O$  and the channel $\cN$, to exactly simulate a noiseless classical channel $\cI_M$. Our goal is to determine the maximum integer $M$ when such simulation is possible. In this case  both $A_i' = M_a'$ and $B_o = M_b$ are classical, and $A_o = A'$, $B_i' = B$. We will show that $\O$ can be chosen as the following form:
\begin{equation}\label{omega-cns0}
\O_{M_aM_bAB} = \frac{1}{M}D_{M_aM_b} \ox E_{AB} + \frac{1}{M}(\1-D)_{M_aM_b} \ox F_{AB},
\end{equation}
where $D=\sum_m \ketbra{mm}{mm}$ is the Choi-Jamio\l{}kowski matrix of the noiseless classical channel $\cI_M$, and $E$ and $F$ are positive semidefinite operators. We will show that the no-signalling conditions are translated into
\begin{align}
  \label{ns-condition1}
  \tr_A E_{AB} = \tr_A F_{AB} = \1_B, &\quad ({\rm A}\not\rightarrow {\rm B}), \\
  \label{ns-condition2}
  \frac{1}{M}E_{AB} + \left(1-\frac{1}{M}\right)F_{AB} = \sigma_A \ox \1_B, &\quad ({\rm B}\not\rightarrow {\rm A}),
\end{align}
with some state $\sigma$.
In other words, both $E_{AB}$ and $F_{AB}$ are Choi-Jamio\l{}kowski matrices 
of channels from $\cL(B)$ to $\cL(A)$
whose weighted sum is equal to a constant channel mapping quantum states in $B$
to a fixed state $\sigma_A$.

Let us first have a closer look at the the mathematical structure of $\O$.
Rewrite $\O$ into the following form,
$$\O_{M_aM_bAB} 
  = \frac{1}{M}D_{M_aM_b}\ox E_{AB} + \left(1-\frac{1}{M}\right)\tilde{D}_{M_aM_b} \ox F_{AB},$$
where $\tilde{D}_{M_aM_b}=\frac{1}{M-1}(\1-D)_{M_aM_b}$ is the Choi-Jamio\l{}kowski matrix of the 
classical channel $\widetilde{\cI}_M$ that sends each $m$ into a uniform 
distribution of $m'\neq m$. In other words, the no-signalling correlation 
$\Pi$ should have the following form:
\begin{equation}\label{simplified-ns}
  \Pi=\frac{1}{M}\cI_M\ox\cE+\left(1-\frac{1}{M}\right)\widetilde{\cI}_M\ox \cF,
\end{equation}
and $\cE$ and $\cF$ are CP maps from $B$ to $A$ corresponding to Choi-Jamio\l{}kowski 
matrices $E_{AB}$ and $F_{AB}$, respectively. 

Now we can directly apply Lemma \ref{rank2-ns} to $\Pi$ in Eq. (\ref{simplified-ns}) 
to obtain the no-signalling constraints in Eqs. (\ref{ns-condition1}) and (\ref{ns-condition2}). 
First, the constraints for ${\rm A}\not\rightarrow {\rm B}$ are automatically satisfied due 
to the special forms of $\cI_M$ and $\widetilde{\cI}_M$, and $p=1/M$ is the unique 
number such that $\frac1M \cI_M+\left(1-\frac1M\right)\widetilde{\cI}_M$ is constant. 
The no-signalling constraint ${\rm B}\not\rightarrow {\rm A}$ is equivalent to that $\cE$ and $\cF$ 
are CPTP maps, and  $\frac1M \cE + \left(1-\frac1M\right)\cF$ is some constant channel. 

When composing $\cN$ with $\Pi$ in Eq. (\ref{simplified-ns}), we have the following channel
\[
  \cM = \frac{1}{M}\cI_M\tr(\cN\circ\cE) 
         + \left(1-\frac{1}{M}\right)\widetilde{\cI}_M \tr(\cN\circ\cF).
\]
The zero-error constraint requires $\cM=\cI_M$, which is equivalent to $\tr(\cN\circ \cF)=0$, or 
$$\tr F_{AB}J_{AB}=0,$$
where $J_{AB}=(\cN^\dag \ox \id_B)(\Phi_{B'B})$ is the Choi-Jamio\l{}kowski matrix 
of $\cN$ (strictly speaking, this is the complex conjugate of the Choi-Jamio\l{}kowski 
matrix of $\cN$, but this does not make any difference to the problem we are studying).
  
Since $E_{AB}$ and $F_{AB}$ depend on each other, we can eliminate one of them. 
For instance, we only keep $E_{AB}$. Then the existence of $F_{AB}$ will be equivalent to 
$$\tr_A E_{AB}=\1_B, 0\leq E_{AB}\leq M\sigma_A\ox \1_B, \tr P_{AB}(M\sigma_A\ox \1_B-E_{AB})=0.$$
By absorbing $M$ into $\sigma_A$ and introducing $S_A=M\sigma_A$, we get $M$ as 
the integer part of
\begin{equation}\label{oneshot-cns0}
  \max \tr S_A \text{ s.t. } 0\leq E_{AB}\leq S_A\ox \1_B, 
                             \tr_A E_{AB}=\1_B,\ \tr P_{AB}(S_A\ox \1_B-E_{AB})=0,
\end{equation}
which ends the proof.
\end{proof} 

Now we provide a detailed derivation of the form (\ref{omega-cns0}) of the 
no-signalling correlation. Assume the channel is $\cN: \cL(A)\rightarrow \cL(B'')$, and 
the no-signalling correlation we will use is 
$\Pi: \cL(M_a'\ox B')\rightarrow \cL(A\ox M_b)$.  
Suppose we can send $M$ messages exactly by one use of the channel 
$\cN$ when assisted by $\Pi$. Then we should have (refer to Fig. 3 and note $A_i=M_a'$, $A_o=A$, $B=B''$, $B_i'=B$, and $B_o=M_b$)
\begin{equation}\label{teleportation}
\tr_{BB''}[\Phi_{BB''} ((\cN\ox \id_{M_b B})\circ (\Pi\ox \id_{B}))(\ketbra{m}{m}_{M_a'}\ox \Phi_{B'B})]=\ketbra{m}{m}_{M_b}\ \forall m\in\{1,\ldots,M\}.
\end{equation}
The Choi-Jamio\l{}kowski matrix of $\Pi$, say $\O_{M_aM_bAB}$, should satisfy the following constraints
$$\O_{M_aM_bAB}\geq 0,\ \tr_{AM_b} \O_{M_aM_bAB}=\1_{M_aB}.$$
Furthermore, noticing that
$$\ketbra{m}{m}_{M_a'}=\tr_{M_a}(\Phi_{M_a'M_a}(\1_{M_a'}\ox\ketbra{m}{m}_{M_a}^{T}))
      \ \text{and}\ 
  \ketbra{m}{m}=\ketbra{m}{m}^T,$$
we have
$$(\Pi\ox \id_{B})(\ketbra{m}{m}_{M_a'}\ox \Phi_{B'B})
  =\tr_{M_a} (\O_{M_aM_bAB}\cdot (\ketbra{m}{m}_{M_a}\ox \1_{M_bAB})).$$
Thus the left hand side (l.h.s) of Eq. (\ref{teleportation}) gives, for all
$m\in\{1,\ldots,M\}$,
$$\tr_{BB'' M_a} [((\cN\ox \id_{M_aM_bB})\O_{M_aM_bAB})
                  (\Phi_{BB''}\ox \ketbra{m}{m}_{M_a}\ox \1_{M_b})]=\ketbra{m}{m}_{M_b}.$$
In other words, for any $m\neq m'$, we have
$$\tr \bigl( ((\cN\ox \id_{M_aM_bB})\O_{M_aM_bAB})(\Phi_{BB''}\ox\ketbra{m}{m}_{M_a}\ox \ketbra{m'}{m'}_{M_b}) \bigr)
  =0.$$

The next crucial step is to simplify the form of $\O$. 
We will study all the possible forms of  $\O$ satisfying the 
above equation. (Any such operator is said to be feasible). 
Since both $M_a$ and $M_b$ are classical registers, we can 
perform the dephasing operation on them and assume $\O$ has 
the following form
$$\O=\sum_{m=1}^M\sum_{m'=1}^M\ketbra{mm'}{mm'}_{M_aM_b}\ox \O_{AB}^{(mm')},$$
where $\O_{AB}^{(mm')}$ might not be identical for different pairs $(m,m')$. 

To further simplify the form of $\O$, we next exploit the permutation 
invariance of $\cI_M$. More precisely, 
for any $M\times M$ permutation $\tau\in S_M$, if $\O_{M_aM_bAB}$ is feasible, then
$$\O_{M_aM_bAB}'
  =(\tau_{M_a}\ox \tau_{M_b}\ox \1_{AB})\O_{M_aM_bAB}(\tau_{M_a}\ox \tau_{M_b}\ox \1_{AB})^\dag$$
is also feasible. Furthermore, if $\O'$ and $\O''$ are feasible, 
so is any convex combination $\lambda\O'+(1-\lambda)\O''$ for $0\leq \lambda\leq 1$.
With these two observations, from any feasible $\O$ that has been dephased, 
we can always construct a new feasible $\tilde{\O}$ by performing the 
following twirling operation
$$\tilde{\O}_{M_aM_bAB}
     =\frac{1}{M!}\sum_{\tau\in S_M}
      (\tau_{M_a}\ox \tau_{M_b}\ox \1_{AB})\O_{M_aM_bAB}(\tau_{M_a}\ox \tau_{M_b}\ox \1_{AB})^\dag.$$
By applying Schur's Lemma to the symmetry group $\{\tau_{M_a}\otimes \tau_{M_b}:\tau\in S_M\}$, 
we can see that $\tilde{\O}$ can be chosen as the following form
$$\tilde{\O}_{M_aM_b AB}=\sum_{m} \ketbra{mm}{mm}_{M_aM_b}\ox \O_{AB}^{=}
                          +\sum_{m\neq m'}\ketbra{mm'}{mm'}\ox \O_{AB}^{\neq},$$
which has exactly  the same form as Eq. (\ref{omega-cns0}).

\medskip
The SDP characterization of Theorem~\ref{capacity-1-shot},
\begin{equation}\begin{split}
  \label{eq:Upsilon}
  \U(K) &= \max \tr S_A \ \text{ s.t. }\  0 \leq E_{AB} \leq S_A \ox \1_B, \\
        &\phantom{= \max \tr S_A \text{ s.t. }} \tr_A E_{AB} = \1_B, \\
        &\phantom{= \max \tr S_A \text{ s.t. }} \tr P_{AB}(S_A\ox\1_B-E_{AB}) = 0,
\end{split}\end{equation}
has a dual. It is given by
\begin{equation}\begin{split}
  \label{eq:Upsilon-dual}
  \U(K) &= \min \tr T_B \ \text{ s.t. }\  z \geq 0,\ W_{AB} \geq 0, \\
        &\phantom{= \min \tr T_B \text{ s.t. }} z P_{AB} - W_{AB} \leq \1_A \ox T_B, \\
        &\phantom{= \min \tr T_B \text{ s.t. }} \tr_B (z P - W)_{AB} = \1_A     \\
        &= \min \tr T_B \ \text{ s.t. }\  F_{AB} \leq \1_A \ox T_B,  \\
        &\phantom{= \min \tr T_B \text{ s.t. }} \tr_B F_{AB} = \1_A, \\
        &\phantom{= \min \tr T_B \text{ s.t. }} (\1-P)_{AB}F_{AB}(\1-P)_{AB} \leq 0,
\end{split}\end{equation}
which can be derived by the usual means; by strong duality, which applies
here, the values of both the primal and the dual SDP coincide.

\medskip
For an easy example of the evaluation of these SDPs, $\U(\Delta_\ell) = \ell$
for the cq-graph $\Delta_\ell$ of the noiseless classical channel $\cI_\ell$
of $\ell$ symbols. The Kraus operator space of the noiseless $\ell$-level quantum channel
$\id_\ell$ is $\CC\1$, and has $\U(\CC\1)=\ell^2$.

\medskip
\begin{remark}
  By inspection of the primal SDP, for any $K_1$ and $K_2$,
  $\U(K_1\ox K_2) \geq \U(K_1) \U(K_2)$,
  because the tensor product of feasible solutions of the SDP (\ref{eq:Upsilon})
  is feasible for $K_1\ox K_2$. In particular,
  $\U(K\ox\Delta_\ell) \geq \ell\,\U(K)$. We do not know whether equality
  holds here (sometimes it does); the most natural way for proving this
  would be to use the dual SDP (\ref{eq:Upsilon-dual}), but to use it to
  show ``$\leq$'' by tensoring together dual feasible solutions would
  require that $T\geq 0$.

  Leaving this aside, this last observation is why $\U(K)$ is rightfully called 
  the no-signalling assisted independence number, and not its integer part. 
  Indeed, the number of messages we can send via $K\ox\Delta_\ell$ is 
  $\lfloor \U(K\ox\Delta_\ell) \rfloor > \ell \lfloor \U(K) \rfloor$,
  for non-integer $\U(K)$ and sufficiently large $\ell$.
\end{remark}

\subsection{Zero-error assisted simulation cost}
\label{subsec:1-shot-cost}
For convenience we restate here the two theorems already announced in the 
introduction.

\medskip\noindent
{\bf Theorem~\ref{review-simulation-1-shot}}\ 
\textit{The one-shot zero-error classical simulation cost (quantified as the minimum
  number of messages) of a quantum channel 
  $\cN$ under quantum no-signalling assistance is given by 
    $\lceil 2^{-H_{\min}(A|B)_{J}}\rceil$. 
  Here, Here, $J_{AB}=(\id_A\ox \cN)\Phi_{AA'}$ is the Choi-Jamio\l{}kowski  matrix of $\cN$, and $H_{\min}(A|B)_{J}$ is the conditional min-entropy 
  defined as follows \cite{KRS2009,Tomamichel-PhD}:
  \[
    2^{-H_{\min}(A|B)_{J}} = \S(\cN)= \min \tr T_B,\ {\rm s.t. }\ J_{AB}\leq \1_A\ox T_B.
  \]
}   

\medskip\noindent
{\bf Theorem~\ref{thm:L-1-shot-cost}}\ 
\textit{The one-shot zero-error classical simulation cost of a Kraus operator space $K$ under quantum no-signalling assistance
  is given by the integer ceiling of  
  $$\S(K)=\min \tr T_B \text{ s.t. } 0\leq F_{AB}\leq \1_A\ox T_B,\ 
                                     \tr_B F_{AB}=\1_A,\ \tr F_{AB}(\1_{AB}-P_{AB})=0,$$
  where $P_{AB}$ denotes the projection onto the subspace $(\1\ox K)\ket{\Phi}$. }

\medskip
\begin{proof}
In this case we have $A_o = M_a$ and $B_i = M_b$ are classical, $A_i = A$, $B_o = B$. We will show that w.l.o.g.
\begin{equation}\label{simulation:ns}
 \O_{ABM_aM_b} = \frac{1}{M}E_{AB}\ox D_{M_aM_b} + \left(1-\frac{1}{M}\right)F_{AB}\ox \tilde{D}_{M_aM_b},
\end{equation}
with positive semidefinite $E_{AB}$ and $F_{AB}$. 
Then according to Lemma \ref{rank2-ns}, the no-signalling conditions are equivalent to
\begin{align*}
  \tr_B E_{AB} = \tr_B F_{AB} &= \1_A, \\
  \frac{1}{M} E_{AB} + \left(1-\frac{1}{M}\right) F_{AB} &= \1_A \ox \gamma_B,
\end{align*}
for a state $\gamma$.

Now, there are two variants of the problem. First, to simulate the precise channel 
$\cN$, $E_{AB}$ has to be equal to $J_{AB}$. By identifying $T_B=M\gamma_B$ and eliminating 
$F_{AB}$, we get the solution for the minimal $M$ as smallest integer 
\[
  \geq 2^{-H_{\min}(A|B)_{J}} = \min \tr T_B \text{ s.t. } J_{AB}\leq \1_A\ox T_B,
\]
proving Theorem~\ref{review-simulation-1-shot}.
The latter is then also asymptotic cost of simulating many copies of $N$ because 
the conditional min-entropy is additive, Eq. (\ref{eq:channel-sim}).

Furthermore, to simulate the ``cheapest'' $\cN$ with Choi-Jamio\l{}kowski 
matrix supporting on $P_{AB}$, we get
\[
  \min \tr T_B \text{ s.t. } 0 \leq F_{AB} \leq \1_A \ox T_B,\ 
                             \tr_B F_{AB} = \1_A,\ \tr F_{AB}(\1_{AB}-P_{AB}) = 0,
\]
which is the claim of Theorem~\ref{thm:L-1-shot-cost}.
\end{proof}

Now we provide a more detailed derivation of the form 
of no-signalling correlations. Suppose we can use a noiseless 
classical channel $\cI_M$ to simulate a quantum channel 
$\cN:\cL(A')\rightarrow \cL(B)$. The no-signalling correlation we 
will use is $\Pi: \cL(A'\ox M_b')\rightarrow \cL(M_a\ox B)$. 
Then we have
$$\cN(\rho_{A'})
  =\sum_{m=1}^M \tr_{M_a}[(\ketbra{m}{m}_{M_a}\ox \1_{B_o})\Pi(\rho_{A'}\otimes \ketbra{m}{m}_{M_b'})].$$
Denote the Choi-Jamio\l{}kowski matrix of $\Pi$ as 
\[
  \O_{M_aM_bAB}=(\id_{M_bA}\ox \Pi)(\Phi_{M_bM_b'}\ox \Phi_{AA'}).
\]
Thus the Choi-Jamio\l{}kowski matrix of $\cN$ is given by
\begin{align*}
  J_{AB} =  (\id_{A}\otimes \cN)(\Phi_{AA'})
         &= \sum_{m}\tr_{M_a}[\ketbra{m}{m}_{M_a} (\id\ox\Pi)(\Phi_{AA'}\ox\ketbra{m}{m}_{M_b'})] \\
         &= \sum_m \tr_{M_aM_b}[\ketbra{mm}{mm}_{M_aM_b}\O_{M_aM_bAB}] \\
         &= \tr_{M_aM_b} D_{M_aM_b}\O_{M_aM_bAB},
\end{align*}
where $D_{M_aM_b}=\sum_{m} \ketbra{mm}{mm}_{M_aM_b}$ is the Choi-Jamio\l{}kowski
matrix of the noiseless classical channel ${\cI}_M$, as before.

In summary, to simulate $\cN$ exactly, we have
\[
  J_{AB}=\tr_{M_aM_b}D_{M_aM_b} \O_{ABM_aM_b} \text{ s.t. } \O\ge 0,\ \tr_{M_aB} \O = \1_{AM_b},
\]
and the no-signalling constraints on $\O$.

By dephasing and twirling classical registers $M_a$ and $M_b$ (refer to the case of zero-error communication), we can choose w.l.o.g  $\O_{M_aM_bAB}$ to have the form in Eq. (\ref{simulation:ns}).

\medskip
We end this subsection, like the previous one, recording the primal and dual
SDP form of $\S(K)$, again equal by strong duality:
\begin{equation}\begin{split}
  \label{eq:Sigma}
  \S(K) &= \min \tr T_B \ \text{ s.t. }\  0 \leq F_{AB} \leq \1_A \ox T_B, \\
        &\phantom{= \min \tr T_B \text{ s.t. }} \tr_B F_{AB} = \1_A, \\
        &\phantom{= \min \tr T_B \text{ s.t. }} \tr (\1-P)_{AB}F_{AB} = 0.
\end{split}\end{equation}
Its dual SDP is
\begin{equation}\begin{split}
  \label{eq:Sigma-dual}
  \S(K) &= \max \tr S_A \ \text{ s.t. }\  0 \leq E_{AB},\ \tr_A E_{AB} = \1_B, \\
        &\phantom{= \max \tr S_A \text{ s.t. }} P_{AB}(S_A \ox \1_B - E_{AB})P_{AB} \leq 0.
\end{split}\end{equation}

\medskip
\begin{remark}
Just as for the no-signalling assisted independence number $\U(K)$,
$\S(\Delta_\ell) = \ell$ and hence $\Sigma(K\ox\Delta_\ell) \leq \ell\,\Sigma(K)$.
For cq-graphs $K$ we can say a bit more: since $\Delta_\ell$ is a cq-graph, too,
we find $\Sigma(K\ox\Delta_\ell) = \ell\,\Sigma(K)$, by 
Proposition~\ref{prop:Sigma-cq-multi} in Section~\ref{subsec:asymptotic-cost-cq} below.
\end{remark}

\section{Towards asymptotic capacity and cost}
\label{sec:asymptotics}
For a classical channel with bipartite graph $\Gamma$, such that
\[
  K = \operatorname{span}\{ \ket{y}\!\bra{x} : (x,y) \text{ edge in } \Gamma\}
\]
is a special type of cq-graph, it was shown in~\cite{CLMW2011} that 
\[
  C_{0,\NS}(K) = S_{0,\NS}(K) = \log \Aram(K) = \log \alpha^*(\Gamma),
\]
where $\alpha^*(\Gamma)$ is the \emph{fractional packing number}
\cite{Shannon1956}
(which is equal to its \emph{fractional covering number}):
\begin{equation}\begin{split}
  \label{eq:fractional-pack-cov}
  \alpha^*(\Gamma) 
           &= \max \sum_x p_x \ \text{ s.t. }\ \  \sum_x p_x \Gamma(y|x) \leq 1~\forall y,
                                             \ 0 \leq p_x \leq 1~\forall x \ , \\
           &= \min \sum_j q_j \ \text{ s.t. }\ \  \sum_y q_y \Gamma(y|x) \geq 1~\forall x,
                                             \ \ q_y \geq 1~\forall y .
\end{split}\end{equation}
In fact, there it was shown (and one can check immediately from
Theorems~\ref{capacity-1-shot} and \ref{thm:L-1-shot-cost}) 
that in this case
\[
  \U(K) = \S(K) = \Aram(K) = \alpha^*(\Gamma).
\]
Furthermore, to attain the simulation cost $S_{0,\NS}(\cN)$, as well
as $S_{0,\NS}(K)$, asymptotically no non-local resources beyond
shared randomness are necessary \cite{CLMW2011}.
We will see in the following that quantum channels exhibit more complexity.
Indeed, while $2^{-H_{\min}(A|B)_J}$ is clearly multiplicative in the channel (or
equivalently in $J$), cf.~\cite{KRS2009,Tomamichel-PhD}, 
and $\alpha^*(\Gamma)$ is well-known to be multiplicative under 
direct products of graphs, cf.~\cite{CLMW2011},
by contrast $\U$ and $\S$ are only super- and sub-multiplicative, respectively:
\begin{align*}
  \U(K_1 \ox K_2) &\geq \U(K_1)\U(K_2), \\
  \S(K_1 \ox K_2) &\leq \S(K_1)\S(K_2).
\end{align*}
We know that the first inequality can be strict 
(see subsection \ref{2-state-example} below),
and suspect that the second can be strict, too. Thus we are facing a 
regularization issue to compute the zero-error capacity and the zero-error
simulation cost, assisted by no-signalling correlations:
\begin{align*}
  C_{0,\NS}(K) &= \lim_{n\rightarrow\infty} \frac{1}{n}\log \U(K^{\ox n}) 
                = \sup_n \frac{1}{n}\log \U(K^{\ox n}), \\
  S_{0,\NS}(K) &= \lim_{n\rightarrow\infty} \frac{1}{n}\log \S(K^{\ox n}) 
                = \inf_n \frac{1}{n}\log \S(K^{\ox n}).
\end{align*}
A standard causality argument, together with entanglement-assisted
coding for the simulation of quantum channels \cite{BSST2003}, implies also
\begin{equation}
  C_{0,\NS}(K) \leq C_{\min {\rm E}}(K) \leq S_{0,\NS}(K),
\end{equation}
where $C_{\min{\rm E}}(K)$ is the minimum of the entanglement-assisted classical 
capacity of quantum channels $\cN$ such that $K(\cN)<K$, i.e., $K(\cN)$ is a subspace of $K$. With the
quantum mutual information $I(\rho;\cN) := I(A:B)_\sigma$ of the state
$\sigma^{AB} = (\id\ox\cN)\phi_\rho$, where $\phi_\rho$ is a purification
of $\rho$:
\begin{equation}\begin{split}
  \label{eq:C-min-E}
  C_{\min {\rm E}}(K) &= \min C_{\rm E}(\cN) \text{ s.t. } K(\cN) < K                          \\
       &= \min_{\cN \text{ s.t.} \atop \cK({\cal N}) < K} \max_\rho \ \ \, I(\rho;{\cal N}) \\
       &= \ \ \, \max_\rho \min_{\cN \text{ s.t.} \atop \cK({\cal N}) < K} I(\rho;{\cal N}),
\end{split}\end{equation}
where the last equality follows from the Sion's minimax theorem \cite{Sion:minimax}. For more properties, including the above minimax formulas,
and an operational interpretation of $C_{\min {\rm E}}(K)$
as the entanglement-assisted capacity of $K$ (i.e.~the maximum rate of block 
codings adapted simultaneously to all channels $\cN^{(n)}$ such that
$K(\cN^{(n)}) < K^{\ox n}$), we refer the reader to \cite{DSW2013}.

To put better and easier to use bounds on $C_{0,\NS}$ and $S_{0,\NS}$,
we introduce the semidefinite packing number:
\begin{equation}\begin{split}
  \label{eq:Aram}
  \Aram(K) &= \max \tr S_A \ \text{ s.t. }\  0 \leq S_A,\ \tr_A P_{AB}(S_A\ox\1_B) \leq \1_B \\
           &= \min \tr T_B \ \text{ s.t. }\  0 \leq T_B,\ \tr_B P_{AB}(\1_A\ox T_B) \geq \1_A,
\end{split}\end{equation}
which we have given in primal and dual form; this generalizes the
form given in Eq.~(\ref{eq:Aram-cq}) for cq-graphs. It
was suggested to us in the past by Aram Harrow \cite{Harrow2010} for
its nice mathematical properties.

A slightly modified and more symmetric form is given by
\begin{equation}\begin{split}
  \label{eq:revised-Aram}
  \widetilde{\Aram}(K) &= \max \tr S_A \ \text{ s.t. }\ 0 \leq S_A,\ 
                                                    \tr_A P_{AB}(S_A\ox\1_B)P_{AB} \leq \1_B \\
           &= \min \tr T_B \ \text{ s.t. }\  0 \leq T_B,\ \tr_B P_{AB}(\1_A\ox T_B)P_{AB} \geq \1_A,
\end{split}\end{equation}
again both in primal and dual form. 

From these it is straightforward to see that $\Aram(K)$ and 
$\widetilde{\Aram}(K)$ are both sub- and super-multiplicative, and so
\begin{align*}
  \Aram(K_1\ox K_2)             &= \Aram(K_1) \Aram(K_2), \\
  \widetilde{\Aram}(K_1\ox K_2) &= \widetilde{\Aram}(K_1) \widetilde{\Aram}(K_2).
\end{align*}

Both definitions reduce to the familiar notion of fractional packing number 
$\alpha^*(\Gamma)$ when $K$ is associated to a bipartite graph $\Gamma$,
coming from a classical channel $\cN$.
Furthermore, $\Aram(K)$ and $\widetilde{\Aram}(K)$ are equal for cq-graphs. 
However, we will see later that in general they are different.

\subsection{Revised semidefinite packing number and simulation} 
\label{subsec:lower-bound-on-sim}

\begin{proposition}
  For any non-commutative bipartite graph $K < \cL(A \rightarrow B)$,
  \[
    \S(K) \geq \widetilde{\Aram}(K).
  \]
  Consequently, $S_{0,{\rm NS}}(K) \geq \log \widetilde{\Aram}(K)$.
\end{proposition}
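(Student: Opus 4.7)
My plan is to compare the dual SDP~(\ref{eq:Sigma-dual}) for $\S(K)$ with the primal SDP~(\ref{eq:revised-Aram}) for $\widetilde{\Aram}(K)$. Both are maximizations of $\tr S_A$ over $S_A\geq 0$, but the dual of $\S(K)$ carries an additional variable $E_{AB}\geq 0$ together with two constraints, $\tr_A E_{AB}=\1_B$ and the sandwich inequality $P_{AB}(S_A\ox\1_B - E_{AB})P_{AB}\leq 0$. Taking the partial trace over $A$ of the latter shows that any $\S$-dual-feasible $S_A$ must satisfy $\tr_A P_{AB}(S_A\ox\1_B)P_{AB}\leq\1_B$, which is exactly the $\widetilde{\Aram}$ constraint. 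So the natural strategy is the converse: upgrade any $\widetilde{\Aram}$-feasible $S_A$ to a dual-feasible pair $(S_A,E_{AB})$ for $\S(K)$ with the same objective value $\tr S_A$.

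The construction I propose is explicit. Set $\tilde E_{AB}:=P_{AB}(S_A\ox\1_B)P_{AB}\geq 0$ and $X_B:=\1_B-\tr_A\tilde E_{AB}$, which is positive semidefinite precisely because $S_A$ is $\widetilde{\Aram}$-feasible. Fixing any state $\sigma_A$ on $A$, define
\[
  E_{AB}:=\tilde E_{AB}+\sigma_A\ox X_B\geq 0,
\]
for which $\tr_A E_{AB}=\tr_A\tilde E_{AB}+X_B=\1_B$ by construction. For the sandwich condition, using that $P_{AB}$ is a projection gives $P_{AB}\tilde E_{AB}P_{AB}=\tilde E_{AB}$, while the added term $P_{AB}(\sigma_A\ox X_B)P_{AB}$ is manifestly positive semidefinite; hence $P_{AB}E_{AB}P_{AB}\geq\tilde E_{AB}=P_{AB}(S_A\ox\1_B)P_{AB}$. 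Thus $(S_A,E_{AB})$ is feasible for the dual of $\S(K)$, and maximizing over $S_A$ yields $\S(K)\geq\widetilde{\Aram}(K)$.

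For the asymptotic statement I will invoke the multiplicativity of $\widetilde{\Aram}$ recorded immediately before the proposition, obtaining $\S(K^{\ox n})\geq\widetilde{\Aram}(K^{\ox n})=\widetilde{\Aram}(K)^n$, hence
\[
  S_{0,\NS}(K)=\inf_{n\geq 1}\frac{1}{n}\log\S(K^{\ox n})\geq\log\widetilde{\Aram}(K).
\]
I do not expect a serious obstacle; the one delicate point is that $E_{AB}$ must simultaneously obey positivity, partial-trace normalization and sandwich domination, and the three constraints interact through $P_{AB}$. The key realization making the construction work cleanly is that $P_{AB}^2=P_{AB}$, so $\tilde E_{AB}$ is invariant under the sandwich by $P_{AB}$; any positive semidefinite correction (like $\sigma_A\ox X_B$) then only improves the sandwich bound while being free to fix the partial trace.
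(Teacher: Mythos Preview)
Your proof is correct, but it runs in the opposite direction from the paper's. The paper works on the \emph{minimization} side: it takes an optimal pair $(T_B,F_{AB})$ for the primal SDP~(\ref{eq:Sigma}) of $\S(K)$, observes that $F_{AB}(\1-P)_{AB}=0$ forces $P_{AB}F_{AB}P_{AB}=F_{AB}$, and hence that $\tr_B P_{AB}(\1_A\ox T_B)P_{AB}\geq\tr_B F_{AB}=\1_A$, so $T_B$ is feasible for the dual (minimization) form of $\widetilde{\Aram}(K)$. That argument is a one-liner because no auxiliary variable needs to be constructed. You instead work on the \emph{maximization} side, lifting any $\widetilde{\Aram}$-primal-feasible $S_A$ to a $\S$-dual-feasible pair $(S_A,E_{AB})$; this requires the explicit padding $E_{AB}=P(S_A\ox\1)P+\sigma_A\ox X_B$, which is a nice and reusable construction but a bit more work. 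Both approaches exploit the idempotence $P_{AB}^2=P_{AB}$ at the crucial step.

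One small remark: the motivational claim in your first paragraph---that taking $\tr_A$ of the sandwich inequality shows every $\S$-dual-feasible $S_A$ is $\widetilde{\Aram}$-feasible---is not obviously justified, since $\tr_A P_{AB}E_{AB}P_{AB}\leq\1_B$ does not follow from $\tr_A E_{AB}=\1_B$ alone. You correctly note this would be the wrong direction anyway and prove the converse, so the proof stands; but you may want to soften or drop that remark.
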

\begin{proof}
From the SDP (\ref{eq:Sigma}) for $\S(K)$, we get operators 
$T_B$ and $F_{AB}$ such that $0\leq F_{AB} \leq \1_A\ox T_B$, with $\tr_B F_{AB} = \1_A$
and $F_{AB}(\1-P)_{AB}=0$. Hence $P_{AB}(\1_A\ox T_B)P_{AB} \geq P_{AB}F_{AB}P_{AB} = F_{AB}$ and so
\[
  \tr_B P_{AB}(\1_A\ox T_B)P_{AB} \geq \tr_B F_{AB} = \1_A,
\]
i.e.~$T$ is feasible for the dual formulation of $\widetilde{\Aram}(K)$, hence the
result follows.
\end{proof}

Below we will see that this bound (both in its one-shot and regularized form)
is in general strict, indeed already for cq-channels it is not a strict equality.

\subsection{Asymptotic assisted zero-error capacity of cq-graphs} 
\label{subsec:asymptotic-capacity-cq}
We do not know whether $\U(K)$ is in general related to $\Aram(K)$,
but we will show bounds in either direction for cq-channels. 

Suppose that the cq-channel $\cN$ acts as $\proj{i} \longmapsto \rho_i$, with
support $K_i$ and support projection $P_i$ of $\rho_i$.  
Then, the non-commutative bipartite graph $K$ associated with $\cN$ is given by
$K = \sum_i \ket{i} \ox K_i$, and the projection for the Choi-Jamio\l{}kowski 
matrix is $P = \sum_i \proj{i}^A \ox P_i^B$. The SDP (\ref{eq:Upsilon}) easily simplifies to
\begin{equation}\begin{split}
  \label{eq:Upsilon:cq-channel}
  \U(K) &= \max \sum_i s_i \ \text{ s.t. }\  0 \leq s_i,\ 0 \leq R_i \leq s_i(\1-P_i),\\
        &\phantom{= \max \sum_i s_i \ \text{ s.t. }} \sum_i (s_i P_i + R_i) = \1.
\end{split}\end{equation}
The semidefinite packing number (\ref{eq:Aram}), on the other hand, simplifies to
\begin{equation}
  \label{eq:Aram:cq-channel}
  \Aram(K) = \max \sum_i s_i \ \text{ s.t. }\  0 \leq s_i,\ \sum_i s_i P_i \leq \1.
\end{equation}

As this is an SDP relaxation of the problem (\ref{eq:Upsilon:cq-channel}),
we obtain:
\begin{lemma}
  \label{lemma:U-Aram-bound}
  For a non-commutative bipartite cq-graph  $K < \cL(A \rightarrow B)$,
  \[
    \U(K) \leq \Aram(K).
  \]
  Consequently, $C_{0,{\rm NS}}(K) \leq \log \Aram(K)$.
  \qed
\end{lemma}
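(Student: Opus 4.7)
The plan is to read off the bound directly from the two SDPs, by exhibiting the $\Aram$ programme as an explicit relaxation of the $\U$ programme in the cq-graph case.

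First, I would take any feasible tuple $(s_i, R_i)$ for the SDP in equation (\ref{eq:Upsilon:cq-channel}) characterising $\U(K)$. Since each $R_i \geq 0$, the equality constraint $\sum_i (s_i P_i + R_i) = \1$ forces $\sum_i s_i P_i \leq \1$, while $s_i \geq 0$ is inherited directly. Hence $(s_i)$ by itself is feasible for the SDP in (\ref{eq:Aram:cq-channel}) defining $\Aram(K)$, and it attains the same objective value $\sum_i s_i$. Maximising over all feasible tuples on the left and recognising the right-hand side as the optimum of the relaxed SDP yields $\U(K) \leq \Aram(K)$.

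For the asymptotic consequence, I would invoke the multiplicativity of $\Aram$ recorded earlier in the section, giving $\U(K^{\ox n}) \leq \Aram(K^{\ox n}) = \Aram(K)^n$. Dividing by $n$, taking the binary logarithm, and passing to the supremum in $n$ yields
\[
  C_{0,\text{NS}}(K) = \sup_n \tfrac{1}{n} \log \U(K^{\ox n}) \leq \log \Aram(K).
\]

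There is no genuine obstacle here; the content of the lemma amounts to the single observation that dropping the $R_i$ variables (equivalently, setting them to zero) and weakening the normalisation equality to an inequality converts the $\U$ programme into the $\Aram$ programme. The nontrivial direction — a matching lower bound on $\U(K^{\ox n})$ up to polynomial factors in $n$ — is precisely the substance of Theorem \ref{thm:cq-capacity}, not of this lemma.
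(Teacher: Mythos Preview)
Your proof is correct and follows exactly the same approach as the paper: the paper simply remarks that (\ref{eq:Aram:cq-channel}) is an SDP relaxation of (\ref{eq:Upsilon:cq-channel}), which is precisely the observation you spell out (dropping the $R_i \geq 0$ variables turns the equality $\sum_i(s_iP_i+R_i)=\1$ into the inequality $\sum_i s_iP_i\leq\1$). Your derivation of the asymptotic bound via multiplicativity of $\Aram$ is also the intended argument.
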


In general, $\U(K)$ can be strictly smaller than $\Aram(K)$, see Subsection \ref{2-state-example}
below, but we shall prove equality for the regularization, by exhibiting a lower bound
\[
  \U(K^{\ox n}) \geq n^{-O(1)} \Aram(K^{\ox n})
                =    n^{-O(1)} \Aram(K)^n.
\]
The way to do this is to take a feasible solution $s_i$ of the SDP
(\ref{eq:Aram:cq-channel}); w.l.o.g.~its value $\sum_i s_i > 1$, otherwise 
the above statement is trivial. On strings $\underline{i} = i_1\ldots i_n$ of 
length  $n$ this gives a feasible solution 
$s_{\underline{i}} = s_{i_1}\cdots s_{i_n}$ for $K^{\ox n}$, with value 
\[
  \sum_{\underline{i}} s_{\underline{i}} = \left( \sum_i s_i \right)^n.
\]
Now for $a$ different symbols $i=1,\ldots,a$ there are at most $(n+1)^a$ many
types of strings, hence there is one type $\tau$ such that
\[
  \sum_{\underline{i}\in\tau} s_{\underline{i}} \geq (n+1)^{-a} \left( \sum_i s_i \right)^n.
\]
Restricting the input of the channel to $\underline{i}\in\tau$ (while the
output is still $B^n$), we thus loose only a polynomial factor of the
semidefinite packing number. What we gain is that the inputs are all of
the same type, which means that the output projectors 
$P_{\underline{i}} = P_{i_1}\ox\cdots\ox P_{i_n}$ are related to each other
by unitaries $U^\pi$ permuting the $n$ $B$-systems. Note that then also
all of the $s_{\underline{i}}$ on the left hand side above are the same,
say $s_\tau$, and hence the left hand side is $s_\tau\,|\tau|$.

\medskip
Abstractly, we are thus in the following situation: 
Assume that there exists a transitive
group action by unitary conjugation on the $P_i$, i.e.~we have a finite
group $G$ acting transitively on the labels $i$ (running over a set of size $N$), 
and a unitary representation
$U^g$, such that $P_{i^g} = (U^g)^\dagger P_i U^g$ for $g\in G$. 
In other words, the entire set of $\{P_i\}$ is the orbit 
$\left\{(U^g)^\dagger P_0 U^g\right\}$ of a fiducial element $P_0$
under the group action.
Then we can twirl the SDP (\ref{eq:Aram:cq-channel}) and w.l.o.g.~assume that
all $s_i$ are identical to $s$, so the constraint reduces to
$s\sum_i P_i \leq \1$, meaning that the largest admissible $s$ is
$\left\| \sum_i P_i \right\|_{\infty}^{-1}$, and the semidefinite packing number
$\Aram(K) = sN = \left\| \frac{1}{N} \sum_i P_i \right\|_{\infty}^{-1}$.

From this we see that the representation theory of $U^g$ has a bearing
on the semidefinite packing number $\Aram(K)$; cf.~\cite{FH1991}
for some basic facts that we shall invoke in the following.
Indeed, it also governs 
$\U(K)$ since we can do the same twirling operation and find that in
the SDP (\ref{eq:Upsilon:cq-channel}) we have w.l.o.g.~that all $s_i$
are equal to the same $s$, but also that for $P_j = (U^g)^\dagger P_i U^g$,
w.l.o.g.~$R_j = (U^g)^\dagger R_i U^g$. In particular, in this case
\begin{equation}\begin{split}
  \label{eq:Upsilon:covariant-cq-channel}
  \U(K) &= \max sN \ \text{ s.t. }\  0 \leq s,\ 0 \leq R_0 \leq s(\1-P_0),\\
        &\phantom{= \max sN \text{ s.t. }}
                     \frac{1}{|G|} \sum_g (U^g)^\dagger (s P_0 + R_0) U^g = \frac{1}{N} \1.
\end{split}\end{equation}
Let us be a little more explicit in the reduction of the SDP (\ref{eq:Upsilon:cq-channel})
to the above SDP: Let $s$ and $R_0$ be feasible as above, and denote by $G_0$
the subgroup of $G$ leaving $0$ invariant, $G_0 = \{g \in G: 0^g=0 \}$. By Lagrange's
Theorem, $N=|G/G_0|$. Then,
\[
  \overline{R}_0 := \frac{1}{|G_0|} \sum_{g\in G_0} (U^g)^\dagger R_0 U^g
\]
is also feasible with the same $s$, using $(U^g)^\dagger P_0 U^g = P_0$ for
all $g\in G_0$. Letting $\overline{R}_i := (U^g)^\dagger \overline{R}_0 U^g$
for any $g$ such that $i=0^g$, and $s_i=s$, then yields a feasible solution for
(\ref{eq:Upsilon:cq-channel}) -- and this is well-defined. Thus $\U(K)$ is not 
smaller than the above SDP.
In the other direction, let $s_i$ and $R_i$ be feasible for (\ref{eq:Upsilon:cq-channel}),
i.e.~$0 \leq R_i \leq s_i(\1-P_i)$. 
Letting $\overline{s} = \frac{1}{N}\sum_i s_i$ and
\[
  \overline{R}_0 := \frac{1}{G} \sum_{g\in G} U^g R_{0^g} (U^g)^\dagger
\]
yields a feasible solution for (\ref{eq:Upsilon:covariant-cq-channel}).

If the representation $U^g$ happens to be irreducible, we are lucky because then
the group average in the second line in Eq.~(\ref{eq:Upsilon:covariant-cq-channel})
is automatically proportional to the identity, by Schur's Lemma. 
Hence the optimal choice is $R_0 = R_i = 0$ and
we find $\U(K) = \Aram(K)$. In general this won't be the case, but if the 
representation $U^g$ is ``not too far'' from being irreducible, in a sense made
precise in the following proposition, then $\U(K)$ is not too much smaller
than $\Aram(K)$:

\begin{proposition}
  \label{key-lemma}
  For a set of projections $P_i$ on $B$ with a transitive group action by
  conjugation under $U^g$, let 
  \[
    B = \bigoplus_\lambda \cQ_\lambda \ox \cR_\lambda
  \]
  be the isotypical decomposition of $B$ into irreps $\cQ_\lambda$ of
  $U^g$, with multiplicity spaces $\cR_\lambda$. Denote the number of
  terms $\lambda$ by $L$, and the largest occurring multiplicity by
  $M = \max_\lambda |\cR_\lambda|$. Then, for the corresponding cq-graph $K$,
  \[
    \U(K) \geq \frac{1}{4 L^2 M^{9/2}} \Aram(K),
  \]
  if $\Aram(K) \geq 64 L^6 M^{14}$.
\end{proposition}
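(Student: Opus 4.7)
The plan is to construct a feasible solution of the covariant SDP~(\ref{eq:Upsilon:covariant-cq-channel}) whose value is at least $\frac{1}{4L^2M^{9/2}}\Aram(K)$. I would start from a symmetric optimizer $s^{\ast}$ of $\Aram(K)$ (so $s^{\ast}N=\Aram(K)$ and $s^{\ast}\sum_i P_i\leq \1$), reduce it to $s := s^{\ast}/c$ for a constant $c$ of order $L^{2}M^{9/2}$ to be fixed, and then aim to produce $R_0 \geq 0$ supported on $\ker P_0$ with $\|R_0\|_\infty \leq s$ satisfying the twirl equation.

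Passing to the isotypical decomposition $B = \bigoplus_\lambda \cQ_\lambda \ox \cR_\lambda$, Schur's Lemma identifies the twirl with
\[
  X \longmapsto \bigoplus_\lambda \frac{\1_{\cQ_\lambda}}{|\cQ_\lambda|}\ox \tr_{\cQ_\lambda}\bigl(X^{\lambda\lambda}\bigr),
\]
and the equality constraint decouples into the block conditions
\[
  \tr_{\cQ_\lambda}\bigl(R_0^{\lambda\lambda}\bigr) = Y_\lambda := \frac{|\cQ_\lambda|}{N}\1_{\cR_\lambda} - s\,\tr_{\cQ_\lambda}\bigl(P_0^{\lambda\lambda}\bigr),
\]
with each $Y_\lambda \geq 0$ thanks to the reduced $\Aram$-feasibility. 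Thus the only nontrivial task is to pick some $R_0 \geq 0$ with these marginals on the $\cR_\lambda$'s, supported on $\ker P_0$, and of operator norm at most $s$.

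My approach would be iterative. Start with the block-diagonal ansatz $R_0^{(0)} := \bigoplus_\lambda \frac{\1_{\cQ_\lambda}}{|\cQ_\lambda|}\ox Y_\lambda$, which satisfies the twirl condition exactly but is not supported on $\ker P_0$; compress it to $(\1-P_0)R_0^{(0)}(\1-P_0)$; measure the residual twirl deficit; then feed that deficit back into the same ansatz-plus-compression recipe to obtain $R_0^{(1)}$, and iterate. Set $R_0 := \sum_{k\geq 0} R_0^{(k)}$. For this to yield a valid feasible solution, the successive deficits must contract geometrically and the resulting series must have operator norm at most $s$.

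The main obstacle is quantitative control of this contraction. The compression-induced error is governed by the off-diagonal blocks $P_0^{\lambda\mu}$ ($\lambda\neq\mu$) of $P_0$ in the isotypical decomposition. These are constrained both by the projection identity $\sum_\nu P_0^{\lambda\nu}P_0^{\nu\mu} = P_0^{\lambda\mu}$ and by the spectral bound $\|\text{twirl}(P_0)\|_\infty \leq 1/\Aram(K)$ extracted from $\Aram$-feasibility; combining these via Cauchy--Schwarz together with the dimension counts ($L$ isotypic classes, multiplicities bounded by $M$) should yield a contraction factor depending polynomially on $L$, $M$, and $1/\Aram(K)$. The hypothesis $\Aram(K) \geq 64L^6M^{14}$ is precisely what is needed to drive that factor strictly below~$1$ while keeping the resulting geometric series bounded above by $s$; the specific exponents in the final bound ($L^2M^{9/2}$ and $L^6M^{14}$) are what emerge from tracking worst-case losses through the initial rescaling by $c$, the per-step compression error, and the summation.
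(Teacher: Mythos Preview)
Your plan is in the same spirit as the paper's proof, and at the structural level the two are essentially equivalent: both reduce to the covariant SDP, rescale the $\Aram$-optimizer by a factor $\beta$ (your $c$) of order $L^{2}M^{9/2}$, and then solve the remaining $LM^{2}$ linear constraints by a Neumann-series/iterative correction controlled by the ``smallness'' of $P_{0}$ inherited from $\Aram$-feasibility. Your iterative compress--measure--correct loop is exactly the operator-space incarnation of the paper's inversion $T^{-1}=\sum_{k\ge0}\Delta^{k}$ of the coefficient matrix $T=\1-\Delta$.

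Where the paper's execution is sharper, and where your sketch has genuine gaps, is in two places. First, the paper parametrizes $R_{0}=\frac{\beta}{N}(\1-P_{0})X(\1-P_{0})$ and works with $X$; this decouples the support constraint from positivity, so that showing $X=\1+\text{Rest}$ with $\|\text{Rest}\|_{\infty}\le 1$ immediately gives $0\le X\le 2\1$ and hence $0\le R_{0}\le s^{*}(\1-P_{0})$. In your scheme the iterates $R_{0}^{(k)}$ for $k\ge 1$ need not be positive semidefinite (the twirl deficit after compression can have either sign), so you owe an argument that the \emph{sum} $\sum_{k}R_{0}^{(k)}$ is still $\ge 0$; you do not mention this. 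Second, your proposed contraction estimate---``projection identity plus spectral bound on the twirl plus Cauchy--Schwarz''---is too vague to yield the stated constants. The twirl bound controls only $\tr_{\cQ_\lambda}P_{0}^{\lambda\lambda}$, not $\|P_{0}^{\lambda\lambda}\|_\infty$ or the off-diagonal blocks directly, and the projection identity by itself does not close the loop. The paper's route is to expand in the invariant basis $C_{\lambda\mu}=\frac{1}{\tr Q_\lambda}Q_\lambda\otimes Z^{(\lambda)}_\mu$ and to bound the matrix entries of $\Delta$ by $\|P_{0}C_{\lambda\mu}\|_{1}$, which in turn is controlled via the dedicated trace-norm lemma $\|\rho E\|_{1}\le\sqrt{\tr\rho}\sqrt{\tr\rho E}$; the covariance of $|C_{\lambda\mu}|$ then converts $\tr P_{0}|C_{\lambda\mu}|$ into a group average bounded by $1/(s^{*}N)$. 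That lemma, not a generic Cauchy--Schwarz, is what produces the factor $(s^{*}N)^{-1/2}$ and hence the hypothesis $\Aram(K)\ge 64L^{6}M^{14}$. If you want to carry out your operator-space iteration, you will need an analogous estimate on the twirl of $P_{0}\Phi(Y)$ and $P_{0}\Phi(Y)P_{0}$ in terms of $\|Y\|$ and $(s^{*}N)^{-1/2}$; this is where the real work lies.
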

\begin{proof}
Assume that we have a feasible $s^*$ for $\Aram(K)$ such that 
$s^*N \geq 64 L^6 M^{14}$.

Our point of departure is the SDP (\ref{eq:Upsilon:covariant-cq-channel}):
for given $s\geq 0$ and $R_0\geq 0$, Schur's Lemma tells us
\[
  \frac{1}{|G|} \sum_g (U^g)^\dagger (s P_0 + R_0) U^g 
         = \frac{1}{N} \sum_\lambda Q_\lambda \ox \zeta_\lambda,
\]
where $Q_\lambda$ is the projection onto the irrep $\cQ_\lambda$,
$\zeta_\lambda$ is a semidefinite operator on $\cR_\lambda$.
Feasibility of $s$ and $R_0$ (to be precise: the equality constraints)
is equivalent to $\zeta_\lambda = \Pi_\lambda$, the projection onto $\cR_\lambda$,
for all $\lambda$. 

Now, for each $\lambda$ choose an orthogonal basis $\{Z^{(\lambda)}_\mu\}$
of Hermitians over $\cR_\lambda$, 
with $Z^{(\lambda)}_0 = \frac{1}{\tr\Pi_\lambda}\Pi_\lambda$
and $\| Z^{(\lambda)}_\mu \|_2 = 1$ for $\mu \neq 0$. Then the
$\frac{1}{\tr Q_\lambda}Q_\lambda \ox Z^{(\lambda)}_\mu$ form a basis of
the $U^g$-invariant operators, hence our SDP can be rephrased as
\begin{equation}\begin{split}
  \label{eq:Upsilon:covariant-cq-channel-2}
  \U(K) &= \max sN \ \text{ s.t. }\  0 \leq s,\ 0 \leq R_0 \leq s(\1-P_0),\\
        &\phantom{= \max sN \text{ s.t. }}
           \forall\lambda\,\mu\quad
           \tr\,(s P_0 + R_0)\!\left(\frac{Q_\lambda}{\tr Q_\lambda}\ox Z^{(\lambda)}_\mu\right)
                                                                    = \frac{1}{N}\delta_{\mu 0}.
\end{split}\end{equation}
Notice that here, the semidefinite constraints on $R_0$ leave quite some
room, whereas we have ``only'' $LM^2$ linear conditions to satisfy.
Given $s^*$ satisfying the constraint of $\Aram(K)$, our strategy now will be
to show that we can construct a $0 \leq R_0 \leq \frac{2\beta}{N}(\1-P_0)$ such that 
the above equations are true with $s=s^*$ on the left hand side, and
with a factor $\beta$ on the right hand side.
We will choose $\beta = 4 L^2 M^{9/2}$ and thus there is a feasible
solution with $s=s^*/\beta$ to (\ref{eq:Upsilon:covariant-cq-channel-2}),
hence $\U(K) \geq s^*N/\beta$ as claimed.

In detail, introduce a new variable $X\geq 0$, with
\[
  R_0 = \frac{\beta}{N}(\1-P_0)X(\1-P_0),
\]
which makes sure that $R_0$ is automatically supported on the orthogonal complement of $P_0$.
Rewrite the equations
\[
  \tr\,(s^* P_0 + R_0)\!\left(\frac{Q_\lambda}{\tr Q_\lambda}\ox Z^{(\lambda)}_\mu\right)
                                                             = \frac{\beta}{N}\delta_{\mu 0}
\] 
in terms of $X$, introducing the notation
\[
  C_{\lambda\mu} = \frac{1}{\tr Q_\lambda}Q_\lambda \ox Z^{(\lambda)}_\mu,
  \quad
  D_{\lambda\mu} = (\1-P_0) C_{\lambda\mu} (\1-P_0).
\]
This gives, noting $\tr P_0 C_{\lambda\mu} = \tr P_i C_{\lambda\mu}$ for all $i$
because of the $U^g$ invariance of $C_{\lambda\mu}$,
\begin{equation}\begin{split}
  \label{eq:X-and-t}
  \tr X D_{\lambda\mu} &= \delta_{\mu 0} - \frac{1}{\beta}\tr\, s^* N P_0 C_{\lambda\mu} \\
                       &= \delta_{\mu 0} - \frac{1}{\beta}\tr \left(\sum_i s^* P_i\right) C_{\lambda\mu} \\
                       &=: \delta_{\mu 0} - \frac{1}{\beta} t_{\lambda\mu}.
\end{split}\end{equation}
What we need of the coefficients $t_{\lambda\mu}$ is that they cannot be too large:
from $0 \leq \sum_i s^* P_i \leq \1$ we get
\begin{equation}
  \label{eq:t-bound}
  | t_{\lambda\mu} | \leq \| C_{\lambda\mu} \|_1 = \| Z^{(\lambda)}_\mu \|_1 \leq \sqrt{M}.
\end{equation}

Our goal will be to find a ``nice'' dual set $\{\widehat{D}_{\lambda\mu}\}$ to the 
$\{D_{\lambda\mu}\}$, 
i.e.~$\tr D_{\lambda\mu} \widehat{D}_{\lambda'\mu'} = \delta_{\lambda\lambda'}\delta_{\mu\mu'}$,
with which we can write a solution
$X = \sum_{\lambda\mu} \left(\delta_{\mu 0} - \frac{1}{\beta}t_{\lambda\mu}\right) \widehat{D}_{\lambda\mu}$.
To this end, we construct first the dual set $\widehat{C}_{\lambda\mu}$ of the 
$\{C_{\lambda_\mu}\}$, which is easy:
\[
  \widehat{C}_{\lambda\mu} = Q_\lambda \ox \widehat{Z}^{(\lambda)}_\mu
                           = \begin{cases}
                               Q_\lambda \ox \Pi_\lambda       & \text{ for } \mu = 0, \\
                               Q_\lambda \ox Z^{(\lambda)}_\mu & \text{ for } \mu\neq 0,
                             \end{cases}
\]
so that indeed 
$\tr C_{\lambda\mu} \widehat{C}_{\lambda'\mu'} = \delta_{\lambda\lambda'}\delta_{\mu\mu'}$.
Now, consider the $LM^2\times LM^2$-matrix $T$,
\[\begin{split}
  T_{\lambda\mu,\lambda'\mu'} &= \tr D_{\lambda\mu} \widehat{C}_{\lambda'\mu'} \\
                &= \tr (\1-P_0)C_{\lambda\mu}(\1-P_0) \widehat{C}_{\lambda'\mu'} \\
                &= \delta_{\lambda\lambda'}\delta_{\mu\mu'} - \Delta_{\lambda\mu,\lambda'\mu'},
\end{split}\]
with the deviation
\[
  \Delta_{\lambda\mu,\lambda'\mu'} = \tr P_0 C_{\lambda\mu} (\1-P_0) \widehat{C}_{\lambda'\mu'}
                                     + \tr C_{\lambda\mu} P_0 \widehat{C}_{\lambda'\mu'}.
\]
Here, 
\[\begin{split}
  |\Delta_{\lambda\mu,\lambda'\mu'}| &\leq 2 \| P_0 C_{\lambda\mu} \|_1 \|\widehat{C}_{\lambda'\mu'}\|_\infty
                                      \leq 2 \| P_0 C_{\lambda\mu} \|_1 \\
                  &=    2 \bigl\| P_0 |C_{\lambda\mu}| \bigr\|_1 \\
                  &\leq 2\sqrt{\tr P_0 |C_{\lambda\mu}|}\sqrt{\| C_{\lambda\mu}\|_1},
\end{split}\]
using $\|\widehat{C}_{\lambda'\mu'}\|_\infty \leq 1$, the unitary invariance of
the trace norm, and Lemma~\ref{lemma:tracenorm-trace} below. Since
$|C_{\lambda\mu}| = \frac{1}{\tr Q_\lambda} Q_\lambda \ox \bigl| Z^{(\lambda)}_\mu \bigr|$
is invariant under the action of $U^g$, we have 
$\tr P_0 |C_{\lambda\mu}| = \tr P_i |C_{\lambda\mu}|$ for all $i$, and using
$\sum_i s^* P_i \leq \1$ we get
\begin{equation}
  \label{eq:Delta-bound}
  |\Delta_{\lambda\mu,\lambda'\mu'}| \leq 2\sqrt{\frac{1}{s^*N}\|C_{\lambda\mu}\|_1^2}
                                     \leq 2\sqrt{M} (s^*N)^{-1/2}.
\end{equation}
With this we get that
\begin{equation}\begin{split}
  \| T-\1 \|_\infty \leq \| T-\1 \|_2 
                    &=    \sqrt{ \sum_{\lambda\mu\lambda'\mu'} |\Delta_{\lambda\mu,\lambda'\mu'}|^2} \\
                    &\leq \sqrt{ L^2M^4 4 M (s^*N)^{-1}}
                     \leq \frac{1}{\beta},
\end{split}\end{equation}
if $s^*N \geq 4\beta^2 L^2 M^{5}$. Assuming $\beta \geq 2$ (which will be the
case with our later choice), we thus know that $T$ is invertible; in fact,
we have $T = \1 - \Delta$ with $\| \Delta \| \leq \frac{1}{\beta} \leq \frac12$, hence
$T^{-1} = \sum_{k=0}^\infty \Delta^k$ and so
\[
  \left\| T^{-1} - \1 \right\|_{\infty} =    \left\| \sum_{k=1}^\infty \Delta^k \right\|_{\infty}
                               \leq \sum_{k=1}^\infty \| \Delta \|_{\infty}^k
                               =    \frac{1}{\beta-1}
                               \leq \frac{2}{\beta}.
\]
I.e., writing $T^{-1} = \1 + \widetilde{\Delta}_{\lambda\mu,\lambda'\mu'}$ we get
\begin{equation}
  \label{eq:tilde-Delta-bound}
  |\widetilde{\Delta}_{\lambda\mu,\lambda'\mu'}| \leq \| \widetilde{\Delta} \|_\infty \leq \frac{2}{\beta}.
\end{equation}
The invertibility of $T$ implies that there is a dual set to $\{D_{\lambda\mu}\}$ in 
$\operatorname{span}\{\widehat{C}_{\lambda\mu}\}$. Indeed, from the definition of 
$T_{\lambda\mu,\lambda'\mu'}$ and the dual sets,
\begin{align*}
  \widehat{C}_{\lambda'\mu'} 
           &= \sum_{\lambda\mu} T_{\lambda\mu,\lambda'\mu'} \widehat{D}_{\lambda\mu},
            \ \text{ which can be rewritten as} \\
  \widehat{D}_{\lambda\mu} 
           &= \sum_{\lambda'\mu'} (T^{-1})_{\lambda'\mu',\lambda\mu} \widehat{C}_{\lambda'\mu'}.
\end{align*}

Now we can finally write down our candidate solution to Eqs.~(\ref{eq:X-and-t}):
\[\begin{split}
  X &= \sum_{\lambda\mu} \left(\delta_{\mu 0} - \frac{1}{\beta}t_{\lambda\mu}\right)\widehat{D}_{\lambda\mu} \\
    &= \sum_{\lambda\mu} \left(\delta_{\mu 0} - \frac{1}{\beta}t_{\lambda\mu}\right)
               \sum_{\lambda'\mu'} (T^{-1})_{\lambda'\mu',\lambda\mu} \widehat{C}_{\lambda'\mu'} \\
    &= \sum_{\lambda} \widehat{C}_{\lambda 0}                                                    
        -\frac{1}{\beta}\sum_{\lambda\mu} t_{\lambda\mu}
                        \sum_{\lambda'\mu'} (T^{-1})_{\lambda'\mu',\lambda\mu} \widehat{C}_{\lambda'\mu'}
        +\sum_{\lambda\lambda'\mu'} \widetilde{\Delta}_{\lambda'\mu',\lambda 0} \widehat{C}_{\lambda'\mu'} \\
    &= \1 + \text{Rest}.
\end{split}\]
The remainder term ``Rest'' can be bounded as follows:
\[\begin{split}
  \| \text{Rest} \|_\infty &\leq \frac{1}{\beta} \sum_{\lambda\mu\lambda'\mu'} 2\sqrt{M}
                                                 + \sum_{\lambda\lambda'\mu'} \frac{2}{\beta} \\
                           &=    \frac{2}{\beta} \left( L^2 M^{9/2} + L^2M^2 \right)
                            \leq \frac{4}{\beta} L^2 M^{9/2},
\end{split}\]
using Eqs.~(\ref{eq:t-bound}) and (\ref{eq:tilde-Delta-bound}). Thus we find
$\| \text{Rest} \|_\infty \leq 1$ if $\beta \geq 4L^2 M^{9/2}$ and
$s^*N \geq 4\beta^2 L^2 M^{5} \geq 64 L^6 M^{14}$. 
In this case, $0\leq X \leq 2$ and we can wrap things up:
$R_0 := \frac{\beta}{N}(\1-P_0)X(\1-P_0)$ satisfies 
\[
  0 \leq R_0 \leq \frac{2\beta}{N}(1-P_0) \leq s^*(\1-P_0),
\]
as well as 
\[
  \frac{1}{|G|} \sum_{g\in G} (U^g)^\dagger (s^* P_0 + R_0) U^g = \frac{\beta}{N}\1.
\]
I.e., we get a feasible solution 
$\sum_i \left( \frac{s^*}{\beta}P_i + \frac{1}{\beta}R_i \right) = \1$
for $\U(K)$.
\end{proof}

\begin{lemma}
\label{lemma:tracenorm-trace}
Let $\rho$ be a state and $P$ a projection in a Hilbert space $\cH$. Then,
\[
  \tr \rho P \leq \| \rho P \|_1 \leq \sqrt{\tr\rho P}.
\]
More generally, for $X\geq 0$ and a POVM element $0\leq E \leq \1$,
\[
  \tr XE \leq \| XE \|_1 \leq \sqrt{\tr X}\sqrt{\tr XE}.
\]
\end{lemma}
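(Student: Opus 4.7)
The plan is to prove the general statement and obtain the projection case as an immediate specialization (with $X=\rho$, $E=P$, and $\tr\rho=1$).

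For the lower bound $\tr XE \leq \|XE\|_1$, I would invoke the basic fact that $|\tr A| \leq \|A\|_1$ for any operator $A$, applied to $A=XE$. Since $X\geq 0$ and $0\leq E\leq \1$, the quantity $\tr XE = \tr X^{1/2} E X^{1/2}$ is non-negative, so the absolute value may be dropped.

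For the upper bound $\|XE\|_1 \leq \sqrt{\tr X}\sqrt{\tr XE}$, the key idea is to split $X$ between the two factors: write $XE = X^{1/2}\bigl(X^{1/2}E\bigr)$ and apply the Hölder (equivalently, matrix Cauchy--Schwarz) inequality for Schatten norms with $p=q=2$, namely $\|AB\|_1 \leq \|A\|_2\,\|B\|_2$. This yields
\[
  \|XE\|_1 \leq \bigl\|X^{1/2}\bigr\|_2\,\bigl\|X^{1/2}E\bigr\|_2
           =    \sqrt{\tr X}\,\sqrt{\tr EXE}.
\]
Then I would bound $\tr EXE = \tr E^2 X \leq \tr EX = \tr XE$, using $E^2 \leq E$ which holds because $0\leq E\leq \1$. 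Chaining the inequalities gives the desired bound.

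There is no real obstacle here; the only creative step is choosing the factorization $XE = X^{1/2}\cdot X^{1/2}E$, which distributes $X$ symmetrically so that the Cauchy--Schwarz bound produces precisely $\sqrt{\tr X}$ on one side and $\sqrt{\tr XE}$ on the other. The projection version follows by setting $E=P$ and noting $P^2=P$, so the inequality $E^2\leq E$ is saturated; for a state, $\tr\rho=1$ eliminates that factor and yields the stated form.
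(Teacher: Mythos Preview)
Your proof is correct and takes a genuinely different route from the paper's. The paper proves the projection case first: it chooses a purification $\ket{\psi}$ of $\rho$, uses monotonicity of the trace norm under partial trace to pass to $\|\proj{\psi}(P\ox\1)\|_1$, and exploits the rank-one property of $\proj{\psi}$ to get $\sqrt{\tr\rho P}$; the general POVM case is then reduced back to the projection case via a Naimark dilation $E=U^\dagger PU$ and homogeneity in $X$. Your argument instead attacks the general statement directly by the factorization $XE=X^{1/2}(X^{1/2}E)$ together with the Schatten-norm Cauchy--Schwarz inequality $\|AB\|_1\leq\|A\|_2\|B\|_2$ and the operator inequality $E^2\leq E$. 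This is more elementary and self-contained: it avoids the detour through purification and dilation, and it yields the general case without first establishing the special one. The paper's route, on the other hand, makes the equality cases more transparent (rank-one $\rho$, or $P\rho P$ proportional to $\rho$), but for the purposes of the lemma your approach is cleaner.
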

\begin{proof}
We start with the first chain of inequalities. The left hand one follows
directly from the definition of the trace norm. For the right hand one,
choose a purification of $\rho = \tr_{\cH'} \proj{\psi}$ on $\cH \ox \cH'$.
Now, $\|\rho P\|_1 = \tr \sqrt{P\rho^2 P}$ and
$\rho P = \tr_{\cH'} \proj{\psi}(P\ox\1)$. Thus, by the monotonicity of
the trace norm under partial trace,
\[\begin{split}
  \|\rho\Pi\|_1 &\leq \bigl\|\proj{\psi}(P\ox\1)\bigr\|_1      \\
                &=    \tr \sqrt{(P\ox\1)\proj{\psi}(P\ox\1)} \\
                &=    \sqrt{\tr\proj{\psi}(P\ox\1)}            \\
                &=    \sqrt{\tr\rho P}.
\end{split}\]

The second chain is homogenous in $X$, so we may w.l.o.g.~assume that
$\tr X=1$, i.e.~$X=\rho$ is a state. For a general POVM element $E$ there is
an embedding $U$ of the Hilbert space $\cH$ into a larger Hilbert space $\cH_0$
and a projection $P$ in $\cH_0$ such that $E=U^\dagger P U$.
Then, $\tr \rho E \leq \| \rho E \|_1$ as before by the definition of the
trace norm, and using the invariance of the trace number under unitaries
and the first part,
\[
  \| \rho E \|_1 =    \| \rho U^\dagger P U\|_1 
                 =    \| U\rho U^\dagger P \|_1 
                 \leq \sqrt{\tr U\rho U^\dagger P}
                 =    \sqrt{\tr \rho E},
\]
which concludes the proof.
\end{proof}

\medskip
For the permutation action of $S_n$ on $B^n$, the irreps $\lambda$ are labelled
by Young diagrams with at most $b=|B|$ rows, hence $L \leq (n+1)^{b}$, and it is
well-known that $M \leq (n+b)^{\frac12 b^2}$~\cite[Sec.~6.2]{Harrow2005},
\cite{Christandl2006, Hayashi2002}.
Thus the previous proposition yields directly the following result, observing that $L$ and $M$ are polynomially
bounded in $n$, whilst $\Aram(K^{\ox n}) = \Aram(K)^n$ grows exponentially.

\begin{proposition}
  \label{c-q-capacity-aram}
  Let $K$ be a non-commutative bipartite cq-graph with $a=|A|$ inputs 
  and output dimension $b=|B|$. Then for sufficiently large $n$,
  \[
    \U(K^{\ox n}) \geq \frac{1}{4 (n+1)^{a+2b} (n+b)^{9b^2/4}} \Aram(K)^n.
  \]
  Consequently, $C_{0,{\rm NS}}(K) = \log \Aram(K)$.
  \qed
\end{proposition}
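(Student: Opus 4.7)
The plan is to prove the lower bound by exhibiting an explicit feasible solution to the SDP for $\U(K^{\ox n})$ built from an optimal feasible solution for $\Aram(K)$, passing through a pigeonhole reduction to strings of a single type so that Proposition~\ref{key-lemma} becomes applicable.

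First, take an optimal $s_i^* \geq 0$ with $\sum_i s_i^* P_i \leq \1$ and $\sum_i s_i^* = \Aram(K)$. Tensorizing, $s_{\underline i}^* = s_{i_1}^*\cdots s_{i_n}^*$ is feasible for $\Aram(K^{\ox n})$ with total value $\Aram(K)^n$. Since there are at most $(n+1)^a$ types (i.e.~empirical frequency distributions) of strings $\underline i \in [a]^n$, by pigeonhole there exists a single type $\tau$ with $\sum_{\underline i\in\tau}s^*_{\underline i} \geq (n+1)^{-a}\Aram(K)^n$. Restrict the cq-graph $K^{\ox n}$ to inputs in $\tau$; call this restriction $K_\tau$. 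Clearly $\U(K^{\ox n}) \geq \U(K_\tau)$ since any feasible solution for $K_\tau$ extends to $K^{\ox n}$ by setting the missing coordinates to zero, and $\Aram(K_\tau) \geq (n+1)^{-a}\Aram(K)^n$.

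Next, observe that the key hypothesis of Proposition~\ref{key-lemma} is satisfied for $K_\tau$: the symmetric group $S_n$ acts on strings in $\tau$ by coordinate permutation, and this action is transitive on $\tau$ by definition of type. The corresponding unitaries $U^\pi$ permuting the $n$ tensor factors of $B^{\ox n}$ conjugate $P_{\underline i}$ to $P_{\underline i^\pi}$. So I can apply Proposition~\ref{key-lemma} with the group action of $S_n$ on $B^{\ox n}$. Schur–Weyl duality decomposes $B^{\ox n}$ into isotypic components indexed by Young diagrams $\lambda\vdash n$ with at most $b=|B|$ rows; the number of such diagrams is at most $(n+1)^b$, giving $L\leq (n+1)^b$, and the multiplicity spaces are the irreducible $GL(b)$-modules, whose dimensions are bounded by the Weyl dimension formula by $(n+b)^{b^2/2}$, giving $M \leq (n+b)^{b^2/2}$ (see \cite[Sec.~6.2]{Harrow2005}, \cite{Christandl2006, Hayashi2002}).

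Plugging $L^2 \leq (n+1)^{2b}$ and $M^{9/2} \leq (n+b)^{9b^2/4}$ into the bound $\U(K_\tau) \geq \frac{1}{4L^2 M^{9/2}}\Aram(K_\tau)$ of Proposition~\ref{key-lemma}, and combining with $\Aram(K_\tau) \geq (n+1)^{-a}\Aram(K)^n$, yields the claimed one-shot bound for all sufficiently large $n$ (the hypothesis $\Aram(K_\tau)\geq 64L^6M^{14}$ is eventually satisfied because the right-hand side is polynomial in $n$ while $\Aram(K)^n$ grows exponentially, assuming $\Aram(K)>1$; the case $\Aram(K)\leq 1$ is trivial as then $\U(K)\leq 1$ by Lemma~\ref{lemma:U-Aram-bound}). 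Finally, taking $\frac{1}{n}\log$ and $n\to\infty$, the polynomial prefactor disappears, so $C_{0,\NS}(K) \geq \log\Aram(K)$; combined with the reverse inequality from Lemma~\ref{lemma:U-Aram-bound} this gives equality. I expect no genuine obstacle here since all the heavy lifting is already in Proposition~\ref{key-lemma}; the only delicate step is verifying that type restriction plus Schur–Weyl gives exactly the transitive group action and the polynomial bounds on $L$ and $M$ that make Proposition~\ref{key-lemma} asymptotically lossless.
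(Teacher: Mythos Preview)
Your proposal is correct and follows essentially the same approach as the paper: tensorize an optimal $\Aram(K)$ solution, pigeonhole onto a single type (losing at most $(n+1)^a$), observe that $S_n$ acts transitively on that type by permutation unitaries on $B^{\otimes n}$, and then invoke Proposition~\ref{key-lemma} with the Schur--Weyl bounds $L\leq (n+1)^b$, $M\leq (n+b)^{b^2/2}$. The paper's discussion preceding Proposition~\ref{key-lemma} sets up exactly this reduction, so your argument matches it essentially step for step.
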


\medskip
Lemma~\ref{lemma:U-Aram-bound} and Proposition~\ref{c-q-capacity-aram} together 
prove Theorem~\ref{thm:cq-capacity}.

\begin{corollary}
  \label{c-q-capacity-additivity}
  For any two non-commutative bipartite cq-graphs $K_1$ and $K_2$, 
  $C_{0,{\rm NS}}(K_1\ox K_2)=C_{0,{\rm NS}}(K_1)+C_{0,{\rm NS}}(K_2)$.
  \qed
\end{corollary}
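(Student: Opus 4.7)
The plan is to reduce the additivity of $C_{0,\NS}$ to the multiplicativity of the semidefinite packing number $\Aram$, using Proposition~\ref{c-q-capacity-aram} as the bridge. First, I would observe that the class of cq-graphs is closed under tensor product: if $K_i$ corresponds to a cq-channel $j \longmapsto \rho_j^{(i)}$ with support projections $P_j^{(i)}$ on $B_i$ for $i=1,2$, then $K_1\ox K_2$ corresponds to the cq-channel $(j_1,j_2) \longmapsto \rho_{j_1}^{(1)}\ox \rho_{j_2}^{(2)}$ on inputs $A_1\ox A_2$ with output $B_1\ox B_2$, whose support projections are exactly $P_{j_1}^{(1)}\ox P_{j_2}^{(2)}$. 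Hence $K_1\ox K_2$ is again a cq-graph, so Proposition~\ref{c-q-capacity-aram} applies to it just as it does to $K_1$ and $K_2$ individually.

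Second, I would invoke Proposition~\ref{c-q-capacity-aram} to rewrite each side of the claimed identity in terms of the semidefinite packing number:
\[
  C_{0,\NS}(K_1\ox K_2) = \log \Aram(K_1\ox K_2),
  \qquad
  C_{0,\NS}(K_i) = \log \Aram(K_i)\ \ (i=1,2).
\]
Thus the corollary will follow immediately from the multiplicative identity $\Aram(K_1\ox K_2) = \Aram(K_1)\Aram(K_2)$, which was recorded right after the definition~(\ref{eq:Aram}) and follows in turn from the fact that both $\Aram$ (in its primal form) is sub-multiplicative under tensoring of feasible solutions and (in its dual form) super-multiplicative, so that strong duality pins down equality.

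There is essentially no obstacle here, since all the heavy lifting has already been done: Proposition~\ref{c-q-capacity-aram} supplied the single-letter characterization for cq-graphs, and the multiplicativity of $\Aram$ is a straightforward SDP fact. The only point one might wish to spell out carefully is the closure of cq-graphs under tensor product, but as noted above this is immediate from the product structure of the support projections $P^{(1)}\ox P^{(2)}$ of the tensor product cq-channel.
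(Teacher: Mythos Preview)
Your proposal is correct and matches the paper's intent: the corollary is stated with a bare \qed\ precisely because it follows at once from Proposition~\ref{c-q-capacity-aram} (applied to $K_1$, $K_2$, and $K_1\ox K_2$, all cq-graphs) together with the multiplicativity of $\Aram$ recorded after Eq.~(\ref{eq:Aram}). One small slip in your parenthetical: the primal (a maximization) is \emph{super}-multiplicative under tensoring of feasible points, and the dual (a minimization) is \emph{sub}-multiplicative, not the other way around; strong duality then yields equality, as you say.
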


\subsection{Asymptotic assisted zero-error simulation cost of cq-graphs}
\label{subsec:asymptotic-cost-cq}
Here we study the asymptotic zero-error simulation cost of a non-commutative 
bipartite cq-graph $K = \sum_i \ket{i} \ox K_i$, where the subspace
$K_i$ is the support of the projection $P_i$. Thus, $P = \sum_i \proj{i}^A \ox P_i^B$ 
and the SDP (\ref{eq:Sigma}) easily simplifies to
\begin{equation}\begin{split}
  \label{eq:Sigma:cq-channel}
  \S(K) = \min \tr T \ \text{ s.t. }\ T\geq F_i,\ 0 \leq F_i\leq P_i, \  \tr F_i=1.
\end{split}\end{equation}
Similarly, the dual SDP (\ref{eq:Sigma-dual}) simplifies to
\begin{equation}
  \label{eq:dual:cq-channel}
  \S(K) = \max \sum_i s_i \ \text{ s.t. }\ s_iP_i\leq P_iE_iP_i, 0\leq E_i,\ \sum_i E_i=\1.
\end{equation}

\begin{proposition}
\label{prop:Sigma-cq-multi}
For non-commutative bipartite cq-graphs $K$, $\S(K)$ is multiplicative under tensor products, i.e.
\[
  \S(K_1\ox K_2) = \S(K_1)\S(K_2), 
\]
where $K_1$ and $K_2$ are arbitrary non-commutative bipartite cq-graphs.
\end{proposition}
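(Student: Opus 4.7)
The plan is to establish $\S(K_1\ox K_2)=\S(K_1)\S(K_2)$ by proving the two inequalities separately: sub-multiplicativity via the primal SDP (\ref{eq:Sigma:cq-channel}) and super-multiplicativity via the dual SDP (\ref{eq:dual:cq-channel}), then gluing them by strong duality. Note that the cq-projection associated with the tensor product is $P_{ij}=P_i^{(1)}\ox P_j^{(2)}$, so both SDPs decompose cleanly along the tensor structure.

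For $\S(K_1\ox K_2)\leq \S(K_1)\S(K_2)$, I would take optimal primal solutions $(T^{(1)},\{F_i\})$ for $K_1$ and $(T^{(2)},\{G_j\})$ for $K_2$, and propose $T:=T^{(1)}\ox T^{(2)}$ together with $H_{ij}:=F_i\ox G_j$ for $K_1\ox K_2$. Positivity and the normalization $\tr H_{ij}=1$ are immediate. The two comparison constraints $H_{ij}\leq P_{ij}$ and $H_{ij}\leq T$ both follow from the elementary fact that $0\leq X\leq Y$ and $0\leq Z\leq W$ imply $X\ox Z\leq Y\ox W$, via the identity $Y\ox W-X\ox Z=(Y-X)\ox W+X\ox (W-Z)\geq 0$. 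Since $\tr T=\tr T^{(1)}\,\tr T^{(2)}=\S(K_1)\S(K_2)$, this yields the upper bound.

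For $\S(K_1\ox K_2)\geq \S(K_1)\S(K_2)$, I dualize: take optimal dual solutions $(\{s_i\},\{E_i\})$ and $(\{t_j\},\{F_j\})$, and set $s'_{ij}:=s_it_j$ with $E'_{ij}:=E_i\ox F_j$. Then $\sum_{ij}E'_{ij}=\bigl(\sum_i E_i\bigr)\ox\bigl(\sum_j F_j\bigr)=\1$, and the crucial operator inequality $s'_{ij}P_{ij}\leq P_{ij}E'_{ij}P_{ij}$ factorizes as the tensor product of $s_iP_i^{(1)}\leq P_i^{(1)}E_iP_i^{(1)}$ with $t_jP_j^{(2)}\leq P_j^{(2)}F_jP_j^{(2)}$, once more by the positive tensor-inequality identity. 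The objective value $\sum_{ij}s_it_j$ equals $\S(K_1)\S(K_2)$ by strong duality applied to $K_1$ and $K_2$, and weak duality for $K_1\ox K_2$ turns this dual-feasible value into a lower bound on $\S(K_1\ox K_2)$.

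No substantive obstacle is expected: both halves reduce to the observation that positivity-preserving operator inequalities tensorize. The only piece of housekeeping is to verify strong duality of the cq-SDPs, which follows from Slater's condition -- a strictly feasible primal point is obtained by taking $T=M\,\1$ with $M$ large and each $F_i$ any state supported on $P_i$.
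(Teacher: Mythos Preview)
Your proposal is correct and follows essentially the same route as the paper: sub-multiplicativity from tensoring primal optima, super-multiplicativity from tensoring dual optima, and strong duality to close the loop. The only cosmetic difference is that the paper phrases the dual verification via $s_i=\lambda_{\min}(P_iE_iP_i)$ on the support of $P_i$ and then uses multiplicativity of minimal eigenvalues under tensor products, whereas you verify $s_it_jP_{ij}\leq P_{ij}(E_i\ox F_j)P_{ij}$ directly from the operator-inequality identity; these are the same observation in different clothing.
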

\begin{proof}
The sub-multiplicativity of $\S(K)$ is evident from (\ref{eq:Sigma:cq-channel}). 
We will show that the super-multiplicativity follows from the dual 
SDP (\ref{eq:dual:cq-channel}). Indeed,  let $K_1$ and $K_2$ correspond to 
$\{P_i\}$ and $\{Q_j\}$, respectively, and assume that $(s_i, E_i)$ and 
$(t_j, F_j)$ are optimal solutions to $\S(K_1)$ and $\S(K_2)$ in dual SDPs, 
respectively. Then we have 
\[
  s_i=\ll_{\min}(P_i E_i P_i),\ \text{with}\ \sum_i E_i=\1_1,\ E_i\ge 0,\ \text{and}\ \S(K_1)=\sum_i s_i.
\]
where $\ll_{\min}(\cdot)$ denotes the minimal eigenvalue of the linear 
operator $P_iE_iP_i$ in the support of $P_i$. Similarly, we have
\[
  t_j=\ll_{\min}(Q_j F_j Q_j),\ \text{with}\ \sum_j F_j=\1_2,\ F_j\ge 0,\ \text{and}\ \S(K_2)=\sum_j t_j.
\]
Clearly, we have 
\[
  s_it_j=\ll_{\min} \bigl( (P_i\ox Q_j)(E_i\ox F_j)(P_i\ox Q_j) \bigr),
          \ \text{and}\ \sum_{ij} E_i\ox F_j=\1_1\ox \1_2,\ \text{and}\ E_i\ox F_j\geq 0.
\]
So $(s_it_j, E_i\ox F_j)$ is a feasible solution to the dual SDP for $\S(K_1\ox K_2)$. 
Since the dual SDP takes maximization, we have
\[
  \S(K_1\ox K_2) \geq \sum_{ij} s_it_j
                 =    \Biggl(\sum_i s_i\Biggr) \! \Biggl(\sum_j t_j\Biggr)
                 =    \S(K_1)\S(K_2).
\]
\end{proof}

From the above result we can read off directly 
\begin{theorem}
  For any non-commutative bipartite cq-graph $K < \cL(A \rightarrow B)$,
  \[
    S_{0,\NS}(K)=\log \S(K).
  \]
  In fact,
  \[
    S_{0,\NS}(K_1\ox K_2)=S_{0,\NS}(K_1)+S_{0,\NS}(K_2),
  \]
  for any two non-commutative bipartite cq-graphs $K_1$ and $K_2$. 
  \qed
\end{theorem}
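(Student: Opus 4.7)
The plan is to derive both assertions as immediate corollaries of the multiplicativity statement in Proposition~\ref{prop:Sigma-cq-multi}, combined with the regularization formula
\[
  S_{0,\NS}(K) = \inf_{n\geq 1}\frac{1}{n}\log \S\left(K^{\ox n}\right)
\]
from Eq.~(\ref{eq:K-cap+sim}). First I would observe that the class of cq-graphs is closed under tensor products: if $K_1 = \sum_i \ket{i}\ox K_i^{(1)}$ and $K_2 = \sum_j \ket{j}\ox K_j^{(2)}$, then $K_1 \ox K_2 = \sum_{ij} \ket{ij}\ox (K_i^{(1)}\ox K_j^{(2)})$ is again a cq-graph, with projections $P_{ij} = P_i^{(1)}\ox P_j^{(2)}$. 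In particular every $K^{\ox n}$ is a cq-graph, so Proposition~\ref{prop:Sigma-cq-multi} applies to it.

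Iterating Proposition~\ref{prop:Sigma-cq-multi} then gives $\S\left(K^{\ox n}\right) = \S(K)^n$ for every $n\geq 1$. Substituting this into the regularization formula, every term $\frac{1}{n}\log \S\left(K^{\ox n}\right) = \log \S(K)$ is constant in $n$, whence
\[
  S_{0,\NS}(K) = \log \S(K).
\]
The ceilings appearing in the operational definition (one needs $\lceil \S(K^{\ox n})\rceil$ bits in block length $n$) wash out in the limit since $\frac{1}{n}\log \lceil \S(K)^n \rceil \to \log \S(K)$.

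For the additivity statement I would apply Proposition~\ref{prop:Sigma-cq-multi} to the pair $(K_1^{\ox n}, K_2^{\ox n})$, yielding
\[
  \S\bigl((K_1\ox K_2)^{\ox n}\bigr) = \S\bigl(K_1^{\ox n}\bigr)\,\S\bigl(K_2^{\ox n}\bigr) = \S(K_1)^n\,\S(K_2)^n,
\]
using the first part to evaluate each factor. Taking $\frac{1}{n}\log$ and the infimum over $n$ gives
\[
  S_{0,\NS}(K_1\ox K_2) = \log \S(K_1) + \log \S(K_2) = S_{0,\NS}(K_1) + S_{0,\NS}(K_2),
\]
which is what we wanted.

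There is essentially no substantive obstacle: all the hard work lives in Proposition~\ref{prop:Sigma-cq-multi} (whose non-trivial direction, super-multiplicativity, uses the dual SDP and strong duality). The only points requiring a brief check are the closure of cq-graphs under $\ox$ and the irrelevance of the integer-ceiling rounding in the asymptotic rate.
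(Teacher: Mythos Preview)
Your proposal is correct and matches the paper's approach exactly: the paper simply says ``From the above result we can read off directly'' (referring to Proposition~\ref{prop:Sigma-cq-multi}) and states the theorem with a \qed, so you have faithfully spelled out the details the authors left implicit.
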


\medskip
This theorem motivates us to call $\Sigma(K)$ the \emph{semidefinite covering
number}, at least for cq-graphs $K$, in analogy to a result from \cite{CLMW2010}
which states that the zero-error simulation rate of a bipartite
graph is given by its fractional packing number. Note however that while
fractional packing and fractional covering number are dual linear programmes
and yield the same value, the semidefinite versions $\Aram(K) \geq \Sigma(K)$ 
are in general distinct; already in the following Subsection~\ref{2-state-example}
we will see a simple example for strict inequality.

\medskip
For general non-commutative bipartite graph $K$, we do not know whether the 
one-shot simulation cost also gives the asymptotic simulation cost.  
However, this is true when $K={\rm span}\{E_i\}$ corresponds to an 
extremal channel $\cN(\rho)=\sum_i E_i\rho E_i^\dag$, which is well-known
to be equivalent to the set of linear operators $\{E_i^\dag E_j\}$ 
being linearly independent \cite{Choi-extremal}. Actually, 
in this case, there can only be a unique channel $\cN$ such that $K=K(\cN)$, 
and furthermore, a unique channel $\cN^{\otimes n}$ such that 
$K^{\ox n}=K(\cN^{\ox n})$. Hence 
\[
  \S(K^{\ox n})=\S(\cN^{\ox n})=\S(\cN)^{n}=\S(K)^n.
\]
Thus we have the following result:
\begin{theorem}\label{extremal-channel-cost}
  Let $K={\rm span}\{E_i\}$ be an extremal non-commutative bipartite graph in the sense 
  that $\{E_i^\dag E_j\}$ is linearly independent. Then
  \[
    S_{0,{\rm NS}}(K)=\log \S(K) = - H_{\min}(A|B)_J,
  \] 
  for the Choi-Jamio\l{}kowski state of the unique channel $\cN$ with $K(\cN)<K$.
  Furthermore, 
  $$S_{0,{\rm NS}}(K_1\ox K_2)=S_{0,{\rm NS}}(K_1)+S_{0,{\rm NS}}(K_2),$$
  if both $K_1$ and $K_2$ are extremal non-commutative bipartite graphs.
  \qed
\end{theorem}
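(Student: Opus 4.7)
The plan is to reduce everything to the observation, already stated in the paragraph preceding the theorem, that under the extremality hypothesis the channel $\cN$ with $K(\cN)=K$ is essentially unique. Once this uniqueness is established, and once we know it is preserved under tensor products, the minimization in the definition $\S(K) = \min\{\S(\cN) : K(\cN) < K\}$ becomes vacuous, and Theorem~\ref{review-simulation-1-shot} together with the additivity of the conditional min-entropy (Eq.~(\ref{eq:channel-sim})) yields both claims directly.

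First I would verify uniqueness: suppose $\cN'(\rho) = \sum_j F_j \rho F_j^\dag$ is another channel with $K(\cN') \subseteq K$. Writing $F_j = \sum_k a_{jk} E_k$, trace-preservation becomes $\sum_{k,\ell} R_{k\ell}\, E_k^\dag E_\ell = \1$ with $R = a^\dag a$. Since the original Kraus operators already satisfy $\sum_k E_k^\dag E_k = \1$, the choice $R = I$ is a solution, and linear independence of $\{E_k^\dag E_\ell\}$ forces this to be the only solution. Hence $a$ is an isometry, so $\cN' = \cN$ up to the usual unitary freedom in the Kraus representation. Next I would observe that extremality is preserved under tensor products: if $K_1 = \operatorname{span}\{E_i\}$ and $K_2 = \operatorname{span}\{F_j\}$ are extremal, then $K_1 \ox K_2 = \operatorname{span}\{E_i \ox F_j\}$ and
\[
  (E_i \ox F_j)^\dag (E_k \ox F_\ell) = (E_i^\dag E_k) \ox (F_j^\dag F_\ell),
\]
and tensor products of linearly independent sets remain linearly independent. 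In particular $K^{\ox n}$ is extremal for every $n$, so $\cN^{\ox n}$ is the unique channel with Kraus space contained in $K^{\ox n}$, giving $\S(K^{\ox n}) = \S(\cN^{\ox n})$.

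Finally, applying Theorem~\ref{review-simulation-1-shot} together with additivity of $H_{\min}$ under tensor products yields $\S(\cN^{\ox n}) = 2^{-n H_{\min}(A|B)_J} = \S(\cN)^n$, hence $\S(K^{\ox n}) = \S(K)^n$. Using the regularization formula in (\ref{eq:K-cap+sim}) the asymptotic cost is $S_{0,\text{NS}}(K) = \log \S(K) = -H_{\min}(A|B)_J$. The additivity statement for two extremal $K_1, K_2$ then follows from the same argument applied to the extremal $K_1 \ox K_2$. The only non-routine step is the uniqueness of $\cN$ under extremality, which is essentially a restatement of Choi's classical characterization of extremal quantum channels; everything else (tensor-product stability of the extremality condition and multiplicativity of $H_{\min}$) is straightforward.
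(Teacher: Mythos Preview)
Your proposal is correct and follows essentially the same approach as the paper: the paper's argument (given in the paragraph immediately preceding the theorem statement) is precisely that extremality forces a unique channel $\cN$ with $K(\cN)<K$, that this uniqueness persists for $K^{\ox n}$, and hence $\S(K^{\ox n})=\S(\cN^{\ox n})=\S(\cN)^n=\S(K)^n$ by additivity of $H_{\min}$. You have simply spelled out the uniqueness argument (via Choi's characterization) and the tensor-stability of the linear-independence condition in more detail than the paper, which leaves these as ``well-known'' facts.
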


The fact that the set of extremal non-commutative bipartite graphs has a one-to-one 
correspondence to the set of extremal quantum channels has greatly simplified the 
simulation problem. How to use this property to simplify the assisted-communication 
problem is still unclear.

\subsection{Example: Non-commutative bipartite cq-graphs with two output states}
\label{2-state-example}
Here we will examine our above findings of one-shot and asymptotic
capacities and simulation costs for the simplest possible
cq-channel, which has only two inputs and two pure output states $P_i = \proj{\psi_i}$,
w.l.o.g.
\begin{align*}
  \ket{\psi_0} &= \alpha\ket{0} + \beta\ket{1}, \\
  \ket{\psi_1} &= \alpha\ket{0} - \beta\ket{1},
\end{align*}
with $\alpha \geq \beta = \sqrt{1-\alpha^2}$. In fact, we shall assume
$\alpha > \beta > 0$ since the two equality cases are trivial
(noiseless classical channel and completely noisy channel, respectively).
Note $|\bra{\psi_0}\psi_1\rangle| = \alpha^2-\beta^2 = 2\alpha^2-1$;  
the non-commutative bipartite cq-graph 
$K={\rm span}\{\ket{\psi_0}\!\bra{0},\ket{\psi_1}\!\bra{1}\}$.
We can work out all the optimization problems introduced before:
\begin{align}
  \label{twostate-cq-1-shot}
  \U(K)          &=    1, \\
  \label{twostate-cq-2-shot}
  \U(K\ox K)     &\geq \frac{1}{\alpha^4+\beta^4},~\mbox{if}~|\braket{\psi_0}{\psi_1}|\leq \frac{1}{\sqrt{2}},                                 \\
  \label{twostate-cq-n-shot}
  \U(K^{\ox n})  &\geq \frac{1}{\alpha^{2n}+\beta^{2n}}
                  \geq \frac{1}{2\alpha^{2n}}\ \text{ for sufficiently large } n,    \\
  \label{twostate-cq-aram}
  \Aram(K)       &=    \frac{1}{\alpha^2} = \frac{2}{1+|\bra{\psi_0}\psi_1\rangle|}, \\
  \label{twostate-cq-CminE}
  C_{\min {\rm E}}(K)  &=    H(\alpha^2,\beta^2),                                    \\
  \label{twostate-cq-cost}
  S_{0,\NS}(K)   &=\S(K)= 1+\frac12 \|P_0-P_1\|_1 = 1+2\alpha\beta.
\end{align}

Eq. (\ref{twostate-cq-n-shot}) directly gives us $
C_{0,\NS}(K)=\log {1}/{\alpha^2}.$
 Since the signal ensemble is symmetric under the Pauli $Z$ unitary 
(which exchanges two output states), it is easy to evaluate $\Aram(K) = 1/\a^2$,
Eq. (\ref{twostate-cq-aram}). Also $C_{\min {\rm E}}(K)$ is easy to compute, yielding Eq. (\ref{twostate-cq-CminE}). For pure state cq-channels $\cN$, by dephasing the input of any channel with Kraus operators in $K = \operatorname{span}\{ \ket{\psi_i}\!\bra{i}: i=0,1 \}$,
we get a simulation of $\cN$ itself, hence
\[
  \S(K) = 1+\frac12\|\psi_0-\psi_1\|_1 = 1+2\alpha\beta,
\]
proving Eq. (\ref{twostate-cq-cost}). Noticing that $\cN$ is the unique extremal channel in $K$, we can also apply Theorem \ref{extremal-channel-cost} to obtain Eqs. (\ref{twostate-cq-CminE}) and (\ref{twostate-cq-cost}) directly.

So we have
\[\begin{split}
  C_{0,\NS}(K)  = \log \Aram(K)
               &= \log \frac{2}{1+|\bra{\psi_0}\psi_1\rangle|}                    \\
               &\leq H\left( \frac{1+|\bra{\psi_0}\psi_1\rangle|}{2}, 
                          \frac{1-|\bra{\psi_0}\psi_1\rangle|}{2} \right) = C_{\min {\rm E}}(K) \\
               &\leq \log \left(1+\frac12 \|\psi_0-\psi_1\|_1\right) = S_{0,\NS}(K),
\end{split}\]
and both inequalities become strict when $0<|\braket{\psi_0}{\psi_1}|<1$ (refer to Fig. 4).

\begin{figure}[ht]
  \includegraphics[width=11cm]{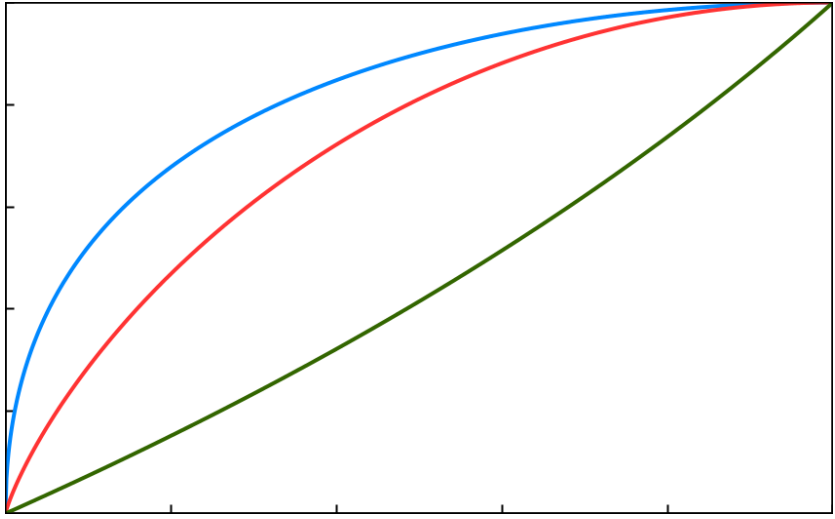}
  \caption{Comparison between $C_{0,\NS}$ (green), $C_{\min {\rm E}}$ (red) and $S_{0,\NS}$ (blue)
    for the cq-channel of two pure states, as a function of $0\leq \beta^2 \leq \frac12$.}
\end{figure}

The largest effort goes into calculating or bounding the numbers
$\U(K^{\ox n})$ in  Eq. (\ref{twostate-cq-n-shot}). In this case we can obtain much better lower bounds 
(only at most one bit less than optimal value by $n$ uses) compared to
Theorem~\ref{thm:cq-capacity}.

\medskip\noindent
\emph{One copy $n=1$}: Let $Q_i = \1-P_i$ be the projection orthogonal to $P_i$ 
(which is rank-one), so that $R_i = r_i Q_i$ with a number $r_i \leq s_i$. 
Because of the $Z$-symmetry,
\[ Z P_0 Z=P_1, Z Q_0 Z=Q_1. \]
We can symmetrize any solution and assume $s_0=s_1=s$ and $r_0=r_1=r$. 
Then the normalization condition reads
\begin{align*}
  \1 &= s(P_0+P_1) + r(Q_0+Q_1) \\ 
     &= s(\a^2\proj{0}+\b^2\proj{1}) + r(\b^2\proj{0}+\a^2\proj{1}),
\end{align*}
which implies $r=s=1/2$. Hence the maximum value of $\U(K)=1$.

\medskip\noindent

\emph{Many copies $n>1$}: In this case it is difficult to find the optimum, but it is enough that we show the achievability of $1/(\a^{2n}+\b^{2n})$ for sufficiently large $n$. Note that this already implies that the SDP for the zero-error number of messages is not multiplicative! Somehow, what's happening is that the normalization condition of $\sum_i (s_i P_i + R_i) = \1$ is a non-trivial constraint because $R_i$ has to be supported on the orthogonal complement of $P_i$; this hurts us in the case $n=1$. Now in the case of many
copies, $P_i$ is a tensor product of single-system projections, hence the orthogonal complement is asymptotically dominating. We have seen how these considerations help in understanding the general case.

We have $2^n$ states $P_{i^n} = P_{i_1} \ox P_{i_2} \ox \cdots \ox P_{i_n}$, indexed by $n$-bit strings $i^n$, which are related by qubit-wise $Z$-symmetry:

 $$P_{i^n} = Z^{i^n} P_{0^n} Z^{i^n},$$
which motivates that we find $R_{0^n}$ (orthogonal to $P_{0^n}$) and define

 $$ R_{i^n} := Z^{i^n} R_{0^n} Z^{i^n}.$$

Let $$P^{(n)}_w=\sum_{wt(i^n)=w}\ketbra{i^n}{i^n},~\ket{\Phi_w^{(n)}}={n\choose w}^{-1/2}\sum_{wt(i^n)=w}\ket{i^n}, ~0\leq w\leq n,$$
where $wt(i^n)$ is the weight (the number of  $1$'s) of the $n$-bit string $i^n$. We also denote
$$Q_w^{(n)}=P^{(n)}_w-\ketbra{\Phi_{w}^{(n)}}{\Phi_{w}^{(n)}},$$ 
which is always a projection for $1\leq w\leq n-1$ and vanishing for $w=0, n$.

For $\ket{\psi_0}=\alpha \ket{0}+\beta \ket{1}$, we have 
$$\ket{\psi_0}^{\ox n}=\sum_{w=0}^n x_w \ket{\Phi_w^{(n)}},~x_w={n\choose w}^{1/2}\alpha^{n-w}\beta^{w}.$$

Clearly
$$Q_w^{(n)}\ket{\psi_0}^{\ox n}=0~\mbox{and}~Q_w^{(n)}\ket{\psi_0^\perp}^{\ox n}=0,~1\leq w\leq n-1$$
for $\ket{\psi_0^\perp}=\beta\ket{0}-\alpha\ket{1}$.

Now we set all $s_{i^n}$ equal to $s$, and propose the following ansatz 
$$R_{0^n}=s(\ketbra{\psi_0^\perp}{\psi_0^\perp})^{\ox n}+\sum_{w=1}^{n-1}c_w Q_w^{(n)},$$
where $s$ and $c_w$ are non-negative eigenvalues to be determined. Note that $R_{0^n}$ is automatically supported on the orthogonal complement of $P_{0^n}$.

We need 
$$\sum_{i^n} (s P_{i^n} + R_{i^n})=\1_{2^n},$$
or equivalently,

$$\sum_{i^n} Z^{i^n}(s(\ketbra{\psi_0}{\psi_0})^{\ox n}+s(\ketbra{\psi_0^\perp}{\psi_0^\perp})^{\ox n}+\sum_{w=1}^{n-1}c_w Q_w^{(n)})Z^{i^n}=\1_{2^n},~~~(*)$$

Using the $Z^n$ symmetry, we can calculate
$$\sum_{i^n}Z^{i^n}(\ketbra{\psi_0}{\psi_0})^{\ox n}Z^{i^n}=2^n\sum_{w=0}^n \alpha^{2n-2w}\beta^{2w}P_w^{(n)}.$$ Similarly
$$\sum_{i^n}Z^{i^n}(\ketbra{\psi_0^\perp}{\psi_0^\perp})^{\ox n}Z^{i^n}=2^n\sum_{w=0}^n \alpha^{2w}\beta^{2n-2w}P_w^{(n)},$$
and
$$\sum_{i^n}Z^{i^n}Q_{w}^{(n)} Z^{i^n}=2^n(1-{n \choose w}^{-1})P_w^{(n)}.$$

Applying the property that $\sum_{w=0}^n P_w^{(n)}=\1_{2^n}$, we can see that (*) is equivalent to

$$2^ns(\alpha^{2n}+\beta^{2n})=1,$$ and 
$$2^n c_w(1-{n \choose w}^{-1})+2^ns(\alpha^{2w}\beta^{2n-2w}+\alpha^{2n-2w}\beta^{2w})=1,~1\leq w\leq n-1.$$
Then
$$s=2^{-n}(\alpha^{2n}+\beta^{2n})^{-1},~\mbox{and}~ c_w=(1-{n \choose w}^{-1})^{-1}s(\alpha^{2n}+\beta^{2n}-\alpha^{2w}\beta^{2n-2w}-\alpha^{2n-2w}\beta^{2w})), 1\leq w\leq n-1.$$
We need $R_{0^{n}}\leq s (\1_{2^n}-P_{0^n})$, which is equivalent to $c_w\leq s$, or 
$$(1-{n \choose w}^{-1})^{-1}\cdot (\alpha^{2n}+\beta^{2n}-\alpha^{2w}\beta^{2n-2w}-\alpha^{2n-2w}\beta^{2w}))\leq 1.$$
The first term of the left hand side (LHS) of the above inequality achieves the maximum when $w=1$, and the second term of the LHS reaches the maximum when $w=\frac{n}{2}$. So we only need 
$$(1-{n\choose 1}^{-1})^{-1}(\alpha^n-\beta^n)^2\leq 1,$$
or 
$$\alpha^n-\beta^n\leq \sqrt{\frac{n-1}{n}},$$
which is always satisfied when $n$ is large enough.
Note that when $n=2$, we have 
$$\alpha^2-\beta^2=|\braket{\psi_0}{\psi_1}|\leq \sqrt{\frac{1}{2}}.$$ The above constraint is tight in the sense 
$\U(K\ox K)\geq  {1}/(\alpha^4+\beta^4)$ if $|\braket{\psi_0}{\psi_1}|\leq \sqrt{\frac{1}{2}}$; otherwise $\U(K\ox K)=1$. More precisely, we have $\U(K\ox K)=\frac{4}{3}$ when $|\braket{\psi_0}{\psi_1}|\leq \sqrt{\frac{1}{2}}$, and $\U(K\ox K)>{1}/{(\alpha^4+\beta^4)}$ when $|\braket{\psi_0}{\psi_1}|<\sqrt{\frac{1}{2}}$.
\qed

\medskip
Let us consider now the more general case with two output states 
$\rho_0$ and $\rho_1$ with projections $P_0$ and $P_1$, respectively.  
Denote 
$$F_{\max}(\rho_0,\rho_1)= F_{\max}(P_0,P_1)
                         := \max \{|\bra{\psi_0}\psi_1\rangle|: \psi_i\in K_i, i=0,1 \}.$$
$F_{\max}(\rho_0,\rho_1)$ is known as the maximal fidelity between 
$\rho_0$ and $\rho_1$, but depends only on their supports $K_0$ and $K_1$.
A key property of the maximal fidelity is the following \cite{DFY2009}:

\begin{proposition}
  There exists a CPTP map $\cT$ such that $\cT(\rho_0)=\psi_0$ and 
  $\cT(\rho_1)=\psi_1$ if and only if 
  $F_{\max}(\rho_0,\rho_1)\leq |\bra{\psi_0}\psi_1\rangle|$.
  \qed
\end{proposition}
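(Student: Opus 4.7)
The plan is to dispatch necessity by the data-processing inequality for fidelity, and sufficiency by an explicit Stinespring construction done block-by-block along the Jordan (CS) decomposition of the pair of projections $(P_0,P_1)$.

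For necessity, suppose such a $\cT$ exists. For any unit vector $\phi_i\in K_i$, fullness of $\rho_i$ on $K_i$ gives $\proj{\phi_i}\le \lambda\rho_i$ for some $\lambda>0$, whence $\cT(\proj{\phi_i})\le \lambda\proj{\psi_i}$; combined with trace preservation and the purity of $\psi_i$ this forces $\cT(\proj{\phi_i})=\proj{\psi_i}$. Monotonicity of the fidelity under CPTP maps then yields $|\braket{\phi_0}{\phi_1}|=F(\proj{\phi_0},\proj{\phi_1})\le F(\proj{\psi_0},\proj{\psi_1})=|\braket{\psi_0}{\psi_1}|$, and taking the supremum over $\phi_0\in K_0,\phi_1\in K_1$ delivers the required bound.

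For sufficiency, write $c:=|\braket{\psi_0}{\psi_1}|$ and, after absorbing a global phase, take $\braket{\psi_0}{\psi_1}=c\ge 0$. Since only the supports $K_0,K_1$ matter, it suffices to build a CPTP map sending every pure state of $K_i$ to $\proj{\psi_i}$. I would invoke the Jordan (CS) decomposition of $(P_0,P_1)$, which orthogonally splits the ambient Hilbert space into four trivial sectors (on which $P_0,P_1$ are each $0$ or $\1$) and a direct sum of two-dimensional blocks $\cH_i=\operatorname{span}\{\ket{e_i},\ket{f_i}\}$ with $\braket{e_i}{f_i}=c_i\in(0,1)$, $P_0|_{\cH_i}=\proj{e_i}$, $P_1|_{\cH_i}=\proj{f_i}$. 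If $K_0\cap K_1\ne 0$ then $F_{\max}(\rho_0,\rho_1)=1\le c$ forces $\psi_0=\psi_1$ and the constant channel $\rho\mapsto\proj{\psi_0}$ suffices; otherwise $F_{\max}(\rho_0,\rho_1)=\max_i c_i\le c$.

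Now $\cT$ is assembled block by block. For each two-dimensional block, attach a private ancilla sector $\cH_{E_i}\cong\CC^2$ and define
\[ V_i\ket{e_i}=\ket{\psi_0}\ket{0_{E_i}},\qquad V_i\ket{f_i}=\ket{\psi_1}\Bigl(\tfrac{c_i}{c}\ket{0_{E_i}}+\sqrt{1-(c_i/c)^2}\,\ket{1_{E_i}}\Bigr); \]
this is an isometry precisely because $\braket{V_ie_i}{V_if_i}=c\cdot(c_i/c)=c_i=\braket{e_i}{f_i}$, which is consistent since $c_i/c\le 1$ by hypothesis (the degenerate $c=0$ case forces all $c_i=0$ and requires no ancilla). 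On the $P_0=\1,P_1=0$ sector send every basis vector to $\ket{\psi_0}$ tensored with a fresh orthogonal ancilla direction, and symmetrically on the $P_0=0,P_1=\1$ sector to $\ket{\psi_1}$; the $P_0=P_1=0$ sector can be embedded arbitrarily. Using mutually orthogonal ancilla sectors for all these pieces makes $V=\bigoplus_i V_i$ a global isometry and $\cT(\rho):=\tr_E V\rho V^\dagger$ the required CPTP map. The main obstacle is the coherent assembly of blockwise isometries into a single global one, but the CS decomposition is precisely the tool that simultaneously block-diagonalises $(P_0,P_1)$ so inter-block inner products vanish, and the orthogonal ancilla bookkeeping then preserves the global isometry condition automatically.
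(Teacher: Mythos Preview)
Your proof is correct. Note, however, that the paper does not actually prove this proposition: it is quoted from \cite{DFY2009} and marked with a \qed\ immediately after the statement. So there is no ``paper's own proof'' to compare against; you have supplied a self-contained argument where the paper simply cites the literature.

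For the record, your approach is sound in both directions. The necessity argument is clean: since each $\rho_i$ has full rank on its support $K_i$, any pure state $\proj{\phi_i}$ with $\phi_i\in K_i$ is dominated by a multiple of $\rho_i$, hence $\cT(\proj{\phi_i})$ is dominated by a multiple of the rank-one projector $\proj{\psi_i}$ and so equals it by trace preservation; monotonicity of fidelity then gives the bound. The sufficiency construction via the Jordan (canonical angles) decomposition of the pair $(P_0,P_1)$ is the natural tool here, and your blockwise Stinespring isometry---with the key identity $\braket{V_ie_i}{V_if_i}=c\cdot(c_i/c)=c_i$ certifying isometry on each two-dimensional block, and mutually orthogonal ancilla sectors guaranteeing the global isometry---correctly produces a channel sending every pure state in $K_i$ to $\proj{\psi_i}$. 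The edge cases ($K_0\cap K_1\neq 0$ forcing $c=1$, and $c=0$ forcing no generic blocks) are handled appropriately.
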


Applying this result, we can show that the case of two general output states 
$\rho_0$ and $\rho_1$ is simply equivalent to the case of two pure output 
states $\psi_0$ and $\psi_1$ such that 
$|\bra{\psi_0}\psi_1\rangle|=F_{\max}(\rho_0,\rho_1)$.  
Thus we have
\begin{align*}
  C_{0,\NS}(K)        &=\log\Aram(K)
                       = \log \frac{2}{1+F_{\max}},                    \\
  C_{\min {\rm E}}(K) &= H\left(\frac{1+F_{\max}}{2},\frac{1-F_{\max}}{2}\right), \\
  S_{0,\NS}(K)        &= \log \S(K) 
                       = 1+\sqrt{1-F_{\max}^2}.
\end{align*}

\section{An operational interpretation of the Lov\'{a}sz number}
\label{sec:theta}
As we have seen, different non-commutative bipartite graphs $K$, even 
cq-graphs, having the same confusability graph $G$, can
have different assisted zero-error capacity $C_{0,\NS}(K)$ and simulation
cost $S_{0,\NS}(K)$; c.f.~the last subsection \ref{2-state-example} in the
previous section.

A classical undirected graph is given by $G=(V,E)$, where $V=\{1,...,n\}$ 
is the set of vertices, and $E\subset V\times V$ is the set of edges. As 
shown in previous work \cite{DSW2010}, $G$ is naturally associated with 
a non-commutative graph, denoted $S_G$, via the following way:
\begin{equation}\label{sg-graph}
  S_G= {\rm span}\{\ketbra{i}{j}: i\sim j\},
\end{equation}
where $i\sim j$ means confusability: $i=j$ or
$\{i,j\} \in E$ is an edge of the graph \cite{Lovasz1979}.

Hence the questions we are facing are the maximum and minimum
$C_{0,\NS}(K)$ and $S_{0,\NS}(K)$ over all cq-graphs $K$ with
$K^\dag K<S_G$. While the maximum is clearly $\log|B|$ for
both quantities, the minima turn out to be much more interesting.
We restate here the main result we will go on to prove in this section.

\medskip\noindent
{\bf Theorem~\ref{thm:theta}}\ 
\textit{For any classical graph $G$, the Lov\'asz number $\vartheta(G)$
  is the minimum zero-error classical capacity assisted 
  by quantum no-signalling correlations of any cq-channels that have 
  $G$ as non-commutative graph, \emph{i.e.}
  \[
    \log\vartheta(G)=\min\bigl\{ C_{0,{\rm NS}}(K): K^\dag K<S_G \bigr\},
  \]
  where the minimization is over cq-graphs $K$.
  \\
  \indent In particular, equality holds for any cq-channel $i\rightarrow \proj{\psi_i}$ 
  such that $\{\ket{\psi_i}\}$ is an optimal orthogonal representation for $G$
  in the sense of Lov\'{a}sz' original definition \cite{Lovasz1979}.}

\medskip
The proof of this result is achieved by combining two facts about the 
semidefinite packing number for cq-channels: 1) $\Aram(K)$ gives 
the zero-error classical capacity assisted with no-signaling correlations 
for a cq-channel; 2) the Lov\'asz number $\vartheta(G)$ of a graph 
is given by the minimization of $\Aram(K)$ for all non-commutative bipartite 
graph $K$ that generate the same confusability graph $G$. The first 
fact has been proven in Theorem \ref{c-q-capacity-aram}, so
we focus on the second for the rest of the section.

Let $K=\operatorname{span}\{E_i:1\leq i\leq n\}$ be a Kraus operator space with 
$\tr(E_i^\dagger E_j)=\d_{ij}$, and let 
$\ket{\Phi}=\sum_{i=1}^d\ket{i}\ket{i}$ be the non-normalized maximally entangled 
state. Then $P$, the projection on the support of the Choi-Jamio\l{}kowski state,
can be written as
\begin{equation}
  \label{Kraus-Projection}
  P_{AB}=\sum_{i=1}^n (\1\ox E_i)\proj{\Phi}(\1\ox E_i)^\dagger.
\end{equation}

We can rewrite the semidefinite packing number $\Aram(K)$ using these Kraus 
operators $\{E_i\}$:
\begin{equation}
  \label{Primal}
  \Aram(K)=\max  \tr R \ \text{s.t. }\ \sum_{i=1}^n E_i R E_i^\dagger\leq \1_B,\ R\geq 0.
\end{equation}
Note that $\{E_i\}$ spans a valid Kraus operator space of a quantum channel. 
So $\sum_i E_i^\dagger E_i > 0$ (positive definite). The dual 
SDP is
\begin{equation}
  \label{Dual}
  \Aram(K)=\min \tr T \ \text{ s.t. }\ \sum_{i=1}^n E_i^\dag T E_i\geq \1_A,\ T\geq 0,
\end{equation}
and we can easily verify that both the primal and the dual are strictly feasible 
by choosing $R=0$ and $T=\lambda \1_B$ (here $\lambda>0$ is sufficiently large), 
respectively. Hence strong duality holds. 

We will start by deriving some minimax representations of $\Aram(K)$. 
Let us introduce
\begin{align*}
\widehat{\Aram}(K) := \min_{\rho} \ll_{\max}(\cN(\rho))
     &= \min_{\rho} \max_{\sigma}\tr (\cN(\rho)\sigma)         \\
     &= \max_{\sigma} \min_{\rho}\tr (\cN^\dagger(\sigma)\rho) \\
     &= \max_{\sigma}\lambda_{\min}\bigl(\cN^\dagger(\sigma)\bigr).
\end{align*}
where $\lambda_{\max}(T)$ and $\lambda_{\min}(T)$ represent the maximal and 
the minimal eigenvalues of a Hermitian operator $T$, respectively,
and $\cN(\rho)=\sum_i E_i\rho E_i^\dagger$, 
$\cN^\dagger(\sigma)=\sum_i E_i^\dagger \sigma E_i$ are CP maps
(but not necessarily trace or unit preserving);
$\rho$ and $\sigma$ range over all density operators, 
i.e.~$\rho,\sigma\geq 0$ and $\tr \rho = \tr \sigma = 1$.

In the second and the fourth equalities above, we have employed the following well-known
characterizations:
\[
  \lambda_{\max}(T)=\max_{\rho}\tr \rho T\ \text{and}\ 
  \lambda_{\min}(T)=\min_{\sigma}\tr \sigma T.
\]

In the third equality we have employed the obvious equality 
$\tr (\cN(\rho)\sigma)=\tr (\rho N^\dagger(\sigma))$, and von Neumann's 
minimax theorem \cite{Sion:minimax}, 
since $\tr \cN(\rho)\sigma$ is a linear function with respect to $\rho$ and $\sigma$, 
and $\rho$ and $\sigma$ range over convex compact sets.

\begin{lemma}
  \label{lemma:aram-dual}
  Under the above definitions,
  \[
    \Aram(K) = \frac{1}{\widehat{\Aram}(K)},
  \]
  for any non-commutative bipartite graph $K$.
\end{lemma}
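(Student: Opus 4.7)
My plan is to derive the identity by rewriting each SDP in (\ref{Primal}) and (\ref{Dual}) in a ``projective'' form that separates the scalar $\tr R$ (resp.\ $\tr T$) from the direction $R/\tr R$ (resp.\ $T/\tr T$), and then to recognize the resulting optimizations as the extremal eigenvalue problems that define $\widehat{\Aram}(K)$.

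Concretely, for the primal (\ref{Primal}) I would write any feasible $R\neq 0$ as $R = t\rho$ with $t=\tr R>0$ and $\rho$ a density operator. The constraint $\sum_i E_i R E_i^\dagger \leq \1_B$ is then $t\,\cN(\rho)\leq \1_B$, which is equivalent to $t\leq 1/\lambda_{\max}(\cN(\rho))$. Maximizing $t$ for fixed $\rho$ and then over $\rho$ gives
\[
  \Aram(K) \;=\; \max_{\rho}\frac{1}{\lambda_{\max}(\cN(\rho))} \;=\; \frac{1}{\min_{\rho}\lambda_{\max}(\cN(\rho))} \;=\; \frac{1}{\widehat{\Aram}(K)}.
\]
The same substitution applied to the dual (\ref{Dual}), writing $T=s\sigma$ with $s=\tr T>0$ and $\sigma$ a density operator, turns the constraint $\sum_i E_i^\dagger T E_i\geq \1_A$ into $s\,\cN^\dagger(\sigma)\geq \1_A$, i.e.\ $s\geq 1/\lambda_{\min}(\cN^\dagger(\sigma))$. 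Minimizing $s$ over $\sigma$ gives the reciprocal of the $\max_\sigma \lambda_{\min}(\cN^\dagger(\sigma))$ expression for $\widehat{\Aram}(K)$, which agrees with the primal by the minimax identity already established in the lead-up to the lemma. Strong duality of (\ref{Primal})--(\ref{Dual}), noted just before the lemma, guarantees the two evaluations coincide.

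The only subtlety is justifying that one may assume $t>0$ (resp.\ $s>0$) in the parametrization, i.e.\ that the $R=0$ (resp.\ $T=+\infty$) case does not spoil things, and that $\lambda_{\max}(\cN(\rho))$ is strictly positive for every state $\rho$. For the first point, $R=0$ gives objective value $0$, whereas any strictly positive $R$ (available by rescaling) gives a strictly positive value, so the supremum is unchanged. For the second, positivity of $\lambda_{\max}(\cN(\rho))$ follows from the assumption, recalled right after (\ref{Primal}), that $\sum_i E_i^\dagger E_i>0$: then $\tr \cN(\rho) = \tr\bigl(\rho \sum_i E_i^\dagger E_i\bigr)>0$ for every density operator $\rho$, so $\cN(\rho)\neq 0$, hence $\lambda_{\max}(\cN(\rho))>0$. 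I expect no real obstacle beyond checking these positivity conditions; the bulk of the work (the minimax rewriting of $\widehat{\Aram}(K)$) is already in place in the paragraph preceding the lemma.
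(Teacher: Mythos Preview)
Your proposal is correct and takes essentially the same approach as the paper: both arguments normalize the primal variable $R$ to a density operator $\rho=R/\tr R$ and reduce the SDP constraint $\cN(R)\leq\1_B$ to the eigenvalue condition $\tr R\leq 1/\lambda_{\max}(\cN(\rho))$. The only stylistic difference is that the paper splits the reasoning into two separate inequalities (starting from an optimal $R_0$ for one direction and an optimal $\rho_0$ for the other), whereas you carry out the parametrization and optimization in one shot; your added care about the $R=0$ case and the strict positivity of $\lambda_{\max}(\cN(\rho))$ is a nice touch that the paper leaves implicit.
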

\begin{proof}
Suppose that $\Aram(K)=\tr R_0$ for some $R_0\geq 0$. 
Let us construct a density operator $\rho_0=\frac{R_0}{\tr R_0}$.  
By the assumption $\sum_i E_i R_0 E_i^\dagger\leq \1_B$, we have
\[
  \sum_i E_i \rho_0E_i^\dagger\leq \frac{1}{\tr R_0} \1_B,
\]
or equivalently
\[
  \cN(\rho_0)\leq \frac{1}{\Aram(K)} \1_B.
\]
By the definition of $\widehat{\Aram}(K)$, we have 
\[
  \widehat{\Aram}(K) \leq \ll_{\max}(\cN(\rho_0))
                     \leq \frac{1}{\Aram(K)}.
\]
Conversely, suppose that $\widehat{\Aram}(K)=\ll_{\max}(\cN(\rho_0))$ for 
some density operator $\rho_0$. Then we have $\cN(\rho_0)\leq \widehat{\Aram}(K)\1_B$, 
and thus
$$\cN\bigl( \rho_0/\widehat{\Aram}(K) \bigr) \leq \1_B.$$
That means $\rho_0/\widehat{\Aram}(K)$ is a feasible solution of the SDP defining
the semidefinite packing number. 
By the definition of $\Aram(K)$, we know that 
$$\Aram(K)\geq \tr \rho_0/\widehat{\Aram}(K) = \frac{1}{\widehat{\Aram}(K)},$$
concluding the proof.
\end{proof}

\medskip
Focussing on the special class of cq-channels, which are in some sense the 
direct quantum generalizations of classical channels, we are now ready for

\medskip
\begin{beweis}{of Theorem~\ref{thm:theta}}
Let $K$ correspond to a cq-channel $i\mapsto  \rho_i$, and  let $P_i$ be the projection on 
the support of quantum states $\rho_i$.  It will be more convenient to study $\widehat{\Aram}(K)$ instead of $\Aram(K)$.  Actually, applying the above minimax representation to this special case, we have
$$\widehat{\Aram}(K) = \min_{\{t_i\}} \max_{\sigma} \tr \sigma\left(\sum_i t_i P_i\right)
                     = \max_{\sigma} \min_{\{t_i\}} \tr \sigma\left(\sum_i t_i P_i\right)
                     = \max_{\sigma} \min_{i} \tr \sigma P_i,$$
where $\{t_i\}$ ranges over probability distributions, and $\sigma$ 
ranges over density operators. The right-most expression  
motivates us to introduce some notations. 

By definition, $\{P_i\}$ is an orthogonal representation 
(OR) of the confusability graph $G$ induced by the cq-channel. The value of an OR $\{P_i\}$ 
is defined as follows: 
$$\eta(\{P_i\})=\max_{\sigma}\min_i \tr P_i\sigma.$$
We introduce the following function of a graph $G$,
$$\eta(G)=\max_{\{P_i\}}\eta(\{P_i\}),$$
where the maximization ranges over all possible ORs of $G$. Clearly, 
if we require that an OR consists of only rank-one projections and 
$\sigma$ takes only rank-one projection, then $\eta(G)=\vartheta(G)^{-1}$, 
the reciprocal of the Lov\'asz number of $G$ \cite{Lovasz1979}. 
It has been shown in \cite{CSW2010} that even allowing $P_i$ to be general 
projection but $\sigma$ to be rank-one projection, there is no 
difference between $\eta(G)$ and $\vartheta(G)^{-1}$. However, 
if $\sigma$ is a mixed state, we can only have 
$\eta(G)\geq \vartheta(G)^{-1}$. Interestingly, we can show that 
equality does hold. In fact, it is evident that if $\{P_i\}$ is an 
OR for a graph $G$, then $\{P_i\ox \1_B\}$ remains an OR for the 
same graph, where $B$ is any auxiliary system.
Now the value of $\{P_i\}$ with respect to general mixed states, is 
the same as the value of $\{P_i\ox \1_B\}$ with respect to pure states. That is,
$$\max_{\sigma} \min_{i} \tr P_i\sigma = \max_{\Psi} \min_i  \tr ((P_i\ox \1_B) \ket{\Psi_{AB}}\bra{\Psi_{AB}}),$$ 
where $\sigma=\tr_B  \ket{\Psi_{AB}}\bra{\Psi_{AB}}$.  The above equality follows directly 
from the fact that
$$\tr P_i\sigma = \tr ((P_i\ox \1_B) \ket{\Psi_{AB}}\bra{\Psi_{AB}}),$$ 
where $\ket{\Psi_{AB}}$ is any purification of $\sigma$.

Summarizing, we have
\[
  \min_K \Aram(K) = \min_K \frac{1}{\widehat{\Aram}(K)} = \frac{1}{\eta(G)} = \vartheta(G),
\]
and we are done.
\end{beweis}

\medskip
As a final comment on the above proof, 
note that for a fixed OR $\{P_i\}$, we cannot always choose the optimal handle $\sigma$ as a pure state-- the restriction to rank-one projectors
and pure state handle only emerges as we optimize over both elements.

\medskip
We would like to interpret Theorem~\ref{thm:theta} to say
intuitively that the zero-error capacity of a graph $G$ assisted by
no-signalling correlations is $\vartheta(G)$. The problematic 
part of such a manner of speaking is that there are many 
cq-graphs $K$ with the same confusability graph $G$, but the
no-signalling assisted capacity may vary with these $K$. 

However, note that for any (finite) family of cq-graphs $K^{(\alpha)}$
such that ${K^{(\alpha)}}^\dagger K^{(\alpha)} = S_G$, 
$K^{(\alpha)} = \sum_i \ket{i}\ox K_i^{(\alpha)}$,
we can construct a cq-graph $K$ that ``dominates'' all of the $K^{(\alpha)}$
in the sense that any no-signalling assisted code for $K$ can be
used directly for $K^{(\alpha)}$ because actually $K^{(\alpha)} < K$:
\[
  K = \sum_i \ket{i}\ox K_i,\quad K_i = \bigoplus_\alpha K_i^{(\alpha)}.
\]
By going from direct sums to direct integrals, we can thus construct a
universal cq-graph
\[
  \widetilde{K}(G) = \sum_i \ket{i}\ox \widetilde{K}_i
\]
which dominates \emph{all} $K^{(\alpha)}$ with confusability graph $G$, 
in fact contains them up to isomorphism. Any no-signalling assisted
code for this object will deal in particular with every eligible channel
simultaneously. The only caveat is that the $\widetilde{K}_i$ are subspaces 
in an a priori infinite dimensional Hilbert space, and all of our
proofs (in particular that of Theorem~\ref{thm:cq-capacity})
require finite dimension as a technical condition. We conjecture
however that the capacity result of Theorem~\ref{thm:cq-capacity} still 
holds in that setting.

\bigskip
{\bf Minimum simulation cost of a confusability graph.}
Just as we were looking at the smallest zero-error capacity over all
cq-channels with a given confusability graph $G$ in this section, we 
can study the minimum simulation cost over all cq-graphs
$K$ with $K^\dagger K <S_G$. 
To be precise, we are interested in 
\[
  \Sigma(G) := \inf\{ \Sigma(K) : K \text{ cq-graph with } K^\dagger K< S_G \},
\]
and the asymptotic simulation cost (regularization)
\[
  S_{0,\NS}(G) := \lim_{n\rightarrow\infty} \frac1n \log \Sigma(G^n),
\]
where $G^n = G\times\cdots\times G$ denotes the $n$-fold strong graph product.
The latter limit exists and equals the infimum because evidently
$\log \Sigma(G\times H) \leq \log \Sigma(G) + \log \Sigma(H)$.

For a cq-graph $K$, we have 
\begin{equation}
  \label{eq:K-chain}
  \log \Aram(K) \leq C_{\min E}(K) \leq \log \Sigma(K),
\end{equation}
in fact for every eligible cq-channel $\cN$ with non-commutative 
bipartite graph $K$,
\[
  \log \Aram(K) \leq C(\cN) = C_E(\cN) \leq \log \Sigma(\cN).
\]
The reason is that $\log \Aram(K) = C_{0,\NS}(K)$ is the zero-error
capacity assisted by no-signalling correlations (Theorem \ref{thm:cq-capacity});
while $C_E(\cN)=C(\cN)$ is the Holevo (small-error) capacity,
which is the same as the entanglement-assisted capacity for
cq-channels and which is not increased by any other available
no-signalling resources because of the Quantum Reverse Shannon 
Theorem \cite{BDHS+2009,QRST-simple}; 
and $\log \Sigma(\cN) = -H_{\min}(X|B)_J$ is the perfect 
simulation cost of the channel when assisted by no-signalling resources,
with the Choi-Jamio\l{}kowski matrix $J = \sum_i \proj{i} \otimes P_i$,
Eq.~(\ref{eq:G0-equals-sigma}).

In Eq.~(\ref{eq:K-chain}), $C_{\rm \min E}(K) = C_{\min}(K)$ because
$K$ is a cq-graph, so we only need to consider cq-channels 
$\cN : i \mapsto \rho_i$ 
in the minimization, for which
\[
  C_{\rm E}(\cN) = C(\cN) = \max_{P_X} I(X:B) 
                    = \max_{P_X} S\left(\sum_i P_X(i)\rho_i \right) - \sum_i P_X(i) S(\rho_i). 
\]
Letting now
\[
  C_{\min}(G) := \inf \{ C_{\min}(K) : K \text{ cq-graph with } K^\dagger K<S_G \},
\]
we then have the following additivity result.

\begin{lemma}
  \label{lemma:C-min-add:graph}
  For any two graphs $G$ and $H$,
  \[
    C_{\min}(G\times H) = C_{\min}(G) + C_{\min}(H).
  \]
\end{lemma}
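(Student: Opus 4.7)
The plan is to prove the two inequalities separately, using tensor-product cq-graphs for the upper bound and a carefully chosen correlated input distribution together with the classical-quantum chain rule for the lower bound.

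For the direction $C_{\min}(G\times H)\le C_{\min}(G)+C_{\min}(H)$: I take cq-graphs $K_G$ and $K_H$ with $K_G^\dag K_G<S_G$, $K_H^\dag K_H<S_H$ that are $\varepsilon$-near the respective infima. Inspection of spans gives $S_G\ox S_H=S_{G\times H}$, so $K_G\ox K_H$ is admissible for $G\times H$. Choosing near-optimal cq-channels $\cN_G,\cN_H$, their product $\cN_G\ox \cN_H$ satisfies $K(\cN_G\ox \cN_H)<K_G\ox K_H$ and, being cq, outputs product states on product inputs. Subadditivity of von Neumann entropy on the average output state together with additivity on product states yields $\chi(P_{GH},\cN_G\ox \cN_H)\le \chi(P_G,\cN_G)+\chi(P_H,\cN_H)$, while a product input distribution achieves equality. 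Thus $C(\cN_G\ox \cN_H)=C(\cN_G)+C(\cN_H)$, and sending $\varepsilon\to 0$ in the infima gives the $\le$ direction.

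For the direction $C_{\min}(G\times H)\ge C_{\min}(G)+C_{\min}(H)$: let $\cN\colon (g,h)\mapsto \rho_{gh}$ be any eligible cq-channel for $G\times H$, so $\rho_{gh}\perp \rho_{g'h'}$ whenever $g\not\sim_G g'$ or $h\not\sim_H h'$. For each fixed $g$ the slice $\cN_g\colon h\mapsto \rho_{gh}$ is a cq-channel with confusability in $H$ (if $h\not\sim_H h'$ then $(g,h)\not\sim(g,h')$, hence $\rho_{gh}\perp \rho_{gh'}$), so $C(\cN_g)\ge C_{\min}(H)$; let $Q_g$ attain this maximum. Now define the marginalised channel $\tilde \cN\colon g\mapsto \bar\rho_g:=\sum_h Q_g(h)\rho_{gh}$. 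Whenever $g\not\sim_G g'$, \emph{all} pairs $\rho_{gh},\rho_{g'h'}$ are orthogonal, so the supports of $\bar\rho_g$ and $\bar\rho_{g'}$ are orthogonal; hence $\tilde\cN$ has confusability in $G$, so $C(\tilde\cN)\ge C_{\min}(G)$, attained at some $P_G^*$.

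The key step is to plug the joint distribution $P_{GH}(g,h):=P_G^*(g)\,Q_g(h)$ into the chain rule for the Holevo information (valid because $GH$ is classical):
\[
  \chi(P_{GH},\cN)\;=\;I(G:B)+I(H:B|G)\;=\;\chi(P_G^*,\tilde\cN)+\sum_g P_G^*(g)\,\chi(Q_g,\cN_g).
\]
By construction the first term on the right equals $C(\tilde\cN)\ge C_{\min}(G)$, and each summand in the second equals $C(\cN_g)\ge C_{\min}(H)$; therefore $C(\cN)\ge \chi(P_{GH},\cN)\ge C_{\min}(G)+C_{\min}(H)$. Taking infimum over admissible $\cN$ yields the lower bound. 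The main subtlety is ensuring both estimates hold at the \emph{same} joint distribution; the device is to take $P(h|g):=Q_g$, which makes the $I(H:B|G)$ term an average of quantities each $\ge C_{\min}(H)$ for \emph{every} marginal $P_G$, and this frees $P_G$ to be optimised against $\tilde\cN$ for the $I(G:B)$ term.
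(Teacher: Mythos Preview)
Your proof is correct. The subadditivity direction matches the paper's (both just tensor the two data), but your superadditivity argument follows a genuinely different route.

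The paper invokes Sion's minimax theorem to write $C_{\min}(G)=\max_{P_X}\inf_{\cN}I(X:B)$ and then, for a \emph{fixed} (effectively product) input distribution, bounds $\inf_\cN I(XY:B)$ from below by splitting the chain rule into two infima over channels eligible for $G$ and $H$; only afterwards does it maximise over $P_{XY}$ and re-apply minimax. Your argument stays on the $\inf_\cN\max_P$ side throughout: for each eligible $\cN$ you manufacture a single distribution $P_{GH}(g,h)=P_G^*(g)\,Q_g(h)$ that certifies $C(\cN)\ge C_{\min}(G)+C_{\min}(H)$. The clever point---which you identify---is that choosing the conditional $Q_g$ to be Holevo-optimal for the slice channel $\cN_g$ makes the conditional term already $\ge C_{\min}(H)$ termwise, so the marginal $P_G$ is then free to be optimised against the induced channel $\tilde\cN$. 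What you gain is an elementary proof that avoids the minimax theorem entirely; what the paper's route buys is a uniform lower bound on $\inf_\cN I(XY:B)$ at every distribution (at the price of invoking Sion), which is conceptually tidy but not actually needed for the additivity statement.
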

\begin{proof}
The subadditivity, $C_{\min}(G\times H) \leq C_{\min}(G) + C_{\min}(H)$,
is evident from the definition, because if $K^\dagger K<S_G$
and $L^\dagger L < S_H$, 
then $(K\otimes L)^\dagger (K\otimes L)< S_{G \times H}$.

It remains to show the opposite inequality ``$\geq$''.
This relies crucially on the minimax identity
\begin{equation}
  \label{eq:C_min:minimax}
  C_{\min}(G) = \inf_{\cN} \max_{P_X} I(X:B)
              = \max_{P_X} \inf_{\cN} I(X:B),
\end{equation}
where the maximum is over probability distributions $P_X$
and the infimum is over cq-channels $\cN$ with confusability
graph contained in $G$. This is a special case of Sion's 
minimax theorem~\cite{Sion:minimax}, since the Holevo mutual
information is well-known to be concave in $P_X$ and convex
in $\cN$, while the domain of $P_X$ is the convex compact
simplex of finite probability distributions and the domain
of $\cN$ is an infinite-dimensional convex set.

Now, for a cq-channel $\cN:ij \mapsto \rho_{ij}$
with confusability graph contained in $G\times H$, and an
arbitrary distribution $P_{XY}$ of the two input variables $X$
and $Y$, we have
\begin{equation}
  \label{eq:chain-rule}
  I(XY:B) = I(X:B) + I(Y:B|X)
          = I(X:B) + \sum_i P_X(i) I(Y:B|X=i).
\end{equation}
Here, the first term refers to the cq-channel
\[
  \overline{\cN} : i \mapsto \overline{\rho}_i = \sum_j P_{Y|X}(j|i) \rho_{ij},
\]
while the $i$-th summand in the second term sum refers to the cq-channel
\[
  \cN_i : j \mapsto \rho_{ij}.
\]
Note that $\overline{\cN}$ is eligible for $G$ (since $i \not\sim i'$ implies
$ij \not\sim i'j'$ for all $j,j'$, hence $\rho_{ij} \perp \rho_{i'j'}$), 
while similarly for all $i$, $\cN_i$ is eligible for $H$.
Thus, in Eq.~(\ref{eq:chain-rule}), we can take the infimum over
eligible cq-channels, to obtain
\[
  \inf_{\cN} I(XY:B) \geq \inf_{\cM_1} I(X:C_1) + \inf_{\cM_2} I(Y:C_2)
                     =    \inf_{\cM_1,\cM_2} I(X'Y':C_1 C_2),
\]
where the minimizations are over cq-channels $\cN$ eligible for $G\times H$,
$\cM_1$ eligible for $G$ and $\cM_2$ eligible for $H$, whereas
$X'$ and $Y'$ are independent copies of $X$ and $Y$, i.e.~they are
jointly distributed according to $P_X \times P_Y$.
Now, taking the maximum over distributions $P_{XY}$ completes
the proof because of the minimax formula (\ref{eq:C_min:minimax}).
\end{proof}

As a corollary, we get the following chain of inequalities:
\begin{equation}
  \label{eq:min-over-graphs}
  \log \vartheta(G) \leq C_{\min}(G) \leq S_{0,\NS}(G) \leq \log \Sigma(G) \leq \log \alpha^*(G).
\end{equation}
Note that it may be true that $S_{0,\NS}(G) = \log \Sigma(G)$, 
but to prove this we would need to show the additivity relation
$\log \Sigma(G\times H) = \log \Sigma(G) + \log \Sigma(H)$, which 
remains unknown.
Another observation is that if in the respective minimizations,
the channels are restricted to classical channels, then the results 
of \cite{CLMW2011} show that 
\[
  \min_K \log \Aram(K) = \min_{\cN} C(\cN) = \min_{\cN} \Sigma(\cN) = \log \alpha^*(G).
\]

\medskip
We now demonstrate by example that the rightmost inequality in (\ref{eq:min-over-graphs})
can be strict, when quantum channels are considered. 
Namely, for even $n$ let $G = \overline{H_n}$ the complement of the Hadamard 
graph $H_n$, whose vertices are the vectors $\{\pm 1\}^n$ and two vectors $v,w$ 
are adjacent in $H_n$ if and only if they are orthogonal in the Euclidean sense, 
$v^\top w = 0$.
In other words, the cq-channel $\cN:v \mapsto \frac{1}{n}\proj{v}$
has confusability graph $G$. Since the output dimension is $n$, we see
from this that $\Sigma(G) \leq \Sigma(\cN) \leq n$, which happens to
coincide with the Lov\'{a}sz number, $\vartheta(G) = n$, hence
$\log \vartheta(G) = C_{\min}(G) = S_{0,\NS}(G) = \log \Sigma(G) = \log n$.
On the other hand, the clique number of $G$ is known to be upper bounded
$\omega(G) \leq 1.99^n$ \cite{FranklRoedl}, hence
$\alpha^*(G) \geq \frac{|G|}{\omega(G)} \geq 1.005^n$, meaning
$\log \alpha^*(G) \geq \Omega(n)$.

We think that also the leftmost inequality in (\ref{eq:min-over-graphs})
can be strict. But although the pentagon $G=C_5$ seems to be a candidate,
for which we conjecture (based on some ad hoc calculations) that
$C_{\min}(G) = S_{0,\NS}(G) = \log \Sigma(G) = \log \alpha^*(G) = \log\frac52$,
whereas $\log \vartheta(G) = \frac12 \log 5$, a rigorous proof of this 
has so far eluded us. 

Whether the other two inequalities can be strict remains an open question.

\section{Feasibility of zero-error communication via general non-commutative bipartite graph assisted by quantum no-signalling correlations}
\label{sec:feasibility}
Given a non-commutative bipartite graph $K$, it is important to know when $K$ 
is able to send classical information exactly in the presence of quantum 
no-signalling correlations. It turns out that these channels can be precisely
characterized; we will start with cq-graphs.

\begin{theorem}
\label{feasibility-cq}
Let $K$ be a non-commutative bipartite cq-graph specified by a set of 
projections $\{P_i:1\leq i\leq n\}$ with supports $K_i = \operatorname{supp} P_i$. 
Then the following are equivalent:
\renewcommand{\theenumi}{\roman{enumi}}
\begin{enumerate}
  \item $C_{0,{\rm NS}}(K)>0$;
  \item $\Aram(K)>1$;
  \item $\bigcap_{i} K_i = \{0\}$.
\end{enumerate}
\renewcommand{\theenumi}{\arabic{enumi}}
\end{theorem}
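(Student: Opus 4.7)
The overall plan is to reduce the three-way equivalence to the clean one-shot SDP formula (\ref{eq:Aram:cq-channel}) for cq-graphs. The equivalence (i)$\Leftrightarrow$(ii) is immediate from Theorem~\ref{thm:cq-capacity}, which identifies $C_{0,{\rm NS}}(K) = \log \Aram(K)$; positivity of the left-hand side is the same as $\Aram(K)>1$. Hence the substantive work is the equivalence (ii)$\Leftrightarrow$(iii). I would carry this out entirely at the primal level of
\[
  \Aram(K) = \max \sum_i s_i \ \text{ s.t. }\ s_i \geq 0,\ \sum_i s_i P_i \leq \1,
\]
making use of the trivial lower bound $\Aram(K) \geq 1$, obtained by $(s_1,s_2,\ldots)=(1,0,\ldots)$.

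For (ii)$\Rightarrow$(iii) I would argue by contraposition: if there is a unit vector $\ket{\psi}\in\bigcap_i K_i$, then $\bra{\psi}P_i\ket{\psi}=1$ for every $i$, so any feasible $(s_i)$ satisfies
\[
  \sum_i s_i = \sum_i s_i \bra{\psi}P_i\ket{\psi}
             = \bra{\psi}\!\left(\sum_i s_i P_i\right)\!\ket{\psi} \leq 1,
\]
forcing $\Aram(K)\leq 1$; combined with the lower bound, $\Aram(K)=1$, contradicting (ii).

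For (iii)$\Rightarrow$(ii) I would set $\lambda := \lambda_{\max}\!\left(\sum_i P_i\right)$. Since $\bra{\psi}P_i\ket{\psi}\leq 1$ for every unit vector $\ket{\psi}$, summing gives $\lambda\leq n$. The key step is to show $\lambda=n$ is impossible under (iii): if a top eigenvector $\ket{\psi}$ achieves $\bra{\psi}\!\left(\sum_i P_i\right)\!\ket{\psi}=n$, then each term must be $1$, and the projection identity $\|P_i\ket{\psi}\|^2 = \bra{\psi}P_i\ket{\psi} = 1 = \|\ket{\psi}\|^2$ forces $P_i\ket{\psi}=\ket{\psi}$ for all $i$, i.e.~$\ket{\psi}\in\bigcap_i K_i$, contradicting (iii). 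Hence $\lambda<n$, and choosing $s_i = 1/\lambda$ uniformly gives a feasible primal point of value $n/\lambda > 1$, yielding (ii).

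The main conceptual obstacle is the equality case in the direction (iii)$\Rightarrow$(ii): once it is observed that a sum of projections attains its maximal possible norm only on a vector fixed by each summand, the rest of the proof is routine. No duality, regularization, or multiplicativity argument is needed, because (ii) and (iii) are both one-shot statements, and (i) inherits positivity from $\Aram(K)>1$ directly through Theorem~\ref{thm:cq-capacity}.
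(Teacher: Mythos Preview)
Your proof is correct. The equivalence (i)$\Leftrightarrow$(ii) and the direction (ii)$\Rightarrow$(iii) match the paper's argument essentially verbatim (the paper sandwiches $(\sum_i s_i)\proj{\psi}\leq\sum_i s_i P_i\leq\1$, you test against $\ket{\psi}$; these are the same).

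The direction (iii)$\Rightarrow$(ii) is where you diverge. The paper works on the \emph{dual} side: it assumes $\Aram(K)=1$, takes an optimal dual witness $T_0\geq 0$ with $\tr T_0=1$ and $\tr P_i T_0\geq 1$ for all $i$, and observes that a density operator can have $\tr P_i T_0=1$ for every $i$ only if its support lies in $\bigcap_i K_i$. You stay on the \emph{primal} side: you bound $\lambda_{\max}\bigl(\sum_i P_i\bigr)<n$ under (iii) via the equality-case argument for projections, and then exhibit the explicit feasible point $s_i=1/\lambda$. Your route is slightly more elementary in that it never invokes strong duality, and it yields the quantitative byproduct $\Aram(K)\geq n\big/\bigl\|\sum_i P_i\bigr\|_\infty$; the paper's dual argument, on the other hand, identifies the obstruction directly as an optimal ``handle'' state supported on the common subspace, which connects more transparently to the Lov\'asz-type picture used elsewhere in the paper.
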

\begin{proof}
The equivalence of i) and ii) follows directly from Proposition \ref{c-q-capacity-aram}. 

The equivalence of ii) and iii) is only a simple application of the SDP of $\Aram(K)$ 
for bipartite cq-graphs. First, we show that iii) implies ii). By contradiction, 
assume that the intersection of supports of $P_i$ is empty while  $\Aram(K)=1$. 
Recall that the dual SDP of $\Aram(K)$ is given by
$$\Aram(K)=\min \tr T\ \text{ s.t. }\ \tr P_i T\geq 1,\ T\geq 0, 1\leq i\leq n.$$
Then $\Aram(K)=1$ implies that we can find $T_0\geq 0$ such that $\tr T_0=1$ 
and $\tr P_i T_0\geq 1$ for any $i$. Clearly $T_0$ is a density operator, and we 
should have $\tr P_i T_0=1$ for any $i$. The only possibility is that $T_0$ is 
in the intersection of the supports of $P_i$, which is a contradiction.
Now we turn to show that ii) implies iii). Again by contradiction, assume that 
$\Aram(K)>1$ while the intersection of supports is nonempty. Then we can find a 
pure state $\ket{\psi}$ from the intersection such that $P_i\geq \proj{\psi}$. So
$$\1\geq \sum_i s_i P_i\geq \left(\sum_i s_i\right) \proj{\psi},$$
thus any feasible solution should have $\sum_i s_i\leq 1$, which indeed implies 
that $\Aram(K)=1$.
\end{proof}

\begin{theorem}
\label{feasibility-K}
Let $K$ be a non-commutative bipartite graph with Choi-Jamio\l{}kowski 
projection $P_{AB}$, and let $Q_{AB}=\1_{AB}-P_{AB}$ be the orthogonal 
complement of $P_{AB}$. Then the following are equivalent:
\renewcommand{\theenumi}{\roman{enumi}}
\begin{enumerate}
  \item $C_{0,{\rm NS}}(K)>0$;
  \item $\Aram(K)>1$;
  \item $\tr_A P_{AB}<d_A\1_B$;
  \item $\tr_A Q_{AB}$ is positive definite.
\end{enumerate}
\renewcommand{\theenumi}{\arabic{enumi}}
As a matter of fact, we have 
\[
  C_{0,{\rm NS}}(K) \geq \log \frac{d_A}{\|\tr_A P_{AB}\|_\infty}
  \quad\text{and}\quad
  \Aram(K)\geq  \frac{d_A}{\|\tr_A P_{AB}\|_\infty}.
\]
\end{theorem}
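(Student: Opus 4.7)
My plan is to close the cycle (iii) $\Rightarrow$ (ii) $\Rightarrow$ (i) $\Rightarrow$ (iii), with (iii) $\Leftrightarrow$ (iv) being immediate from $\tr_A P_{AB} + \tr_A Q_{AB} = d_A \1_B$, and to pick up the two quantitative bounds along the way. For the $\Aram$-bound, I will insert the ansatz $S_A := c\1_A$ with $c := 1/\|\tr_A P_{AB}\|_\infty$ into the primal form of~(\ref{eq:Aram}): the constraint becomes $c\,\tr_A P_{AB} \leq \1_B$, which holds by the choice of $c$, while the objective evaluates to $\tr S_A = c d_A = d_A/\|\tr_A P_{AB}\|_\infty$. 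This shows $\Aram(K) \geq d_A/\|\tr_A P_{AB}\|_\infty$, and (iii) $\Rightarrow$ (ii) follows whenever the strict inequality in (iii) holds.

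For the converse directions (ii) $\Rightarrow$ (iii) and (i) $\Rightarrow$ (iii) I argue the contrapositive. If (iii) fails, there is a unit vector $\ket{b^*}$ with $\bra{b^*}\tr_A P_{AB}\ket{b^*} = d_A$; since $\tr_A P_{AB} \leq d_A \1_B$ with $P_{AB}$ a projection, this forces the subspace inclusion $\1_A \ox \proj{b^*} \leq P_{AB}$. For any feasible $(S_A, E_{AB})$ of the SDP~(\ref{eq:Upsilon}), the support constraint $P_{AB}(S_A\ox\1_B - E_{AB}) = 0$ then also yields $(\1_A\ox\proj{b^*})(S_A\ox\1_B - E_{AB}) = 0$, i.e.~$S_A \ox \proj{b^*} = (\1_A\ox\proj{b^*})E_{AB}$. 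Partial-tracing over $A$ and using $\tr_A E_{AB} = \1_B$ gives $(\tr S_A)\proj{b^*} = \proj{b^*}$, hence $\tr S_A = 1$ and $\U(K) \leq 1$; the same calculation applied to~(\ref{eq:Aram}) gives $\Aram(K) \leq 1$. Finally, the inclusion $\1_A\ox\proj{b^*} \leq P_{AB}$ tensorizes to $\1_{A^n} \ox (\proj{b^*})^{\ox n} \leq P_{AB}^{\ox n}$, so the same argument shows $\U(K^{\ox n}) \leq 1$ for every $n$, and therefore $C_{0,{\rm NS}}(K) = 0$.

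The remaining and most delicate step is (iii) $\Rightarrow$ (i) together with the quantitative asymptotic bound $C_{0,{\rm NS}}(K) \geq \log(d_A/\|\tr_A P_{AB}\|_\infty)$. Since $\Aram$ is multiplicative under tensor products, the one-shot bound just established gives $\Aram(K^{\ox n}) \geq (d_A/\|\tr_A P_{AB}\|_\infty)^n$, so it remains to upgrade this to a lower bound on $\U(K^{\ox n})$ with only polynomial loss. I will do so by adapting the twirling argument of Proposition~\ref{key-lemma} to the diagonal $S_n$-action on $(A\ox B)^{\ox n}$, under which $P_{AB}^{\ox n}$ is invariant. The product ansatz $S = c^n \1_{A^n}$ is already $S_n$-symmetric, so the remaining task is to construct an $S_n$-symmetric corrector $G$ supported in the complement of $P_{AB}^{\ox n}$, with $G \leq c^n(\1 - P_{AB}^{\ox n})$ and $\tr_{A^n} G = \1_{B^n} - c^n(\tr_A P_{AB})^{\ox n}$. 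The hard part will be the existence of this corrector: the naive single-shot version fails in general, as witnessed by the two-pure-state cq example of Section~\ref{2-state-example} where $\U(K)$ can be strictly smaller than $\Aram(K)$. However, the number and multiplicities of $S_n$-irreducible components appearing in $(AB)^{\ox n}$ grow only polynomially in $n$ by standard Schur-Weyl bounds, so the Schur-lemma-style averaging employed in the proof of Proposition~\ref{key-lemma} should yield $G$ up to a $\text{poly}(n)$ factor, provided that the strict positivity of $\tr_A Q_{AB}$ afforded by (iv) is used to control the obstructions. This would deliver $\U(K^{\ox n}) \geq (d_A/\|\tr_A P_{AB}\|_\infty)^n / \text{poly}(n)$ and hence the claimed bound on $C_{0,{\rm NS}}(K)$, closing the cycle of equivalences.
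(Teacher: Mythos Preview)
Your treatment of (iii)$\Leftrightarrow$(iv), of the $\Aram$-bound and (iii)$\Rightarrow$(ii), and of the contrapositives (ii)$\Rightarrow$(iii) and (i)$\Rightarrow$(iii) is correct and essentially coincides with the paper. In particular, your observation that $\1_A\ox\proj{b^*}\leq P_{AB}$ tensorizes and forces $\tr S_{A^n}=1$ in $\U(K^{\ox n})$ is equivalent to the paper's argument via $F_{A^nB^n}=S\ox\1-E$ supported on $\1-P^{\ox n}$ and annihilated by $\1\ox\proj{x}^{\ox n}$.

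The gap is in (iii)$\Rightarrow$(i). You propose to ``adapt the twirling argument of Proposition~\ref{key-lemma}'' to the diagonal $S_n$-action on $(A\ox B)^{\ox n}$, but that proposition is built around a \emph{transitive} group action permuting a family $\{P_i\}$ on the output space $B$; its mechanism is that the average $\frac{1}{|G|}\sum_g (U^g)^\dag(sP_0+R_0)U^g$ is forced into the (polynomially small) commutant, and a perturbative linear system in the dual set $\{\widehat D_{\lambda\mu}\}$ then produces the corrector. In your setting there is no family being permuted---only a single $S_n$-invariant $P^{\ox n}$---and the constraint to be met is a \emph{partial-trace} condition $\tr_{A^n}G=\1_{B^n}-c^n(\tr_A P)^{\ox n}$, not a commutant condition. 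Nothing in the proof of Proposition~\ref{key-lemma} tells you how to hit a prescribed $S_n$-invariant operator on $B^n$ by the partial trace of a bounded positive operator supported on $\1-P^{\ox n}$; the roles of the $\Delta$-estimate and of the dual basis have no evident analogue here. Your invocation of ``strict positivity of $\tr_A Q_{AB}$'' as the tool to ``control the obstructions'' is precisely the step that needs an argument and does not get one.

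The paper avoids this entirely by a one-line reduction to the cq case already handled. Via super-dense coding---share a maximally entangled state on an auxiliary $A$, have Alice apply a generalized Pauli $U_m$ ($m=1,\ldots,d_A^2$) and send her half through the channel---one obtains a bona fide cq-channel on the enlarged output $A\ox B$ with support projections $(U_m\ox\1_B)P_{AB}(U_m\ox\1_B)^\dag$. The Weyl--Heisenberg identity $\sum_m U_m X U_m^\dag=d_A(\tr X)\1_A$ gives $\sum_m(U_m\ox\1)P(U_m\ox\1)^\dag=d_A\,\1_A\ox(\tr_A P)$, so this cq-channel has semidefinite packing number exactly $d_A/\|\tr_A P_{AB}\|_\infty$. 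Theorem~\ref{thm:cq-capacity} then yields its no-signalling-assisted capacity, and since entanglement-assisted pre-processing is a special case of no-signalling assistance, $C_{0,{\rm NS}}(K)\geq\log\bigl(d_A/\|\tr_A P_{AB}\|_\infty\bigr)$. This simultaneously closes the cycle and delivers the quantitative bound, with no new Schur--Weyl work.
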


\begin{proof} 
The meaning of i) and ii) are very clear, while iii) and iv) need some explanation. 

Essentially, iv) means we can find a  CP map from $B$ to $A$ with 
Choi-Jamio\l{}kowski matrix $V_{AB}$  supporting on some subset of 
$Q_{AB}$. Note that in the one-shot SDP formulation of $\U(K)$, we 
need $V_{AB}$ to be a CPTP map. However, the trace-preserving condition 
is not necessary for asymptotic case but only $\tr_A V_{AB}$ is 
positive definite, which is the most nontrivial part of this theorem. 

iii) is directly equivalent to iv) as we have $P_{AB}+Q_{AB}=\1_{AB}$, 
hence 
$$\tr_A P_{AB}+\tr_A Q_{AB}=d_A\1_B.$$ 

In the following we only focus on i), ii), and iii).
The equivalence of ii) and iii) is straightforward, simply noticing that $\1_A/\|P_B\|_\infty$ 
is a feasible solution to the primal SDP for $\Aram(K)$, where $P_B=\tr_A P_{AB}$.

The equivalence of i) and iii) is much more difficult and non-trivial. In the following we want to explain a little bit more about this equivalence as there are some tricky points.

First, let's see how to use iii) to derive i). We can apply the standard super-dense coding protocol, and obtain a cq-channel with $d_A^2$ outputs $\{(U_m\ox \1_B) J_{AB} (U_m\ox \1_B)^\dag\}$, and the projections are given by $\{(U_m\ox \1_B)P_{AB} (U_m\ox \1_B)^\dag\}$, where $U_m$ are generalized Pauli matrices acting on $A$. So we can compute the semidefinite packing number as  
\begin{equation}\label{superdense-bound}
\frac{d_A^2}{\sum_{m=1}^{d_A^2}(U_m\ox \1_B)P_{AB} (U_m\ox \1_B)^\dag }=\frac{d_A}{\|P_{B}\|_\infty}.
\end{equation}
This is also the zero-error no-signalling assisted classical capacity of this cq-channel. 
Noticing that when $P_B<d_A\1_B$ strictly holds,  the right-hand side of the above equation 
is strictly larger than $1$.

The fact that i) implies iii) can be proven by contradiction together with the one-shot SDP formulation for $\U(K)$. Assume i) holds but $Q_B$ does not have full rank. Then we can find a non-zero vector $\ket{x}_B$ such that $Q_B\ket{x}=0$. Or equivalently, $Q_{AB}(\1_A \ox \proj{x}_B)=0$.

i) means for some $n>1$ we have $\U(K^{\ox n})>1$. By the one-shot SDP formulation of $\U(K^{\ox n})$, we can find positive semidefinite operators $S_{A^{n}}$ and $E_{A^{n}B^{n}}$, such that 
$$\tr S_{A^{n}}>1, S_{A^{n}}\ox \1_{B^{n}}\geq E_{A^{n}B^{n}}\geq 0, \tr_{A^{n}} E_{A^{n}B^{n}}=\1_{B^{n}},\ \text{and}\ \tr P_{AB}^{\ox n} (S_{A^n}\ox \1_{B^n}-E_{A^nB^n})=0.$$  
So we can find $F_{A^{n}B^{n}}=S_{A^{n}}\ox \1_{B^{n}}-E_{A^{n}B^{n}}$ with $\tr_{A^{n}} F_{A^{n}B^{n}}=(\tr S-1)\1_{B^{n}}$, with full rank. 
On the other hand, we also have $F_{A^{n}B^{n}}$ supported on $\1_{A^{n}B^{n}}-P_{AB}^{\ox n}$, 
and the later is the summation of product terms such as $P_{AB}\ox Q_{AB}\ox\cdots \ox P_{AB}$, 
containing at least  one factor $Q_{AB}$ each.  So we have $$F_{A^{n}B^{n}} (\1_{A^{n}}\ox \proj{x}^{\ox n}_{B^{n}})=0,$$
which immediately implies that $F_{B^n}=\tr_{A^n}F_{A^{n}B^{n}}$ is vanishing on the product vector $\ket{x}^{\ox n}_{B^{n}}$, contradicting the fact that $F_B$ has full rank.
\end{proof}

\medskip
The simple bound $\frac{d_A}{\|\tr_A P_{AB}\|_\infty}$ 
is very interesting, and in some important cases 
it is tight, such as  the cq-channels with symmetric outputs, and the class of Pauli 
channels. It would be interesting to know whether this kind of ``entanglement-assisted coding'' 
could provide a possible way to resolve our puzzle between $C_{0,\NS}(K)$ and $\Aram(K)$, 
eventually.

\section{Conclusion and open problems}
\label{sec:conclusion}
We have shown that there is a meaningful theory of zero-error
communication via quantum channels when assisted by quantum
no-signalling correlations.

In the terminology of non-commutative graph theory, both the one-shot 
zero-error classical capacity and simulation cost for non-commutative 
bipartite graphs assisted by quantum no-signalling correlations have 
been formulated into feasible SDPs. The asymptotic problems for non-commutative 
bipartite cq-graphs have also been successfully solved, where the capacity
turns out to involve a nontrivial regularization of super-multiplicative SDPs,
which nevertheless leads to another SDP, the semidefinite packing number.
We found analogously that the zero-error simulation cost of a cq-graph
is given by a semidefinite covering number, which in contrast to the 
classical case is in general larger than the packing number.

The zero-error classical capacity of a classical graph assisted by quantum no-signalling 
correlations is given precisely by the celebrated Lov\'asz number. 
For the most general non-commutative bipartite graphs, we are able to provide a 
necessary and sufficient condition for when these graphs have positive 
zero-error classical capacity assisted with quantum no-signalling correlations

We know rather little about the asymptotic capacity and simulation cost 
for non-commutative bipartite graphs that are not classical-quantum, however.
A very interesting candidate is the non-commutative 
bipartite graph $K(r)=\operatorname{span}\{E_0, E_1\}$ 
associated with the amplitude damping channel $\cN=\sum_{i=0}^1 E_i\cdot E_i^\dag$ ,
with $E_0=\ketbra{0}{0}+\sqrt{1-r}\ketbra{1}{1}$, $E_1=\sqrt{r}\ketbra{0}{1}$
and $0\leq r\leq 1$. 
The extremes $r=0$ and $r=1$ correspond to the noiseless qubit channel 
and a constant channel, respectively, and are trivial;
we will hence assume $0<r<1$.  This channel is quite interesting because 
it gives a separation between the semidefinite packing number and the modified version, 
and other quantities. It is also interesting because there is only one unique channel 
$\cN_r$ which has $K(r)$ as the Kraus operator space. Actually it is a non-unital 
extreme point of the convex set of CPTP maps. This channel is also able to 
communicate classical information without error when assisted with no-signalling 
correlations. By applying the above super-dense coding bound, we have
$$C_{0,\NS}(K_r)\geq \log \frac{4-2r}{3-r}.$$
By some routine calculation, we can show that the semidefinite packing number 
is $\Aram(K_r)=2-r$, the revised version $\widetilde{\Aram}(K_r)=(2-r)^2$, and
$$C_{\min {\rm E}}(K_r)=\max_{0\leq p\leq 1} H_2(p)+H_2(rp)-H_2((1-r)p),$$
where $H_2(x)=-x\log x-(1-x)\log (1-x)$ is the binary entropy.
Clearly, $C_{\min {\rm E}}(K_{0.5})=1$, while $\widetilde{\Aram}(K_{0.5})=2.25>2$. 
This means that the revised semidefinite packing number cannot be equal to the 
zero-error classical capacity assisted by quantum no-signalling correlations. 
Whether the original form has the same problem remains unknown.

There are many other interesting open problems, of which we highlight a few here.
First, it would be very 
interesting to explore the mathematical structures of quantum no-signalling 
correlations in greater detail, and to characterize quantitatively how much 
non-locality is contained in a quantum no-signalling correlation. 

Second, it is of great importance to solve the general 
asymptotic capacity and cheapest simulation 
problems in the zero-error setting when assisted with quantum no-signalling 
correlations. In particular, are there examples of strict sub-multiplicativity
of $\Sigma(K)$?

The third problem of great interest is to explore the 
relationship between the quantum Lov\'asz $\vartheta$ function introduced in 
Ref. \cite{DSW2010} for a non-commutative graph $S$, and the zero-error 
classical capacity of this non-commutative graph assisted with quantum 
no-signalling correlations, \emph{i.e.} of Kraus operator spaces $K$ with
$K^\dagger K < S$. Likewise, what is the minimum simulation cost 
$\Sigma(K)$ or $S_{0,\text{NS}}(K)$ over all such $K$? This latter is open
already for a classical graph $G$ and cq-graphs $K$ with confusability
graph $G$, see the end of section~\ref{sec:theta}. In any case, it
seems that $\Sigma(G)$ and $2^{C_{\min}(G)}$ are two new interesting, 
multiplicative graph parameters, different from both $\vartheta(G)$
and $\alpha^*(G)$.

Fourth: Are no-signalling correlations really necessary to achieve 
the asymptotic simulation cost $S_{0,\NS}(K)$, or is perhaps
entanglement enough? The motivation for this question comes from \cite{CLMW2011}
where it was shown that for classical channels, optimal asymptotic
simulation is possible using only shared randomness, no other non-local
resources are needed. A nice test case for this question is provided by
cq-graphs, for which it is easy to see that $S_{0,\NS}(K) \leq \log|B|$.
On the other hand, the entanglement-assisted simulation of cq-channels,
also known as \emph{remote state preparation} \cite{Lo,rsp} is far harder
to understand. Indeed, for generic channels the best known protocol
requires $2\log |B|$ bits of communication, which is the communication
cost of teleportation \cite{tele,Lo}.

Fifth, which no-signalling resources are actually required to achieve
the asymptotic zero-error capacity $C_{0,\NS}(K)$? This seems an
innocuous question -- after all, for each channel the SDP $\Upsilon(K)$
tells us precisely which no-signalling correlation achieves the maximum.
But looking at the case of classical channels and bipartite graphs \cite{CLMW2010},
in the light of a classical result of Elias \cite{Elias:list}, shows that
there only a very specific resource is required, which is universal for
all channels. Namely, according to \cite[Prop.~4]{Elias:list}, a list size-$L$ 
list code for a bipartite graph $\Gamma$ can achieve a rate 
$R=\left(1-\frac1L\right)\log\alpha^*(\Gamma) - O\left(\frac1L\right)$, which
is arbitrarily close to $\log\alpha^*(\Gamma)$ for sufficiently large $L$.
Using such a code, Alice can send a message $i$ out of $M=\lfloor 2^{nR} \rfloor$
over $n$ uses of the channel, and Bob will end up with a list, i.e.~a subset
$I \subset 2^{[M]}$ of $|I|=L$ possible messages such that $i\in I$.
To resolve the remaining ambiguity, Alice and Bob now require a no-signalling
resource $S_{{M\choose L}} = S(\alpha\beta|iI)$ that we call \emph{subset correlation},
defined for $i\in [M]$, $I\in {[M]\choose L}$, $t,u\in[L]$, with
\[
  S(tu|iI) = \begin{cases}
               \frac{1}{L} & \text{ if } i = i_s\in I=\{i_1 < \ldots < i_L\}\ \&{}\ u-t   = s \mod L, \\
               \frac{1}{L} & \text{ if } i = i_s\in I=\{i_1 < \ldots < i_L\}\ \&{}\ u-t\neq s \mod L, \\
               \frac{1}{L^2} & \text{ if } i\not\in I.
             \end{cases}
\]
Alice will input $i$ into the box, Bob the set $I$, and then Alice will send
her output $t$ to Bob; since $i\in I$, Bob knows that he can recover the
index $s$ of $i=i_s \in  I=\{i_1 < \ldots < i_L\}$ as $u-t \mod L$.
Do these or other universal quantum no-signalling correlations allow
to achieve the $C_{0,\NS}(K)$ (or at least $\log\Aram(K)$ for cq-graphs)?

Finally, in \cite{LaiDuan-theta}, the observation was made 
that the optimal orthogonal representation of a graph $G$ in the sense of
Lov\'{a}sz (see Theorem~\ref{thm:theta}) seems to satisfy $\Upsilon(K)=\vartheta(G)$,
which was confirmed by direct calculation for several graph families. If this
held in general, it would mean that $\vartheta(G)$ can in a certain sense
be achieved by a single channel use, rather than requiring a many-copy limit.

\acknowledgments
It is a pleasure to thank many people for their interests and feedback on the 
present work, including Salman Beigi, Mick Bremner, Eric Chitambar, 
Min-Hsiu Hsieh, Aram Harrow, Masahito Hayashi, Richard Jozsa, Ching-Yi Lai, Debbie Leung, 
Will Matthews, Jonathan Oppenheim, Xin Wang, and Stephanie Wehner. Further thanks are due
to Giulio Chiribella, Marco Piani, and John Watrous for delightful discussions
on quantum no-signalling correlations. 

RD was supported in part by the Australian Research Council (ARC) under 
Grant DP120103776 (with AW) and by the National Natural Science Foundation 
of China under grants no.~61179030. He was also supported in part by an ARC 
Future Fellowship under Grant FT120100449.
AW was supported by the European Commission (STREPs ``QCS''
and ``RAQUEL''), the European Research Council (Advanced Grant ``IRQUAT'') and 
the Philip Leverhulme Trust.
Furthermore, by the Spanish MINECO, projects FIS2008-01236 and
FIS2013-40627-P, with the support of FEDER funds, as well as
the Generalitat de Catalunya CIRIT, project no.~2014 SGR 966.

\end{document}